\newcommand{\trans}[2][]{%
    \mathrel{\raisebox{-1pt}[10pt][0pt]{%
      $\overset{#2}{\underset{^{\raisebox{-2pt}[0pt][0pt]{$^{^{#1}}$}}}{\raisebox{0pt}[3pt][0pt]{%
        $\relbar\mspace{-8mu}\xrightarrow{\quad~}$%
       }}}$%
    }}}
\newcommand{\cmax}{c_{\mathsf{max}}}
\newcommand{\hmax}{h_{\mathsf{max}}}
\newcommand{\emax}{e_{\mathsf{max}}}
\newcommand{\bound}{\boldsymbol{B}}
\newcommand{\PR}[1]{{\normalfont\bfseries P#1}}
\let\oldvdash\vdash
\let\olddashv\dashv
\renewcommand{\vdash}{\mathop{\oldvdash}}
\renewcommand{\dashv}{\mathop{\olddashv}}
\newcommand{\Left}{\ensuremath{\mathsf{left}}}
\newcommand{\Right}{\ensuremath{\mathsf{right}}}
\newcommand{\tr}[1]{\ensuremath{\mathsf{tr}}(#1)}
\newcommand{\an}[1]{\ensuremath{\mathsf{an}}(#1)}
\newcommand{\out}[1]{\ensuremath{\mathsf{out}}(#1)}
\newcommand{\outb}[1]{\ensuremath{\mathsf{out}}\big(#1\big)}
\newcommand{\block}[1]{\ensuremath{\mathsf{block}}(#1)}
\renewcommand{\dom}{\ensuremath{\mathsf{dom}}}
\newcommand{\pump}{\ensuremath{\mathsf{pump}}}
\newcommand{\LL}{{\ensuremath{\mathsf{LL}}}}
\renewcommand{\RR}{{\ensuremath{\mathsf{RR}}}}
\newcommand{\LR}{{\ensuremath{\mathsf{LR}}}}
\newcommand{\RL}{{\ensuremath{\mathsf{RL}}}}
\let\oldlhd\lhd 
\let\oldrhd\rhd
\renewcommand{\lhd}{{\oldlhd}}
\renewcommand{\rhd}{{\oldrhd}}
\newcommand{\crossrel}{\ensuremath{\mathrel{\text{\sf S}}}}
\renewcommand{\simeq}{\crossrel^*}
\newcommand{\lesstime}{\mathrel{\lhd}}
\newcommand{\leqtime}{\mathrel{\unlhd}}
\newcommand{\geqtime}{\mathrel{\unrhd}}
\newcommand{\lesspair}{\sqsubset}
\providecommand{\underbracket}[2][]{\underbrace{#2}}
\providecommand{\overbracket}[2][]{\overbrace{#2}}
\def\shortrightarrowfill@{\arrowfill@\relbar\relbar\shortrightarrow}
\def\shortleftarrowfill@{\arrowfill@\shortleftarrow\relbar\relbar}
\def\shortleftrightarrowfill@{\arrowfill@\leftrightarrow}
\newcommand{\ort}{\mathpalette{\overarrow@\shortrightarrowfill@}}
\newcommand{\olft}{\mathpalette{\overarrow@\shortleftarrowfill@}}
\newcommand{\olftrt}{\mathpalette{\overarrow@\shortleftrightarrowfill@}}
\newcommand{\fixed@sra}{$\vrule height 2\fontdimen22\textfont2 width 0pt\shortrightarrow$}
\newcommand{\upperleft}{{\mspace{-2mu}\text{\rotatebox[origin=c]{\numexpr135}{\fixed@sra}}\mspace{-2mu}}}
\newcommand{\lowerright}{{\mspace{-2mu}\text{\rotatebox[origin=c]{\numexpr315}{\fixed@sra}}\mspace{-2mu}}}
\newcommand{\leftshort}{{\shortleftarrow}}
\newcommand{\rightshort}{{\shortrightarrow}}
\let\pmb\boldsymbol
\newtheorem*{claim}{\bfseries Claim}
\newenvironment{claimproof}[1][\proofname]
  {%
    \proof[#1]%
  }
  {%
    \endproof%
  }
\renewcommand{\paragraph}[1]{\smallskip\par\noindent {\bf #1}}
\newcommand\customtheorem[1]{\let\oldthetheorem\thetheorem%
                             \renewcommand\thetheorem{#1}}
\newcommand\undocustomtheorem{\let\thetheorem\oldthetheorem\addtocounter{theorem}{-1}}
\begin{document}
\sloppy

\title{Untwisting two-way transducers in elementary time\thanks{This work was partially supported by the ANR projects ExStream (ANR-13-JS02-0010) and DeLTA (ANR-16-CE40-0007).}}

\author{\IEEEauthorblockN{F{\'e}lix Baschenis}
        \IEEEauthorblockA{Universit\'e de Bordeaux, LaBRI \\
                          felix.baschenis@labri.fr}
        \and
        \IEEEauthorblockN{Olivier Gauwin}
        \IEEEauthorblockA{Universit\'e de Bordeaux, LaBRI \\
                          olivier.gauwin@labri.fr}
        \and
        \IEEEauthorblockN{Anca Muscholl}
        \IEEEauthorblockA{Universit\'e de Bordeaux, LaBRI \\
                          anca@labri.fr}
        \and
        \IEEEauthorblockN{Gabriele Puppis}
        \IEEEauthorblockA{CNRS, LaBRI \\
                          gabriele.puppis@labri.fr}
       }

\maketitle

\thispagestyle{plain}
\pagestyle{plain}

\begin{abstract}
Functional transductions realized by two-way transducers 
(equivalently, by streaming transducers and by MSO transductions) 
are the natural and standard notion of ``regular'' mappings from words to words.
It was shown recently (LICS'13) that it is decidable if such a 
transduction can be implemented by some one-way transducer, 
but the given algorithm has non-elementary complexity. 
We provide an algorithm of different flavor solving the 
above question, that has double exponential space complexity.
We further apply our technique to decide whether the transduction 
realized by a two-way transducer can be implemented by a sweeping 
transducer, with either known or unknown number of passes.

\end{abstract}

\IEEEpeerreviewmaketitle

\section{Introduction}\label{sec:introduction}

Since the early times of computer science, transducers have been
identified as a fundamental notion of computation, where one
is interested how objects can be transformed into each other. Numerous
fields of computer science are ultimately concerned with
transformations, ranging from databases to image
processing, and an important issue is to perform transformations with
low costs, whenever possible. 

The most basic form of transformers are devices that process an input
and produce outputs during the processing, using  finite memory. Such
devices are called finite-state transducers. Word-to-word finite-state
transducers were considered in very early work in formal language
theory~\cite{sch61,ahu69,eil76}, and it was soon clear that they are much more
challenging than finite-state word acceptors - the classical finite-state
automata. One essential difference between transducers and automata
over words is that the capability to process the
input in both directions strictly increases the expressive power in the case
of transducers, whereas this does not for
automata~\cite{RS59,she59}. In other words, two-way word transducers
are strictly more expressive than one-way word transducers.

We consider in this paper  functional transducers, that compute
functions from words to words. Two-way word transducers capture very
nicely the notion of regularity in this setting. Regular word
functions, i.e. functions computed by functional two-way transducers, inherit many
of the characterizations and algorithmic properties of the robust
class of regular languages. Engelfriet and Hoogeboom~\cite{EH98} 
showed that monadic second-order definable graph transductions,
restricted to words, are equivalent to two-way transducers --- this
justifies the 
notation ``regular'' word functions, in the spirit of classical
results in automata theory and logic by B\"uchi, Elgot, Rabin and
others. Recently, Alur and Cern\'{y}~\cite{AlurC10} proposed an enhanced
version of one-way transducers called streaming transducers, and
showed that they are equivalent to the two previous models. A streaming
transducer processes the input word from left to right, and stores
(partial) output words in finitely many, write-only registers.

Two-way transducers raise challenging questions about resource
requirements. One crucial resource is the number of times the
transducer needs to re-process the input word. In particular, the case
where the input can be processed in a single pass, from left to right,
is very attractive as it corresponds to the setting of \emph{streaming},
where the (possibly very large) inputs do not need to be stored in order
to be processed. Recently, it was shown in~\cite{fgrs13} that it is
decidable whether the transduction  defined by a functional two-way
transducer can be implemented by a one-way transducer. However, the decision procedure
of~\cite{fgrs13} has non-elementary complexity, and it is very natural
to ask whether one can do better.  We gave
in~\cite{bgmp15,bgmp16} an exponential space algorithm  in the special case
of~\emph{sweeping} transducers: head reversals are only allowed at the
extremities of the input. However, sweeping transducers are known to be
strictly less expressive than two-way transducers. 

In this paper we provide an algorithm of elementary complexity for deciding 
whether the transduction defined by a functional two-way transducer can be
implemented by a one-way transducer: the decision algorithm has double
exponential space complexity, and an equivalent one-way transducer (if
it exists), can be constructed with triple exponential size.
The known lower bound~\cite{bgmp15} is double exponential size.
Our techniques can be further adapted to characterize definability of transductions
by other models of transducers, e.g.~to characterize sweeping 
transducers within the class of two-way transducers.

\emph{Related work.} Besides the papers mentioned above, there are
several recent results around the expressivity and the resources of
two-way transducers, or 
equivalently, streaming transducers. 
First-order definable transductions were shown to be equivalent to
transductions defined by aperiodic
streaming transducers~\cite{FiliotKT14} and to aperiodic two-way
transducers~\cite{CartonDartois15}. An effective characterization of
aperiodicity for one-way transducers was obtained in~\cite{FGL16}.

In~\cite{DRT16,bgmp16} the
minimization of the number of registers of deterministic streaming transducers,
resp., passes of functional sweeping transducers, was shown to be
decidable. An algebraic characterization of (not necessarily
functional) two-way transducers over unary alphabets was 
provided in~\cite{CG14mfcs}. It was shown that in this case 
sweeping transducers have the same expressivity. 
The expressivity of non-deterministic input-unary
or output-unary two-way
transducers was investigated in~\cite{Gui15}. 

\emph{Overview.} Section~\ref{sec:preliminaries} introduces basic
notations for two-way transducers, and Section~\ref{sec:mainres}
states the main result. Section~\ref{sec:effects} is devoted to the
effect of pumping runs on outputs, and Section~\ref{sec:inversions}
introduces the main tool for our characterization. 
Section~\ref{sec:decomposition} handles the construction of an
equivalent one-way transducer.
Finally, Section~\ref{sec:sweepingness} describes a procedure to decide
whether a functional transducer is equivalent to a sweeping transducer.

\section{Preliminaries}\label{sec:preliminaries}

\paragraph{Two-way automata and transducers.}
We start with some basic notations and definitions for two-way
automata (resp., transducers). We assume that every input word $u=a_1\cdots a_n$ 
has two special delimiting symbols $a_1 = \vdash$
and $a_n = \dashv$ that do not occur elsewhere: $a_i \notin \{\vdash,\dashv\}$ 
for all $i=2,\dots,n-1$. 

A \emph{two-way automaton} $\Aa=\tup{Q,\S,\vdash,\dashv,\de,q_0,F}$ has 
a finite state set $Q$, input alphabet $\S$, transition relation 
$\de \subseteq Q \times (\S \cup\set{\vdash,\dashv}) \times Q \times
\set{\Left,\Right}$, initial state $q_0\in Q$, and set of final states
$F\subseteq Q$. By convention, left transitions on $\vdash$ are not allowed.
A \emph{configuration} of $\Aa$ has the form $u\,q\,v$, 
with $uv \in \{\vdash\}\cdot\S^*\cdot\{\dashv\}$ 
and $q \in Q$. A configuration $u\,q\,v$ represents the 
situation where the current state of $\Aa$ is $q$ and its head reads the first 
symbol of $v$ (on input $uv$). If $(q,a,q',\Right) \in \de$,
then there is a transition from any configuration of the form
$u\,q\,av$ to the configuration $ua\,q'\,v$, which we denote 
$u\,q\,av \trans{a,\Right} ua\,q'\,v$. 
Similarly, if $(q,a,q',\Left) \in \de$,
then there is a transition from any configuration of the form
$ub\,q\,av$ to the configuration $u\,q'\,bav$, 
denoted as $ub\,q\,av \trans{a,\Left} u\,q'\,bav$.
A \emph{run} on $w$ 
is a sequence of transitions.
It is \emph{successful} if it starts in the initial configuration 
$q_0\, w$ and ends in a configuration $w\,q$ with $q \in F$
--- note that this latter configuration does not allow additional
transitions. The \emph{language} of $\Aa$ is the set of input words
that admit a successful run of $\Aa$.

The definition of \emph{two-way transducers} is similar to that of two-way automata,
with the only difference that now there is an additional output alphabet $\Gamma$ and the transition 
relation is a finite subset of $Q \times (\S \cup \{\vdash,\dashv\}) \times \G^* \times  Q \times \{\Left,\Right\}$,
which associates an output over $\Gamma$ with each transition of the underlying two-way automaton.
Formally, given a two-way transducer $\Tt=\tup{Q,\S,\vdash,\dashv,\Gamma,\de,q_0,F}$,
we have a transition of the form $ub\,q\,av \trans{a,d|w} u'\,q'\,v'$, outputting $w$, 
whenever $(q,a,w,q',d)\in\delta$ and either $u'=uba$, $v'=v$ or 
$u'=u$, $v'=bav$, depending on whether $d=\Right$ or $d=\Left$.
The \emph{output} associated with a  run 
$\rho = u_1\,q_1\,v_1 \trans{a_1,d_1|w_1} \dots \trans{a_n,d_n|w_n} u_{n+1}\,q_{n+1}\,v_{n+1}$
of $\Tt$ is the word $\out{\rho} = w_1\cdots w_n$. A transducer $\Tt$
defines a relation consisting of all pairs $(u,w)$ such that
$w=\out{\r}$, for some successful run $\r$ on $u$. 

The \emph{domain} of $\Tt$, denoted $\dom(\cT)$, is the set of input words that 
have a successful run. 
For transducers $\cT,\cT'$, we write $\cT' \subseteq \cT$ 
to mean that $\dom(\Tt') \subseteq \dom(\Tt)$ and the transductions 
computed by $\Tt,\Tt'$ coincide on $\dom(\Tt')$.

We say that $\Tt$ is \emph{functional} 
if for each input $u$, at most one output $w$ can be
produced by any possible successful run on $u$.
Finally, we say that $\Tt$ is \emph{one-way} if it does not have transition rules
of the form $(q,a,w,q',\Left)$.

\begin{figure}[!t]
\centering
\scalebox{.66}{%
\begin{tikzpicture}[->,>=stealth',shorten >=1pt,auto,node distance=2.3cm,semithick,xscale=1.3,yscale=0.75]

  \node[state,] (A)       at (0,2)             {$q_0  $};
  \node[state,] (B)   [right of=A]                 {$  q_1  $};
  \node[state,] (C)   [right of=B]                 {$  q_2   $};
  \node[state,] (D)   [above=1cm of C]                 {$  q_3  $};
   \node[state,] (E)   [left of=D]                 {$ q_4   $};
 
  \node[state,] (F)   [above=1cm of E]                 {$  q_5    $};
  \node[state,] (G)   [right of=F]                 {$ q_6    $};
   \node[state, ] (H)   [right of=G]                 {$ q_7  $};
   \node[state,] (I) [right of=H] { $q_8$} ;
 \coordinate[below=1cm of A] (d1);   
   
  \path (A) edge              node {$a_1 , \Right $ } (B) ;
   \path (B) edge              node {$a_2 , \Right $ } (C) ;
    \path (C) edge       [bend right=80]       node [swap] {$a_3 , \Left $ } (D) ;
    \path (D) edge              node {$a_2 , \Left $ } (E) ;
     \path (E) edge       [bend left=80]       node  {$a_1 , \Right $ } (F) ;
       \path (F) edge              node {$a_2 , \Right $ } (G) ;
         \path (G) edge              node {$a_3 , \Right $ } (H) ;
          \path (H) edge              node {$a_4 , \Right $ } (I) ;
    
    \node[state,rectangle,minimum size=3mm] (X) at (1,-1) {$a_1$};
    \node[state,rectangle,minimum size=3mm] (Y) [right of=X] {$a_2$};
    \node[state,rectangle,minimum size=3mm] (Z) [right of=Y] {$a_3$};
    \node[state,rectangle,minimum size=3mm] (W) [right of=Z] {$a_4$};
    
    \node at (-2,-1) {Input word:};
    \node at (-2,0.5) {Positions:};
    \node at (-2,2) {Run:};
    
    \node (M) at (0,0.5) {$0$};
    \node (N) [right of=M] {$1$};
    \node (O) [right of=N] {$2$};
    \node (P) [right of=O] {$3$};
    \node (Q) [right of=P] {$4$};

  \node (AA) [above=0cm of A] {$(0,0)$} ;
  \node (AB) [right of=AA] {$(1,0)$};
  \node (AC) [right of=AB] {$(2,0)$};
  \node (AD) [above=0cm of D] {$(2,1)$};
  \node (AE) [left of=AD] {$(1,1)$};
  \node (AF) [above=0cm of F] {$(1,2)$};
  \node (AG) [right of=AF] {$(2,2)$};
  \node (AH) [right of=AG] {$(3,0)$};
  \node (AI) [right of=AH] {$(4,0)$};
\end{tikzpicture}
}
\caption{Graphical presentation of a run by means of crossing sequences.}\label{fig:run}
\end{figure}
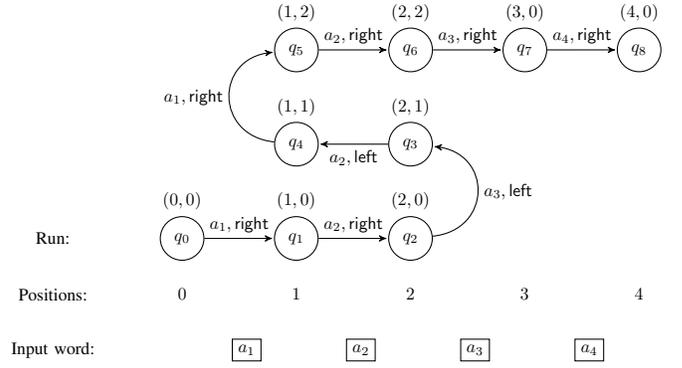

\paragraph{Crossing sequences.}
The first basic notion is that of crossing sequence. 
We follow the convenient presentation from \cite{HU79}, which appeals to a
graphical representation of runs of a two-way transducer where each configuration
is seen as point (location) in a two-dimensional space. 
Let $u=a_1\cdots a_n$ be an input word (recall that $a_1=\vdash$ and $a_n=\dashv$) 
and let $\rho$ be a run of a two-way automaton (or transducer) $\Tt$ on $u$.
The \emph{positions} of $\rho$ are the numbers from $0$ to $n$, corresponding
to ``cuts'' between two consecutive letters of the input. For example, 
position $0$ is just before the first letter $a_1$,
position $n$ is just after the last letter $a_n$,
and any other position $x$, with $1\le x<n$, is between 
the letters $a_x$ and $a_{x+1}$.
 
We say that a transition $u\,q\,v \trans{a,d} u'\,q'\,v'$ of $\rho$ \emph{crosses position $x$} 
if either $d=\Right$ and $|u|=x$, or $d=\Left$ and $|u'|=x$.
A \emph{location} of $\rho$ is any pair $(x,y)$ for which there are at least 
$y+1$ transitions in $\rho$ crossing position $x$; the component $y$ of a location
is called \emph{level}.
Each location is associated a state. Formally, we say that
$q$ is the \emph{state at location $\ell=(x,y)$} in $\rho$, and we denote this 
by writing $\rho(\ell)=q$, if the $(y+1)$-th transition that crosses $x$ ends up
in state $q$.
The \emph{crossing sequence} at position $x$ of $\rho$ is the tuple $\rho|x=(q_0,\dots,q_h)$,
where the $q_y$'s are all the states at locations of the form $(x,y)$, for $y=0,\dots,h$.

As suggested by Fig.~\ref{fig:run}, any run can be represented 
as an annotated path between locations.
For example, if a location $(x,y)$ is reached by a rightward transition,
then the head of the automaton has read the symbol $a_x$; 
if it is reached by a leftward transition, then the head has read the symbol $a_{x+1}$.  
Note that in a successful run $\rho$ every crossing sequence has 
odd length and every rightward (resp.~leftward) transition reaches 
a location with even (resp.~odd) level.
We can identify four types of transitions between locations, 
depending on the parities of the levels (the reader may refer 
again to Fig.~\ref{fig:run}):
\begin{center}
\vspace{-0.5mm}
\scalebox{.81}{%
\begin{tikzpicture}[baseline=0,yscale=1.1]
  \draw (-0.25,1.75) node (node1) {$\phantom{\!+\!1}~(x,2y)$};
  \draw (3,1.75) node (node2) {$(x\!+\!1,2y')~\phantom{\!+\!1}$};
  \draw (-0.25,0) node (node3) {$(x,2y\!+\!1)$};
  \draw (3,0) node (node4) {$(x\!+\!1,2y'\!+\!1)$};
  \draw (7,1.75) node (node5) {$\phantom{\!+\!1}~(x,2y)$};
  \draw (7,2.5) node (node6) {$(x,2y\!+\!1)$};
  \draw (7,0) node (node7) {$(x,2y\!+\!1)$};
  \draw (7,0.75) node (node8) {$(x,2y\!+\!2)$};
  \draw (node1) edge [->] node [above=-0.05, scale=0.9] {\small $a_{x+1},\Right$} (node2);
  \draw (node4) edge [->] node [above=-0.05, scale=0.9] {\small $a_{x+1},\Left$} (node3);
  \draw (node5.east) edge [->, out=0, in=0, looseness=2] 
        node [right=0, scale=0.9] {\small $a_{x+1},\Left$} (node6.east);
  \draw (node7.west) edge [->, out=180, in=180, looseness=2] 
        node [left=0, scale=0.9] {\small $a_x,\Right$} (node8.west);
\end{tikzpicture}
}
\end{center}
Hereafter, we will identify runs with the corresponding annotated paths between locations. 
It is also convenient to define a total order $\leqtime$ on the locations of a run $\rho$ 
by letting $\ell_1 \leqtime \ell_2$ if $\ell_2$ is reachable from $\ell_1$ by following the 
path described by $\rho$ --- the order $\leqtime$ on locations is called 
\emph{run order}. 
Given two locations $\ell_1 \leqtime \ell_2$ of a run $\rho$, we write $\rho[\ell_1,\ell_2]$ 
for the factor of the run that starts in $\ell_1$ and ends in $\ell_2$. Note that the latter 
is also a run and hence the notation $\outb{\rho[\ell_1,\ell_2]}$ is permitted.
Two runs $\rho_1,\rho_2$ can be concatenated, provided that $\rho_1$ ends in location $(x,y)$, 
$\rho_2$ starts in location $(x,y')$, such that $y'=y \pmod{2}$ and $(x,y)$, $(x,y')$ are
labelled by the same state. We denote by $\r_1 \r_2$ the run resulting from concatenating
$\r_1$ with $\r_2$.
Clearly, we have $\rho[\ell_1,\ell_2] ~ \rho[\ell_2,\ell_3] = \rho[\ell_1,\ell_3]$
for all locations $\ell_1 \leqtime \ell_2 \leqtime \ell_3$.

\paragraph{Normalization.}
Without loss of generality, we will assume that successful runs of
functional transducers are \emph{normalized}, meaning that they never 
visit two locations with the same position, the same state, and both 
either at even or at odd level. Indeed, if this were not
the case, say if a successful run $\rho$ visited two locations $\ell_1=(x,y)$
and $\ell_2=(x,y')$ such that $\rho(\ell_1)=\rho(\ell_2)$ and
$y,y'$ are both even or both odd, then the output produced by $\rho$ 
between $\ell_1$ and $\ell_2$ should be empty, as otherwise
by repeating the factor $\rho[\ell_1,\ell_2]$ of $\rho$ we could 
obtain successful runs that produces different outputs on the same 
input, thus contradicting the assumption that the transducer is
functional. Now that we know that the output of $\rho$ produced 
between $\ell_1$ and $\ell_2$ is empty, we could drop the
factor $\rho[\ell_1,\ell_2]$, thus obtaining a successful run
with the same output.
It is easy to see that, in every normalized successful run, 
the crossing sequences have length at most $2|Q|-1$.

We define $\hmax = 2|Q|-1$. Moreover, by $\cmax$ we denote the
\emph{capacity} of the transducer, which is the maximal length of the
output of a transition.

\section{Two-way transducers vs one-way transducers}\label{sec:mainres}

In this section we state our main result, which is the existence of an
elementary algorithm for checking whether a two-way transducer is
equivalent to some one-way transducer. We call such transducers
\emph{one-way definable}. Before stating our result, we
give a few examples.

\begin{example}\label{ex:one-way-definability}
We consider two-way transducers that accept any input $u$ from a given
regular language $R$  
and output the word $u\,u$. We will argue how, depending on  $R$, 
these transducers may or may not be one-way definable.
\begin{enumerate}
\item If $R=(a+b)^*$ there is no equivalent one-way transducer, 
      as the output language is not regular. 
      If $R$ is finite, then the transduction mapping
        $u\in R$ to $u\,u$ can be implemented by a one-way transducer that guesses
        $u$ (this requires as many states as the size of $R$), checks the
        input, and outputs two copies of the guessed word.
  \item A special case of transduction with finite domain is given by
        $R_n = \{ a_0 \, w_0 \, \cdots a_{2^n-1} \, w_{2^n-1} \::\: 
                  a_0,\dots,a_{2^n-1}\in\{a,b\} \}$, 
        where $n\in\bbN$ and each $w_i$ is the binary encoding of the counter $i=0,\dots,2^n-1$.
        It is easy to see (cf.~Proposition 15 \cite{bgmp15}) that
        the transduction mapping
        $u\in R_n$ to $u\,u$ can be implemented by a two-way transducer with quadratically many
        states w.r.t.~$n$, while every equivalent one-way transducer has at least 
        $2^{2^n}$ states, since it needs to guess a word of length $2^n$.
  \item Consider now the periodic language $R=(abc)^*$. 
        The function that maps $u\in R$ to $u\,u$ can be easily implemented by a 
        one-way transducer: it suffices to output two letters
        (i.e., $ab$, $ca$, $bc$, in turn)
        for each input letter, while checking that the input is in $R$.
\end{enumerate}
\end{example}

\begin{example}\label{ex:running}
We consider a slightly more complicated transduction
that is defined on input words of the form $u_1 \:\#\: \dots \:\#\: u_n$,
where each factor $u_i$ is over the alphabet $\Sigma=\{a,b,c\}$. 
The output of the transduction is of the form
$w_1 \:\#\: \dots \:\#\: w_n$, where each $w_i$
is either $u_i \: u_i$ or just $u_i$, depending on whether or not
$u_i\in (abc)^*$ and $u_{i+1}$ has even length, with $u_{n+1}=\emptystr$.

The obvious way to implement the transduction is by means
of a two-way transducer that performs multiple passes 
on the factors of the input:
a first left-to-right pass is performed on 
$u_i \,\#\, u_{i+1}$ to produce the first copy of $u_i$ 
and to check whether $u_i\in (abc)^*$ and $|u_{i+1}|$ is even; if so,
a second pass on $u_i$ is performed to produce 
another copy of $u_i$.

The transduction can also be implemented by a one-way
transducer: when entering a factor $u_i$, the transducer
guesses whether or not $u_i\in (abc)^*$ and $|u_{i+1}|$ is even; 
depending on this it outputs either $(abc\,abc)^{\frac{|u_i|}{3}}$ 
or $u_i$, and checks that the guess is correct.
\end{example}

\noindent
Our main result is:
\begin{theorem}\label{th:main}
There is an algorithm that from a functional two-way
transducer $\cT$ constructs in triple exponential time 
a \emph{one-way} transducer $\cT'$ with the following properties:
\begin{itemize}
  \item $\cT'\subseteq\cT$,
  \item $\dom(\cT)=\dom(\cT')$ iff $\Tt$ is one-way definable.
\end{itemize}
Moreover, the second property above can be checked in double exponential space
w.r.t.~$|\cT|$. 
\end{theorem}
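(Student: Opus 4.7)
The plan is to construct a one-way transducer $\cT'$ that, on an input $u$, guesses position by position a summary of some successful normalized run $\rho$ of $\cT$ on $u$ --- essentially the crossing sequences of $\rho$ augmented with a scheduling of the output pieces so that they can be emitted in a single left-to-right sweep --- and verifies this guess against the transition relation of $\cT$. Since normalized runs have crossing sequences of length at most $\hmax=2|Q|-1$ and each transition outputs at most $\cmax$ letters, the set of such summaries is finite and forms the state space of $\cT'$. By construction, any accepting run of $\cT'$ on $u$ witnesses a successful run of $\cT$ on $u$ with the same output, so the inclusion $\cT'\subseteq\cT$ is immediate.

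The heart of the argument is the equivalence $\dom(\cT)=\dom(\cT')$ iff $\cT$ is one-way definable. Using the pumping analysis of Section~\ref{sec:effects}, I would isolate a finite collection of patterns, the \emph{inversions} of Section~\ref{sec:inversions}, capturing every obstruction to rewriting a two-way run into a left-to-right output schedule. If $\cT$ admits an inversion-free successful run on $u$, then the schedule required by $\cT'$ exists and hence $u\in\dom(\cT')$; conversely, if every successful run of $\cT$ on $u$ contains an inversion, one has to show that no one-way transducer can realize the transduction of $\cT$ at all. This converse is the main obstacle: from a single witnessing inversion one must extract, by simultaneously pumping the two loops that form it, an infinite family of inputs whose outputs exceed what any fixed one-way transducer can produce, either because the resulting output language fails to be regular or because an unbounded amount of lookahead would be needed. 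This is where normalization, the bound $\hmax$, and the precise formalization of what counts as an inversion all become essential.

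Finally, a direct count on the state space of $\cT'$ --- pairs of crossing sequences of length at most $\hmax$, schedules of output fragments of length at most $\hmax\cdot\cmax$ per cut, and finite summaries of the inversions that the construction must certify absent --- yields the claimed triple exponential bound on $|\cT'|$. The check of the second property is then performed not by a naive product of $\cT$ with a determinized $\cT'$, but by searching directly in $\cT$ for a combinatorial witness of some inversion pattern; such a witness consists of two nested pumping loops and a bad output interaction, and lives in a space of doubly exponential size in $|\cT|$, so it can be explored nondeterministically in double exponential space, matching the bound in the statement.
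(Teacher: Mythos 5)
Your plan diverges from the paper at its most important point, and in a way that would not lead to a correct proof. You propose that $\cT'$ should accept $u$ exactly when $\cT$ has an \emph{inversion-free} successful run on $u$, and that the converse direction should show that the mere presence of inversions in every run defeats one-way definability. This is not the paper's characterization, and it is not a correct characterization either: a two-way transducer can be one-way definable even though every successful run contains inversions. Example~\ref{ex:running} is precisely such a case --- the natural two-way implementation makes a backward pass over factors $u_i\in(abc)^*$, yielding inversions, yet the transduction is one-way definable because the output produced inside those inversions is \emph{periodic} with bounded period. The correct dividing line, which the paper establishes in Proposition~\ref{prop:inversion} and Theorem~\ref{thm:main2}, is whether every inversion has output that is periodic with period at most $\bound$ and dividing the trace lengths. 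Your proposal never mentions periodicity, Fine--Wilf, or the word-equation machinery (Theorem~\ref{thm:saarela}, Lemma~\ref{lemma:one-way-vs-two-way}) that makes this work; without them the ``$\Leftarrow$'' direction collapses, because one cannot refute one-way definability just by finding an inversion.

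The construction of $\cT'$ is also underspecified in a way that matters. Guessing crossing sequences and ``a scheduling of the output pieces'' does not suffice: the pieces of output produced right-to-left by $\cT$ inside an inversion can be arbitrarily long, so they cannot be stored or rescheduled in bounded memory. The paper resolves this by proving (P2$\Rightarrow$P3) that every run admits a decomposition into \emph{diagonals} and \emph{blocks} (Definitions~\ref{def:factors}--\ref{def:decomposition}), where diagonals have bounded overhang on both sides of each cut and blocks have output that is almost periodic with bounded period. The transducer $\cT'$ in Proposition~\ref{prop:sufficiency} operates in two modes corresponding exactly to these two kinds of factors, emitting the periodic part of a block via a pointer into a bounded pattern rather than storing the whole output. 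Your sketch has no analogue of this, and the claim that the inclusion $\cT'\subseteq\cT$ is ``immediate'' would not hold for a generic schedule-guessing construction: one must argue that the emitted output actually equals $\out{\rho}$, which is exactly where the almost-periodicity of blocks is used. Finally, your complexity estimate lands in the right ballpark, but the paper's $\twoexpspace$ bound (Lemma~\ref{lem:U-complement}) rests on guessing a run, an inversion, \emph{and} a witness function $f$ refuting every candidate period $p\le\bound$; ``searching for a bad output interaction'' without the periodicity criterion does not give a decidable condition.
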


We remark that a similar characterization for a much more
restricted class of transducers (sweeping transducers) 
appeared in \cite{bgmp15}. The proof of Theorem~\ref{th:main},
however, is more technical, as it requires a better understanding
of the structure of the runs of two-way transducers and a 
non-trivial generalization of the combinatorial arguments 
from \cite{bgmp15}.

The proof of the theorem spans along the next three sections.
In Section \ref{sec:effects}, we present the basic concepts
for reasoning on runs of two-way automata. This includes the
definition of a finite semigroup for describing the shapes of 
two-way runs, as well as Ramsey-type arguments that are
used to bound the length of the outputs produced by
pieces of runs without loops.
In Section \ref{sec:inversions} we provide the main combinatorial
arguments for characterizing one-way definability. The crucial
notion will be that of \emph{inversion}, that captures behaviours
of the two-way transducer that are problematic for one-way definability.
Finally, in Section \ref{sec:decomposition} we exploit the combinatorial
results and the Ramsey-type arguments to derive the existence of 
suitable decompositions of runs that lead to the construction of equivalent
one-way transducers.

\section{Untangling runs of two-way transducers}\label{sec:effects}

This section is devoted to untangling the structure of 
runs of two-way transducers. 
Whereas the classical transformation of two-way automata into one-way
automata based on crossing sequences is rather simple, we will need
a much deeper understanding of runs of two-way transducers, because of
the additional outputs. In a nutshell, being 
one-way definable is related to periodicities (with bounded periods) 
in the output, and these periodicities are generated by loops in the run. 
We will actually work with so called idempotent loops, 
that generate periodicities in the output in a ``nice'' way.
We will derive the existence of idempotent loops with bounded 
outputs using Ramsey-based arguments.

We fix throughout the paper a functional two-way transducer $\cT$, 
an input word $u$, and a successful run $\rho$ of $\cT$ on $u$. We 
assume that $\r$ is normalized, i.e., every state occurs at most once
in each crossing sequence of $\rho$ at levels of a given parity. 

For simplicity, we denote by $\omega$ the length of the input word $u$. 
We will consider \emph{intervals of positions} of the form 
$I=[x_1,x_2]$, with $0 \le x_1<x_2 \le \omega$. The
\emph{containment relation} $\subseteq$ on intervals is defined by
$[x_3,x_4] \subseteq [x_1,x_2]$ if $x_1\le x_3 < x_4\le x_2$.

\paragraph{Factors, flows, and effects.}
A factor of a run $\rho$ is a contiguous 
subsequence of $\r$. A factor \emph{intercepted} by an interval 
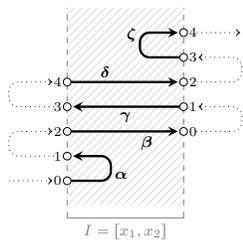
\begin{wrapfigure}{r}{3.2cm}
\vspace{-1.5mm}\hspace{-1mm}
\centering
\scalebox{0.73}{%
\begin{tikzpicture}[baseline=0, inner sep=0, outer sep=0, minimum size=0pt, xscale=0.53, yscale=0.45]
  \tikzstyle{dot} = [draw, circle, fill=white, minimum size=4pt]
  \tikzstyle{fulldot} = [draw, circle, fill=black, minimum size=4pt]
  \tikzstyle{factor} = [->, shorten >=1pt, dotted, rounded corners=5]
  \tikzstyle{fullfactor} = [->, >=stealth, shorten >=1pt, very thick, rounded corners=5]

  \fill [pattern=north east lines, pattern color=gray!25]
        (4,-1) rectangle (8,7);
  \draw [dashed, thin, gray] (4,-1) -- (4,7);
  \draw [dashed, thin, gray] (8,-1) -- (8,7);
  \draw [gray] (4,-1.25) -- (4,-1.5) -- (8,-1.5) -- (8,-1.25);
  \draw [gray] (6,-1.75) node [below] {\footnotesize $I=[x_1,x_2]$};

  \draw (2,0) node (node0) {};
  \draw (4,0) node [dot] (node1) {};
  \draw (node1) node [left=0.1] (node1') {\scriptsize $0$};
  \draw (5.5,0) node (node2) {};
  \draw (5.5,1) node (node3) {};
  \draw (4,1) node [dot] (node4) {};
  \draw (node4) node [left=0.1] (node4') {\scriptsize $1$};
  \draw (2,1) node (node5) {};
  \draw (2,2) node (node6) {};
  \draw (4,2) node [dot] (node7) {};
  \draw (node7) node [left=0.1] (node7') {\scriptsize $2$};
  \draw (8,2) node [dot] (node8) {};
  \draw (node8) node [right=0.1] (node8') {\scriptsize $0$};
  \draw (10,2) node (node9) {};
  \draw (10,3) node (node10) {};
  \draw (8,3) node [dot] (node11) {};
  \draw (node11) node [right=0.1] (node11') {\scriptsize $1$};
  \draw (4,3) node [dot] (node12) {};
  \draw (node12) node [left=0.1] (node12') {\scriptsize $3$};
  \draw (2,3) node (node13) {};
  \draw (2,4) node (node14) {};
  \draw (4,4) node [dot] (node15) {};
  \draw (node15) node [left=0.1] (node15') {\scriptsize $4$};
  \draw (8,4) node [dot] (node16) {};
  \draw (node16) node [right=0.1] (node16') {\scriptsize $2$};
  \draw (10,4) node (node17) {};
  \draw (10,5) node (node18) {};
  \draw (8,5) node [dot] (node19) {};
  \draw (node19) node [right=0.1] (node19') {\scriptsize $3$};
  \draw (6.5,5) node (node20) {};
  \draw (6.5,6) node (node21) {};
  \draw (8,6) node [dot] (node22) {};
  \draw (node22) node [right=0.1] (node22') {\scriptsize $4$};
  \draw (10,6) node (node23) {};

  \draw [factor] (node0) -- (node1');
  \draw [fullfactor] (node1) -- (node2.center) -- node [below right=1mm] {\footnotesize $\pmb{\alpha}$}
                     (node3.center) -- (node4); 
  \draw [factor] (node4') -- (node5.center) -- (node6.center) -- (node7'); 
  \draw [fullfactor] (node7) -- node [below right=1.25mm] {\footnotesize $~~\pmb{\beta}$} (node8);
  \draw [factor] (node8') -- (node9.center) -- (node10.center) -- (node11');
  \draw [fullfactor] (node11) -- node [below=1mm] {\footnotesize $\pmb{\gamma}$} (node12);
  \draw [factor] (node12') -- (node13.center) -- (node14.center) -- (node15');
  \draw [fullfactor] (node15) -- node [above left=1.25mm] {\footnotesize $\pmb{\delta}~~$} (node16);
  \draw [factor] (node16') -- (node17.center) -- (node18.center) -- (node19');
  \draw [fullfactor] (node19) -- (node20.center) -- node [above left=1mm] {\footnotesize $\pmb{\zeta}$}
                     (node21.center) -- (node22);
  \draw [factor] (node22') -- (node23);
\end{tikzpicture}
}
\caption{Intercepted factors.}\label{fig:intercepted-factors}
\hspace{-3mm}\vspace{-3mm}
\end{wrapfigure} 
$I=[x_1,x_2]$ is a maximal factor of $\rho$ that visits only 
positions $x\in I$, and never uses a left transition from 
position $x_1$ or a right transition from position $x_2$.  

Fig.~\ref{fig:intercepted-factors} on the right gives 
an example of an interval $I$ that intercepts the 
factors $\alpha,\beta,\gamma,\delta,\zeta$.
The numbers that annotate the endpoints of the factors 
represent their levels.

\smallskip
Every factor $\alpha$ intercepted by an interval $I=[x_1,x_2]$
is of one of the four types below, depending on its first
location $(x,y)$ and its last location $(x',y')$: 
\begin{itemize}
\item $\alpha$ is an $\LL$-factor if $x=x'=x_1$,
\item $\alpha$ is an $\RR$-factor if $x=x'=x_2$,
\item $\alpha$ is an $\LR$-factor if $x=x_1$ and $x'=x_2$, 
\item $\alpha$ is an $\RL$-factor if $x=x_2 $ and $x'=x_1$.
\end{itemize}
In Fig.~\ref{fig:intercepted-factors} we see that $\a$ is an
$\LL$-factor, $\b,\delta$ are $\LR$-factors, $\z$ is an $\RR$-factor,
and $\g$ is an $\RL$-factor.

\begin{definition}\label{def:flow}
Let $\r$ be a run and $I = [x_1, x_2]$ an interval of $\r$. 
Let $h_i$ be the length of the crossing sequence $\r|x_i$ for both $i=1$ and $i=2$.

The \emph{flow} $F_I$ of $I$ is a directed graph with set of nodes
$\set{0,\dots,\max(h_1,h_2)-1}$ and set of edges consisting of all
$(y,y')$ such that there exists a factor of $\r$ intercepted by $I$ 
that starts at location $(x_i,y)$ and ends at location $(x_j,y')$,
for $i,j\in\{1,2\}$.

The \emph{effect} $E_I$ of $I$ is the triple $(F_I,c_1,c_2)$, 
where $c_i=\r|x_i$ is the crossing sequence at $x_i$.
\end{definition}

\noindent
For example, the interval $I$ of Fig.~\ref{fig:intercepted-factors} 
has the flow graph $0\mapsto 1\mapsto 3\mapsto 4\mapsto 2\mapsto 0$.
It is easy to see that every node of a flow $F_I$ has at most one
incoming and at most one outgoing edge. More precisely, if $y<h_1$ is
even, then it has one outgoing edge (corresponding to an $\LR$- or $\LL$-factor
intercepted by $I$), and if it is odd it has one incoming edge (corresponding
to an $\RL$- or $\LL$-factor intercepted by $I$). Similarly, if $y<h_2$ is 
even, then it has one incoming edge (corresponding to an $\LR$- or $\RR$-factor), 
and if it is odd it has one outgoing edge (corresponding to an $\RL$- or
$\RR$-factor). 

In the following we consider generic effects that are not necessarily
associated with intervals of specific runs. The definition of such effects
should be clear: these are triples consisting of a graph (called flow)
and two crossing sequences of lengths $h_1,h_2 \le \hmax$, with sets of 
nodes of the form $\{0,\ldots,\max(h_1,h_2)-1\}$, 
that satisfy the in/out-degree properties stated above. 

It is convenient to distinguish the edges in a flow based on the
parity of the source and target nodes. Formally, we partition any 
flow $F$ into the following subgraphs:
\begin{itemize}
  \item $F_\LR$ consists of all edges of $F$ between pairs of even nodes,
  \item $F_\RL$ consists of all edges of $F$ between pairs of odd nodes,
  \item $F_\LL$ consists of all edges of $F$ from an even node to an odd node,
  \item $F_\RR$ consists of all edges of $F$ from an odd node to an even node.
\end{itemize}

We denote by $\cF$ (resp.~$\cE$) the set of all flows (resp.~effects)
augmented with a dummy element $\bot$. We equip both sets $\cF$ and $\cE$ with 
a semigroup structure, where the corresponding products $\circ$ and $\odot$ are 
defined below (similar definitions appear in \cite{Birget1990}). We need
this semigroup structure in order to identify \emph{idempotent loops},
that play a crucial role in our characterization of one-way
definability. 

\begin{definition}\label{def:product}
For two graphs $G,G'$, we denote by $G\cdot G'$ the graph with edges of 
the form $(y,y'')$ such that $(y,y')$ is an edge of $G$ and $(y',y'')$ is an
edge of $G'$, for some node $y'$ that belongs
to both $G$ and $G'$. 
Similarly, we denote by $G^*$ the graph with edges $(y,y')$ 
such that there exists a (possibly empty) path in $G$ from $y$ to $y'$.

The product of two flows $F,F'$ is the unique flow $F\circ F'$ (if it exists) such that:
\begin{itemize}
\item $(F\circ F')_\LR = F_\LR \cdot (F'_\LL \cdot F_\RR)^* \cdot F'_\LR$,
\item $(F\circ F')_\RL = F'_\RL \cdot (F_\RR \cdot F'_\LL)^* \cdot F_\RL$,
\item $(F\circ F')_\LL = F_\LL ~\cup~  F_\LR \cdot (F'_\LL \cdot F_\RR)^* \cdot F'_\LL \cdot F_\RL$,
\item $(F\circ F')_\RR = F'_\RR ~\cup~  F'_\RL \cdot (F_\RR \cdot F'_\LL)^* \cdot F_\RR \cdot F'_\LR$.
\end{itemize}
If no flow $F\circ F'$ exists with the above properties, 
then we let $F\circ F'=\bot$.

The product of two effects $E=(F,c_1,c_2)$ and $E'=(F',c'_1,c'_2)$ is either the effect
$E\odot E' = (F\circ F',c_1,c'_2)$ or the dummy element $\bot$, depending on whether 
$F\circ F'\neq \bot$ and $c_2=c'_1$.
\end{definition}

For example, let $F$ be the flow of interval $I$ 
in Fig.~\ref{fig:intercepted-factors}. Then 
$(F \circ F)_\LL=\set{(0,1),(2,3)}$, 
$(F \circ F)_\RR=\set{(1,2),(3,4)}$, and
$(F \circ F)_\LR=\set{(4,0)}$ 
--- one can quickly verify this with the 
help of Fig.~\ref{fig:pumping}.

It is also easy to see that $(\cF,\circ)$ and $(\cE,\odot)$ are finite semigroups, and that
for every run $\r$ and every pair of consecutive intervals $I=[x_1,x_2]$ and $J=[x_2,x_3]$ of $\r$,
$F_{I\cup J} = F_I \circ F_J$ and $E_{I\cup J} = E_I \odot E_J$.
In particular, the function $E$ that associates each interval $I$ of $\rho$ with the 
corresponding effect $E_I$ can be seen as a semigroup homomorphism. 

Note that, in a normalized successful run, there are at most 
$|Q|^{\hmax}$ distinct crossing sequences and at most $4^{\hmax}$ 
distinct flows, since there are at most $\hmax$ edges in a flow, 
and each one has one of the 4 possible types $\LL,\ldots,\RR$. 
Hence there are at most $(2|Q|)^{2\hmax}$ distinct effects.

\paragraph{Loops and components.}
Loops of a two-way run are the basic building blocks for characterizing 
one-way definability. We will consider special types of 
loops, called idempotent loops, when showing that outputs generated in 
non left-to-right manner are essentially periodic.

\begin{definition}\label{def:idempotent}
A \emph{loop} of $\r$ is an interval $L=[x_1,x_2]$ whose endpoints have 
the same crossing sequences, i.e.~$\r|x_1=\r|x_2$. 
It is said to be \emph{idempotent} if $E_L = E_L \odot E_L$
and $E_L\neq\bot$.
\end{definition}

\noindent
For example, the interval $I$ of Fig.~\ref{fig:intercepted-factors} is a loop,
if one assumes that the crossing sequences at the borders of $I$ are the same. 
However, by comparing with Fig.~\ref{fig:pumping}, it is easy to see that $I$ 
is not idempotent. On the other hand, the loop consisting of 2 copies of $I$ 
is idempotent. 

\input{figures/pumping}

Given a loop $L=[x_1,x_2]$ and a number $m\in\bbN$, we can introduce
$m$ new copies of $L$ and connect the intercepted factors in the 
obvious way.
Fig.~\ref{fig:pumping} shows how to do this for $m=1$ and $m=2$. 
The operation that we just described is called \emph{pumping}, 
and results in a new run of the transducer $\cT$ on the word 
\[
\pump_L^{m+1}(u)\; := \; u[0,x_1] \cdot \big( u[x_1+1,x_2] \big)^{m+1} \cdot u[x_2+1,n] \ .
\]
We denote by  $\pump_L^{m+1}(\rho)$ the pumped\footnote{Using similar constructions, one could remove a loop $L$ 
          from a run $\rho$, resulting in the run $\pump_L^0(\rho)$. 
          As we do not need this, 
          the operation $\pump_L$ will always be parametrized
          by a positive number $m+1$.} run on $\pump_L^{m+1}(u)$.%

The goal in this section is to describe the shape of the pumped run
$\pump_L^{m+1}(\rho)$ (and the produced output as well) when $L$ is an
{\sl idempotent} loop. We will focus on idempotent loops because
pumping non-idempotent loops may induce permutations of factors that 
are difficult to handle.  For example, if we consider again the 
non-idempotent loop $I$ to the left of Fig.~\ref{fig:pumping}, the factor 
of the run between $\beta$ and $\gamma$ (to the right of $I$, highlighted in red) 
precedes the factor between $\gamma$ and $\delta$ (to the left of $I$, 
again in red), but this ordering is reversed when a new copy of $I$ 
is added.

When pumping a loop $L$, subsets of factors intercepted by $L$ are glued
together to form longer factors intercepted by the unioned copies of $L$. 
The concept of component that we introduce below aims at identifying the
groups of factors that are glued together.

\begin{definition}\label{def:component}
A \emph{component} of a loop $L$ is any strongly 
connected component of its flow $F_L$ 
(note that this is also a cycle, since 
every node in it has in/out-degree $1$). 
Given a component $C$, we denote by 
$\min(C)$ (resp.~$\max(C)$) the minimum (resp.~maximum) 
node in $C$.
We say that $C$ is \emph{left-to-right} (resp.~\emph{right-to-left}) 
if $\min(C)$ is even (resp., odd).

\noindent
An \emph{$(L,C)$-factor} is a factor of the run that is 
intercepted by $L$ and corresponds to an edge of $C$.
\end{definition}

\noindent
For example, the loop $I$ of Fig.~\ref{fig:pumping} contains
a single component $C=\{0\mapsto 1\mapsto 3\mapsto 4\mapsto 2\mapsto 0\}$ 
which is left-to-right.
Another example is given in Fig.~\ref{fig:loop-trace}, where the 
loop $L$ has three components $C_1,C_2,C_3$ (ordered from bottom to top):
$\a_1,\a_2,\a_3$ are the $(L,C_1)$-factors, $\b_1,\b_2,\b_3$ are
the $(L,C_2)$-factors, and $\g_1$ is the unique $(L,C_3)$-factor.

We will usually list the $(L,C)$-factors based on their order of occurrence in the run.

\input{figures/pumpingcomp}

The following lemma (proved in the appendix) 
describes the precise shape and order of such 
factors when the loop $L$ is idempotent. 
It can be used to reason on the shape of 
runs obtained by pumping idempotent loops.

\begin{lemma}\label{lem:component2}
If $C$ is a left-to-right (resp.~right-to-left) component 
of an {\sl idempotent} loop $L$, then the $(L,C)$-factors are in the following order: 
$k$ $\LL$-factors (resp.~$\RR$-factors), followed by one $\LR$-factor (resp.~$\RL$-factor), 
followed by $k$ $\RR$-factors (resp.~$\LL$-factors), for some $k \ge 0$. 
\end{lemma}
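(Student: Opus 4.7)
The plan is to first pin down the cycle shape of a left-to-right component $C$ using idempotence of $F_L$, and then derive the claimed run order from the fact that the run visits each border's locations in increasing level order, which forces the source level of every LL- and RR-edge to be strictly less than its target level. The right-to-left case will follow by symmetry under time-reversal, which swaps $\LL \leftrightarrow \RR$ and $\LR \leftrightarrow \RL$.

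First I would exploit the $\LR$-identity from Definition~\ref{def:product}. Setting the Kleene-starred middle to the empty product yields $F_{\LR} \cdot F_{\LR} \subseteq F_{\LR}$, so $F_{\LR}$ is transitive; combined with the fact that every vertex has at most one outgoing and one incoming LR-edge in a flow, this forces $F_{\LR}$ to be a disjoint union of self-loops and edges $(y,y')$ with $y \ne y'$ such that $y$ has no incoming LR-edge and $y'$ has no outgoing LR-edge. For a non-self-loop LR-edge $(y_1,y_2) \in C$, the full identity $F_{\LR} = F_{\LR} \cdot (F_{\LL} \cdot F_{\RR})^* \cdot F_{\LR}$ then produces a nonempty LL-RR alternating path in $F_L$ from $y_2$ back to $y_1$; since $F_L$ has out-degree $1$ at every vertex, this path must coincide with the continuation of $C$ after $(y_1,y_2)$. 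Hence $C$ takes the shape
\[
y_0 \xrightarrow{\LL} y_1 \xrightarrow{\RR} y_2 \xrightarrow{\LL} y_3 \xrightarrow{\RR} \cdots \xrightarrow{\RR} y_{2k} \xrightarrow{\LR} y_0,
\]
with exactly one LR-edge and no RL-edges.

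Next I would rule out left-to-right components with no LR-edge, using the LL/RR level-monotonicity: a cycle with no LR and no RL would form a strictly ascending level chain closing on itself (impossible), and a cycle with no LR but one RL has its minimum level at the source of the RL-edge, which is odd (so the cycle is right-to-left, not left-to-right). Hence only the shape above survives for a left-to-right component, and applying LL/RR monotonicity along it yields $y_0 < y_1 < y_2 < \cdots < y_{2k}$, confirming $y_0 = \min(C)$ is even. The claimed run order is then immediate: the LL-factors start at $(x_1,y_0),\dots,(x_1,y_{2k-2})$ in that order at the left border; the LR-factor starts at $(x_1,y_{2k})$, so it comes after all the LL-factors; its end location $(x_2,y_0)$ is visited strictly before $(x_2,y_1)$, so the LR-factor precedes every RR-factor; and the RR-factors themselves follow the order $y_1 < y_3 < \cdots < y_{2k-1}$ at the right border.

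The main obstacle I expect is the cycle-structure step: one must interlock all four parts of Definition~\ref{def:product} with the in/out-degree constraints of flows and realize that the LL-RR alternating path produced by idempotence is uniquely determined by the out-degree-$1$ constraint and therefore must coincide with the rest of the cycle; once that is in place, the monotonicity argument and the run-order bookkeeping are short.
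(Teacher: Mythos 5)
Your proof is correct, but it takes a genuinely different route from the paper's. The paper first establishes the auxiliary Lemma~\ref{lem:component} (the nodes of any component of a loop form a contiguous interval of levels, proved by a fairly delicate crossing-sequence bookkeeping) and then proves Lemma~\ref{lem:component2} by pumping: it assumes the first $(L,C)$-factor after the $\LR$-factor that is neither $\LL$ nor $\RR$ is an $\RL$-factor, inserts one extra copy of $L$, and shows that the $(k{+}1)$-th edge of the corresponding component changes shape, contradicting $E_L = E_L \odot E_L$. You instead extract the cycle structure purely algebraically from the identity $F_\LR = F_\LR \cdot (F_\LL \cdot F_\RR)^* \cdot F_\LR$ in Definition~\ref{def:product}: combined with the in/out-degree-$1$ constraint on flows, this forces the cycle through an $\LR$-edge $(y_1,y_2)$ to be exactly that edge plus the unique alternating $\LL$/$\RR$ path from $y_2$ back to $y_1$, and the level-monotonicity of $\LL$- and $\RR$-edges then gives $y_0<\dots<y_{2k}$ and the run order directly. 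Your route is more self-contained (it needs neither Lemma~\ref{lem:component} nor any reasoning about pumped runs) and makes more explicit where idempotence enters; the paper's route has the advantage that Lemma~\ref{lem:component} is reused as a separate structural fact elsewhere. Two small slips worth correcting in your write-up: in the no-$\LR$ case the minimum of the cycle sits at the \emph{target} of the unique $\RL$-edge, not its source (both are odd, so the conclusion is unaffected); and the symmetry you want for the ``resp.'' clause is the left$\leftrightarrow$right mirror of the interval (which swaps $\LL\leftrightarrow\RR$, $\LR\leftrightarrow\RL$ and flips level parity), not time-reversal, which swaps $\LR\leftrightarrow\RL$ but fixes $\LL$ and $\RR$ --- or, more safely, just rerun the argument verbatim starting from the $\RL$-identity in Definition~\ref{def:product}.
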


We also need to introduce the notions of 
\emph{anchor} (Def.~\ref{def:anchor}) 
and \emph{trace} (Def.~\ref{def:trace}).

\begin{definition}\label{def:anchor}
Let $C$ be a component of an idempotent loop $L = [x_1,x_2]$.
The \emph{anchor} of $C$ inside $L$, denoted%
\footnote{In denoting the anchor --- and similarly the trace --- of a component $C$ 
          inside a loop $L$, we omit the annotation specifying $L$, since this is 
          often understood from the context.}
$\an{C}$, is either the location $\big(x_1,\max(C)\big)$ or the location 
$\big(x_2,\max(C)\big)$, depending on whether $C$ is left-to-right or right-to-left.
\end{definition}

\noindent
Intuitively, the anchor $\an{C}$ of a component $C$ of $L$ is the source location 
of the unique $\LR$- or $\RL$-factor intercepted by $L$ that corresponds to an edge of $C$ 
(recall Lemma \ref{lem:component2}).

\begin{definition}\label{def:trace}
Let $C$ be a component of some idempotent loop $L$ and let
$(i_0,i_1),(i_1,i_2),\dots,(i_{k-1},i_k),(i_k,i_{k+1})$ be 
a cycle of $C$, where $i_0=i_{k+1}=\max(C)$. 
For every $j=0,\dots,k$, let $\beta_j$ be the factor intercepted by 
$L$ that corresponds to the edge $(i_j,i_{j+1})$ of $C$.
The \emph{trace} of $C$ inside $L$ is the run $\tr{C} = \beta_0 ~ \beta_1 ~ \cdots ~ \beta_k$
(note that this is not necessarily a factor of the original run $\rho$).
\end{definition}

\noindent
Intuitively, the trace $\tr{C}$ is obtained by concatenating the $(L,C)$-factors together, 
where the first factor is the (unique) $\LR$-/$\RL$-factor that starts at the anchor $\an{C}$ 
and the remaining ones are the $\LL$-factors interleaved with the $\RR$-factors.

For example, by referring again to the components $C_1,C_2,C_3$
of Fig.~\ref{fig:loop-trace}, we have the following traces:
$\tr{C_1}=\alpha_2\:\alpha_1\:\alpha_3$,
$\tr{C_2}=\beta_2\:\beta_1\:\beta_3$, and $\tr{C_3}=\gamma_1$. 

As shown by the following proposition (proved in the appendix), 
iterations of idempotent loops translate to iterations of traces $\tr{C}$ of components.

\begin{proposition}\label{prop:pumping}
Let $L$ be an idempotent loop of $\rho$ with components $C_1,\dots,C_k$, 
listed according to the order of their anchors:
$\an{C_1}\lesstime\cdots\lesstime\an{C_k}$. 
For all $m\in\bbN$, we have 
\[
  \pump_L^{m+1}(\rho) ~=~ 
  \rho_0 ~ \tr{C_1}^m ~ \rho_1 ~ \cdots ~ \rho_{k-1} ~ \tr{C_k}^m ~ \rho_k
\]
where 
\begin{itemize}
  \item $\rho_0$ is the prefix of $\rho$ that ends at  $\an{C_1}$, 
  \item $\rho_i$ is the factor $\rho[\an{C_i},\an{C_{i+1}}]$, for all $1\le i<k$,
  \item $\rho_k$ is the suffix of $\rho$ that starts at  $\an{C_k}$.
\end{itemize}
\end{proposition}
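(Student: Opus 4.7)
The plan is to prove the equality by induction on $m$, with Lemma~\ref{lem:component2} as the main structural tool. The base case $m=0$ is immediate: $\pump_L^{1}(\rho)=\rho$ coincides with $\rho_0\,\rho_1\cdots\rho_k$, because the anchors $\an{C_1}\lesstime\cdots\lesstime\an{C_k}$ are pairwise distinct locations of $\rho$ that split it into exactly the pieces $\rho_0,\ldots,\rho_k$, and each $\tr{C_i}^{0}$ is the empty run.

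For the inductive step, I pass from $\pump_L^{m+1}(\rho)$ to $\pump_L^{m+2}(\rho)$ by inserting one additional copy of $u[x_1+1,x_2]$ between two existing copies of $L$. Idempotence ensures $\rho|x_1=\rho|x_2$ and $F_L\circ F_L=F_L$, so the factors intercepted by the new copy glue consistently with the neighbouring ones and the longer pumped input still admits a run. The key point to verify is that this insertion adds exactly one extra occurrence of $\tr{C_i}$ at the anchor location $\an{C_i}$ of every component $C_i$, while leaving the pieces $\rho_j$ untouched. Fixing a left-to-right component $C$ with $2j+1$ edges, Lemma~\ref{lem:component2} describes the $(L,C)$-factors in the new copy: $j$ $\LL$-factors, one $\LR$-factor $\beta_0$ departing from level $\max(C)$, and $j$ $\RR$-factors, in that specific order. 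When the pumped run reaches the anchor of $C$ in the new copy, it follows $\beta_0$ and then traverses the remaining edges of $C$ in cyclic order, interleaving $\LL$- and $\RR$-factors via back-and-forth transitions across the new copy, thereby realising precisely $\tr{C}=\beta_0\,\beta_1\cdots\beta_{2j}$ before exiting at level $\max(C)$ on the right and resuming what was previously the $\LR$-factor of $C$ in the subsequent copy of $L$. A symmetric argument handles right-to-left components, and since different components have disjoint node sets, the inserted traces $\tr{C_i}$ never interleave with one another.

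The main obstacle is the bookkeeping of how the $\LR$-, $\LL$-, and $\RR$-factors from adjacent copies of $L$ chain together to reproduce the trace cycle in the correct order. The product formula of Definition~\ref{def:product} provides the required correspondence between edges of the composed flow and the segments of the run threading through the new copy, while the idempotence $F_L\circ F_L=F_L$ guarantees that components, anchors, and traces are all preserved each time an extra copy of $L$ is pumped in, thereby closing the induction.
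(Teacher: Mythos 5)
Your induction on $m$ is an alternative organization to the paper's direct proof. The paper works with the pumped run $\pump_L^{m+1}(\rho)$ all at once: for each copy $L'_i$ of $L$ and each component $C_j$, it identifies the anchor $\ell'_{i,j}$ of $C_j$ inside $L'_i$, then factorizes the pumped run at all these anchors simultaneously and shows, via Lemma~\ref{lem:component2}, that $\rho'[\ell'_{i,j},\ell'_{i+1,j}]$ is isomorphic to $\tr{C_j}$ and that the remaining pieces are isomorphic to the $\rho_j$'s. Your base case is fine and you invoke the right structural lemma, but your inductive step --- pumping one additional copy inserts exactly one more $\tr{C_j}$ at each anchor while leaving the $\rho_j$'s intact --- is essentially the $m=1$ instance of the whole statement, so the induction does not actually save work; and that one step is precisely where your account stays informal. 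Two points deserve to be pinned down before the step is complete. First, a single occurrence of $\tr{C_j}$ in the pumped run is not assembled from factors intercepted by the new copy alone; it alternates between the new copy and its neighbour: in Fig.~\ref{fig:loop-trace}, the first occurrence of $\tr{C_1}=\alpha_2\,\alpha_1\,\alpha_3$ takes $\alpha_2$ and $\alpha_3$ from the leftmost copy of $L$ but $\alpha_1$ from the second copy. So ``interleaving $\LL$- and $\RR$-factors via back-and-forth transitions across the new copy'' slightly misdescribes the chaining, which is the actual content to establish. Second, you implicitly rely on being able to realize $\pump_L^{m+2}(\rho)$ by pumping a single copy of $L$ inside $\pump_L^{m+1}(\rho)$; this commutation of pumping operations is true because the inserted copies are identical and idempotence preserves the effect, but it deserves a sentence. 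In short, your route is workable, but the paper's direct tracking of the anchors $\ell'_{i,j}$ is cleaner because it exhibits the chaining once for all $m$ rather than re-embedding the run at each inductive step.
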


For example, referring to the left hand-side of Fig.~\ref{fig:loop-trace}, 
the run $\r_0$ goes until the first location marked by a black dot. 
The run $\r_1$ and $\r_2$, resp., are between the first and the second 
black dot, and the second and third black dot. 
Finally, $\r_3$ is the suffix starting at the last black dot.
The pumped run $\pump_L^{m+1}(\rho)$ for $m=2$ is depicted to 
the right of Fig.~\ref{fig:loop-trace}.

\paragraph{Ramsey-type arguments.}
We conclude the section by describing a technique that 
can be used for bounding the length of the outputs produced 
by factors of the run $\rho$. 
This technique is based on Ramsey-type arguments 
and relies on Simon's ``factorization forest''
theorem~\cite{factorization_forests,factorization_forests_for_words_paper},
which we recall below.

Let $X$ be a set of positions of $\rho$.
A \emph{factorization forest} for $X$ is an unranked tree, where the nodes are 
intervals $I$ with endpoints in $X$, labelled with the corresponding effect $E_I$, 
the ancestor relation is given by the containment order on intervals, the leaves are the 
minimal intervals $[x_1,x_2]$, with $x_2$ successor of $x_1$ in $X$, and for every 
internal node $I$ with children $J_1,\dots,J_k$, we have:
\begin{itemize}
  \item $I=J_1\cup\dots\cup J_k$, 
  \item $E_I = E_{J_1}\odot\dots\odot E_{J_k}$, 
  \item if $k>2$, then $E_I = E_{J_1} = \dots = E_{J_k}$ 
        is an idempotent of the semigroup $(\cE,\odot)$.
\end{itemize}

We will make use of the following three constants defined from the transducer $\cT$:
the maximum number $\cmax$ of letters output by a single transition, 
the maximal length $\hmax = 2|Q|-1$ of a crossing sequence, and the
maximal size $\emax = (2|Q|)^{2\hmax}$ of the effect semigroup
$(\cE,\odot)$. 
By ${\boldsymbol{B = \cmax \cdot \hmax \cdot (2^{3\emax}+4)}}$ 
we will denote the main constant appearing in all subsequent sections.

\begin{theorem}%
[Factorization forest theorem 
 \cite{factorization_forests,factorization_forests_for_words_paper}]%
\label{th:simon}
For every set $X$ of positions of $\rho$, there is a factorization forest for $X$
of height at most $3\emax$.
\end{theorem}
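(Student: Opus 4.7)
The plan is to reduce the statement to Simon's classical factorization forest theorem, applied to the effect semigroup $(\cE,\odot)$. The paper has already observed that, for two consecutive intervals $I,J$ of $\rho$, one has $E_{I\cup J} = E_I \odot E_J$, which is exactly what is needed to transport Simon's theorem into this setting.

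Concretely, I would enumerate the positions of $X$ as $x_0 < x_1 < \dots < x_n$, and identify each non-trivial interval $[x_i,x_j]$ (with $i<j$) with the word $I_{i+1}\cdots I_j$ over the alphabet of minimal intervals $I_k = [x_{k-1},x_k]$. The assignment $I_{i+1}\cdots I_j \;\mapsto\; E_{[x_i,x_j]}$ extends to a semigroup homomorphism $\phi$ from the free semigroup on these letters into $(\cE,\odot)$, precisely because of the compositionality of effects. Applying Simon's theorem to the word $I_1\cdots I_n$ and the homomorphism $\phi$ yields a factorization forest of height at most $3\emax$, whose nodes correspond to subwords $I_{i+1}\cdots I_j$, i.e.~to sub-intervals of $[x_0,x_n]$. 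The three conditions in the paper's definition of factorization forest --- leaves being minimal intervals of $X$, compositionality of effect labels on children, and idempotency at internal nodes of degree greater than two --- are then direct translations of the conditions satisfied by Simon's forest.

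The main obstacle, were Simon's theorem not available off the shelf, would be to reproduce its inductive proof. The standard approach organizes the forest by identifying a $\mathcal{J}$-class of $(\cE,\odot)$ visited by the middle portion of the factorization, and then either chopping off a prefix or a suffix that leaves that $\mathcal{J}$-class, or iteratively grouping infixes whose image equals a fixed idempotent inside it. Each such step contributes $O(|\cE|)$ to the height, giving the $3\emax$ bound after the modern refinements of Simon's original argument. Since this argument is entirely generic and already established in the cited references, the proposed strategy is simply to invoke it via the homomorphism $\phi$ described above; no further work specific to two-way transducers is needed beyond checking that $\phi$ is indeed a homomorphism, which follows from the identity $E_{I\cup J} = E_I \odot E_J$ already recorded in the paper.
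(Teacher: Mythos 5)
Your proposal is correct and takes exactly the route the paper intends: the statement is a direct instance of Simon's factorization forest theorem applied to the finite semigroup $(\cE,\odot)$ via the homomorphism $I \mapsto E_I$, whose multiplicativity is precisely the identity $E_{I\cup J} = E_I \odot E_J$ recorded in Section~\ref{sec:effects}. The paper simply cites \cite{factorization_forests,factorization_forests_for_words_paper} rather than spelling this out, so your proof fills in the translation faithfully; the only small point worth making explicit is that since $\rho$ is a valid run, none of the effects $E_I$ equals the dummy element $\bot$, so the forest one obtains never carries the absorbing label (this is used later in the proof of Theorem~\ref{thm:simon2}).
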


It is easy to use the above theorem to show that every run 
that produces an output longer than $\bound$ contains an idempotent 
loop with non-empty output.  Below, we present a result 
in the same spirit, but refined in a way that it can be used to 
find anchors of components of loops inside specific intervals.

In order to state it formally, we need to consider subsequences 
of $\rho$ induced by sets of locations that are not necessarily intervals.
Recall that $\rho[\ell_1,\ell_2]$ denotes the factor of $\rho$ delimited 
by two locations $\ell_1\leqtime\ell_2$. Similarly, given any set $Z$ of 
(possibly non-consecutive) locations, we denote by $\rho\mid Z$ the
subsequence of $\rho$ induced by $Z$. 
\begin{wrapfigure}{r}{3.3cm}
\vspace{-1mm}\hspace{-2mm}
\scalebox{0.85}{%
\begin{tikzpicture}[baseline=0, inner sep=0, outer sep=0, minimum size=0pt, xscale=0.35, yscale=0.4]
  \tikzstyle{dot} = [draw, circle, fill=white, minimum size=4pt]
  \tikzstyle{fulldot} = [draw, circle, fill=black, minimum size=4pt]
  \tikzstyle{factor} = [->, shorten >=1pt, dotted, rounded corners=5]
  \tikzstyle{fullfactor} = [->, >=stealth, shorten >=1pt, very thick, rounded corners=5]

  \fill [pattern=north east lines, pattern color=gray!25]
        (6,-0.75) rectangle (12,6.75);
  \draw [dashed, thin, gray] (6,-0.75) -- (6,6.75);
  \draw [dashed, thin, gray] (12,-0.75) -- (12,6.75);
  \draw [gray] (6,-1) -- (6,-1.25) -- (12,-1.25) -- (12,-1);
  \draw [gray] (9,-1.5) node [below] {\footnotesize $I=[x_1,x_2]$};

  \draw (3,0) node (node0) {};
  \draw (4.5,0) node [fulldot] (node0') {};
  \draw (6,0) node [dot] (node1) {};
  \draw (8,0) node (node2) {};
  \draw (8,1) node (node3) {};
  \draw (6,1) node [dot] (node4) {};
  \draw (4,1) node (node5) {};
  \draw (4,2) node (node6) {};
  \draw (6,2) node [dot] (node7) {};
  \draw (12,2) node [dot] (node8) {};
  \draw (13,2) node (node9) {};
  \draw (13,3) node (node10) {};
  \draw (12,3) node [dot] (node11) {};
  \draw (10,3) node (node12) {};
  \draw (10,4) node (node13) {};
  \draw (12,4) node [dot](node14) {};
  \draw (14,4) node (node15) {};
  \draw (14,5) node (node16) {};
  \draw (12,5) node [dot] (node17) {};
  \draw (6,5) node [dot] (node18) {};
  \draw (3,5) node (node19) {};
  \draw (3,6) node (node20) {};
  \draw (4,6) node [fulldot] (node20') {};
  \draw (6,6) node [dot] (node21) {};
  \draw (12,6) node [dot] (node22) {};
  \draw (14,6) node (node23) {};

  \draw [factor] (node0) -- (node0');
  \draw [factor] (node0') -- (node1);
  \draw [fullfactor] (node1) -- (node2.center) -- (node3.center) -- (node4); 
  \draw [factor] (node4) -- (node5.center) -- (node6.center) -- (node7); 
  \draw [fullfactor] (node7) -- (node8);
  \draw [factor] (node8) -- (node9.center) -- (node10.center) -- (node11); 
  \draw [fullfactor] (node11) -- (node12.center) -- (node13.center) -- (node14); 
  \draw [factor] (node14) -- (node15.center) -- (node16.center) -- (node17); 
  \draw [fullfactor] (node17) -- (node18);
  \draw [factor] (node18) -- (node19.center) -- (node20.center) -- (node20'); 
  \draw [factor] (node20') -- (node21); 
  \draw [factor] (node21) -- (node22);
  \draw [factor] (node22) -- (node23);
  
  \draw (node0') node [below = 1.2mm] {$\ell_1~~$};
  \draw (node20') node [above right = 1mm] {$\ell_2$};
\end{tikzpicture}
}
\hspace{-3mm}\vspace{-3mm}
\end{wrapfigure} 
A transition of $\rho\mid Z$ is a transition from 
some $\ell$ to $\ell'$, where
both $\ell,\ell'$ belong to $Z$. The
output $\out{\rho\mid Z}$ is the concatenation of the outputs of the
transitions of $\rho\mid Z$ (in the order given by $\r$).
An example of subrun $\rho\mid Z$ is represented 
by the thick arrows in the figure to the right, where
$Z=[\ell_1,\ell_2]\cap (I\times\bbN)$.

\begin{theorem}\label{thm:simon2}
Let $I=[x_1,x_2]$ be an interval of positions, $K=[\ell_1,\ell_2]$ 
an interval of locations, and $Z = K \:\cap\: (I\times\bbN)$.
If $\big|\out{\rho\mid Z}\big| > \bound$, 
then there exist an idempotent loop $L$ and a component $C$ of $L$ such that
\begin{itemize}
  \item $x_1 < \min(L) < \max(L) < x_2$ (in particular, $L\subsetneq I$),
  \item $\ell_1\lesstime\an{C}\lesstime\ell_2$ (in particular, $\an{C} \in K$),
  \item $\out{\tr{C}} \neq \emptystr$.
\end{itemize}
\end{theorem}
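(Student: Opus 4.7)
The plan is to invoke Simon's factorization forest theorem on the positions of $I$, use a pigeonhole argument to locate an idempotent node carrying enough output, and extract from it the desired loop $L$ and component $C$; the hard part will be forcing the anchor $\an{C}$ into the run-order interval $K$. Concretely, I would apply Theorem~\ref{th:simon} to the set $X=\{x_1,x_1{+}1,\ldots,x_2\}$ of positions of $I$ to obtain a factorization forest $\mathcal{F}$ of height at most $3\emax$ whose leaves are the singleton intervals $[x,x{+}1]$. For each node $I'$ of $\mathcal{F}$, I would define the weight $\omega(I')$ to be the length of the output produced by those transitions of $\rho\mid Z$ that cross a position inside $I'$. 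This weight is additive across children, equals $|\out{\rho\mid Z}|$ at the root, and satisfies $\omega(\text{leaf})\le\cmax\cdot\hmax$ at every leaf, since at most $\hmax$ transitions cross a given position and each outputs at most $\cmax$ letters. The hypothesis $|\out{\rho\mid Z}|>\cmax\hmax(2^{3\emax}+4)$ thus forces the set $P$ of positions of $I$ with nonempty contribution to have more than $2^{3\emax}+4$ elements.

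Next I would locate an idempotent node of $\mathcal{F}$ by a pigeonhole on the minimal subtree $\mathcal{F}'$ of $\mathcal{F}$ spanned by the $P$-leaves: $\mathcal{F}'$ has height at most $3\emax$ and strictly more than $2^{3\emax}$ leaves, so some internal node must branch into more than two $P$-carrying subtrees. By the definition of factorization forest, any internal node with more than two children is idempotent, so the corresponding node $I^{*}$ of $\mathcal{F}$ is an idempotent node with children $J_1,\ldots,J_k$ of common idempotent effect $E$, at least three of which (say $J_{i_1},J_{i_2},J_{i_3}$ with $i_1<i_2<i_3$) each contain a position of $P$. The $+4$ slack in $\bound$ ensures that, even after discarding a couple of extremal $P$-positions for the boundary considerations below, three $P$-carrying indices still remain.

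I would then define $L=J_{i_1}\cup J_{i_1+1}\cup\cdots\cup J_{i_3}$: since all children of $I^{*}$ share the idempotent effect $E$, any contiguous union is again an idempotent loop with effect $E$, so $E_L=E=E_L\odot E_L$. If $I^{*}\subsetneq I$ then $L\subsetneq I$ automatically; otherwise $I^{*}=I$, and I would trim $i_1$ or $i_3$ away from the extremes $1$ and $k$ (made possible by the $+4$ slack) to secure $x_1<\min(L)<\max(L)<x_2$. Picking a transition $t$ of $\rho\mid Z$ contributing to $\omega(J_{i_2})$, it lies in some $(L,C)$-factor $\alpha$ of a component $C$ of $L$ with $\out{\alpha}\neq\emptystr$; hence $\out{\tr{C}}\neq\emptystr$.

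The main obstacle is to force $\ell_1\lesstime\an{C}\lesstime\ell_2$. The endpoints of $t$ lie in $Z\subseteq K$, so $\alpha$ is traversed by $\rho$ during the interval $K$, but the anchor of $C$ is by Definition~\ref{def:anchor} the source of the unique $\LR$- or $\RL$-factor of $C$ in $L$, which may be a different factor from $\alpha$. By Lemma~\ref{lem:component2}, the $(L,C)$-factors of a left-to-right component appear in $\rho$ in the fixed pattern $\LL^{*}\cdot\LR\cdot\RR^{*}$ (symmetrically for right-to-left), so the $\LR$-/$\RL$-factor of $C$ is sandwiched between the other $(L,C)$-factors in run order. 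The idea is to exploit the freedom to slide the indices $i_1$ or $i_3$ by one within the middle segment of $P$-carrying children of $I^{*}$: each such shift moves $\min(L)$ or $\max(L)$, and hence the anchor of $C$, while preserving the flow $F_L$, the component structure, and the strict containment $L\subsetneq I$. A short case analysis on the orientation of $C$ and on which $(L,C)$-factor $\alpha$ is, combined with Proposition~\ref{prop:pumping} which orders the anchors of distinct components, should then yield a choice of $L$ for which the $\LR$-/$\RL$-factor of $C$ itself falls strictly between $\ell_1$ and $\ell_2$ in $\rho$, placing $\an{C}$ in $K$ and closing the proof.
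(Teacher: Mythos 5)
Your opening steps are broadly in the right spirit — Simon's factorization forest, a pigeonhole on the contribution, an idempotent node with at least three contributing children, a loop with nonempty trace output — but the proof has a genuine gap exactly where you flag it, and the sliding device you propose does not close it. The obstacle is not merely that you need to relocate $L$: once $L$ is chosen, the component $C$ of $L$ carrying the rich transition $t$ has an anchor that is the \emph{source of the unique $\LR$/$\RL$-factor of $C$}, and nothing in your construction forces that factor (let alone its source) to lie between $\ell_1$ and $\ell_2$ in run order. Shifting $i_1$ or $i_3$ by one child changes the crossing boundary and hence which intercepted factor is the $\LR$/$\RL$-factor; the anchor can jump discontinuously under such a shift, and Proposition~\ref{prop:pumping} only orders anchors of distinct components inside a fixed loop, so it provides no leverage here. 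As stated, your proof never establishes $\ell_1\lesstime\an{C}\lesstime\ell_2$.

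The missing idea is a preliminary pigeonhole on \emph{levels}, not positions. The paper first trims $Z$ to $Z'=Z\setminus\{\ell_1,\ell_2\}\setminus(\{x_1,x_2\}\times\bbN)$ (losing at most $4\cmax\hmax$ of the output, which is exactly what the $+4$ in $\bound$ is for), then defines, for each level $y$, the set $X_y$ of positions that are sources of output-producing transitions of $\rho\mid Z'$ at level $y$, and pigeonholes to find a level $y$ with $|X_y|>2^{3\emax}$. It then applies Theorem~\ref{th:simon} to $X_y$ (not to all of $I$) to obtain two \emph{consecutive} idempotent loops $L_1=[x,x']$, $L_2=[x',x'']$ with equal effect and endpoints $x,x',x''\in X_y$. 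This restriction is the crux: the locations $(x,y)$ and $(x'',y)$ are by construction inside $Z'$, hence strictly inside $K$, and the factor of $\rho[(x,y),(x'',y)]$ intercepted by $L_1\cup L_2$ that crosses from $x$ to $x''$ is an $\LR$-factor lying entirely in $Z'$; its source is the anchor of $C$ inside $L_1$ and it visits the anchor of $C$ inside $L_2$, so both anchors are in $K\setminus\{\ell_1,\ell_2\}$. One of $(L_1,C)$ or $(L_2,C)$ then has nonempty trace output, using Lemma~\ref{lem:adjacent-factors} to see that all factors of $\rho[(x,y),(x'',y)]$ intercepted by $L_1\cup L_2$ belong to the same component. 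Your plan, which applies the forest to all positions of $I$ and only afterwards worries about the anchor, cannot reproduce this guarantee: restrict the forest to a well-chosen $X_y$ first.

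Two smaller remarks. Your weight $\omega$ on forest nodes ``transitions crossing a position inside $I'$'' needs a disjointness convention at shared boundaries for the claimed additivity to hold; it is simpler to count positions rather than output directly, as the paper does. And your justification that the minimal subtree $\mathcal F'$ spanned by $P$-leaves has a branching node is fine, but note that what you really need is $\ge 3$ children \emph{in $\mathcal F$} (so that the idempotence clause of factorization forests applies); having $\ge 3$ $P$-carrying children in $\mathcal F$ does deliver this, so the step is correct but worth stating carefully.
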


\section{Inversions and periods}\label{sec:inversions}

As suggested by Examples~\ref{ex:one-way-definability} and~\ref{ex:running}, 
a typical phenomenon that may prevent a transducer from being one-way definable
is that of an \emph{inversion}. An inversion essentially corresponds to
a long output produced from right to left.
The main result in this section is
Proposition~\ref{prop:inversion}, that shows that the output produced
between the locations delimiting an inversion must be periodic, with 
bounded period.

\begin{definition}\label{def:inversion}
An \emph{inversion} of $\r$ is a tuple $(L_1,C_1,L_2,C_2)$ such that
\begin{itemize}
  \item $L_i$ is an idempotent loop, for both $i=1,2$,
  \item $C_i$ is a component of $L_i$, for both $i=1,2$,
  \item $\an{C_1} \leqtime \an{C_2}$,
  \item $\an{C_i}=(x_i,y_i)$, for both $i=1,2$, and $x_1\ge x_2$,
  \item both $\out{\tr{C_1}}$ and $\out{\tr{C_2}}$ are non-empty.
\end{itemize}
\end{definition}

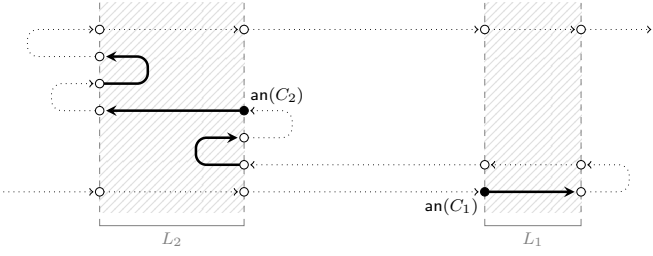
\begin{figure}[!t]
\centering
\scalebox{0.8}{%
\begin{tikzpicture}[baseline=0, inner sep=0, outer sep=0, minimum size=0pt, xscale=0.4, yscale=0.45]
  \tikzstyle{dot} = [draw, circle, fill=white, minimum size=4pt]
  \tikzstyle{fulldot} = [draw, circle, fill=black, minimum size=4pt]
  \tikzstyle{factor} = [->, shorten >=1pt, dotted, rounded corners=5]
  \tikzstyle{fullfactor} = [->, >=stealth, shorten >=1pt, very thick, rounded corners=5]

  \fill [pattern=north east lines, pattern color=gray!25]
        (4,-0.75) rectangle (10,7);
  \draw [dashed, thin, gray] (4,-0.75) -- (4,7);
  \draw [dashed, thin, gray] (10,-0.75) -- (10,7);
  \draw [gray] (4,-1) -- (4,-1.25) -- (10,-1.25) -- (10,-1);
  \draw [gray] (7,-1.5) node [below] {\footnotesize $L_2$};

  \fill [pattern=north east lines, pattern color=gray!25]
        (20,-0.75) rectangle (24,7);
  \draw [dashed, thin, gray] (20,-0.75) -- (20,7);
  \draw [dashed, thin, gray] (24,-0.75) -- (24,7);
  \draw [gray] (20,-1) -- (20,-1.25) -- (24,-1.25) -- (24,-1);
  \draw [gray] (22,-1.5) node [below] {\footnotesize $L_1$};

  \draw (0,0) node (node0) {};
  \draw (4,0) node [dot] (node1) {};
  \draw (10,0) node [dot] (node2) {};
  \draw (20,0) node [fulldot] (node3) {};

  \draw (24,0) node [dot] (node4) {};
  \draw (26,0) node (node5) {};
  \draw (26,1) node (node6) {};
  \draw (24,1) node [dot] (node7) {};
  \draw (20,1) node [dot] (node8) {};

  \draw (10,1) node [dot] (node9) {};
  \draw (8,1) node (node10) {};
  \draw (8,2) node (node11) {};
  \draw (10,2) node [dot] (node12) {};
  \draw (12,2) node (node13) {};
  \draw (12,3) node (node14) {};
  \draw (10,3) node [fulldot] (node15) {};
  \draw (4,3) node [dot] (node16) {};
  \draw (2,3) node (node17) {};
  \draw (2,4) node (node18) {};
  \draw (4,4) node [dot] (node19) {};
  \draw (6,4) node (node20) {};
  \draw (6,5) node (node21) {};
  \draw (4,5) node [dot] (node22) {};

  \draw (1,5) node (node23) {};
  \draw (1,6) node (node24) {};
  \draw (4,6) node [dot] (node25) {};
  \draw (10,6) node [dot] (node26) {};
  \draw (20,6) node [dot] (node27) {};
  \draw (24,6) node [dot] (node28) {};
  \draw (27,6) node (node29) {};

  \draw [factor] (node0) -- (node1);
  \draw [factor] (node1) -- (node2);
  \draw [factor] (node2) -- (node3);
  \draw [fullfactor] (node3) -- (node4);

  \draw [factor] (node4) -- (node5.center) -- (node6.center) -- (node7);
  \draw [factor] (node7) -- (node8);
  \draw [factor] (node8) -- (node9);
  \draw [fullfactor] (node9) -- (node10.center) -- (node11.center) -- (node12);
  \draw [factor] (node12) -- (node13.center) -- (node14.center) -- (node15);
  \draw [fullfactor] (node15) -- (node16);
  \draw [factor] (node16) -- (node17.center) -- (node18.center) -- (node19);
  \draw [fullfactor] (node19) -- (node20.center) -- (node21.center) -- (node22);
  
  \draw [factor] (node22) -- (node23.center) -- (node24.center) -- (node25);
  \draw [factor] (node25) -- (node26);
  \draw [factor] (node26) -- (node27);
  \draw [factor] (node27) -- (node28);
  \draw [factor] (node28) -- (node29);
  
  \draw (node3) node [below left=1.5mm] {\footnotesize $\an{C_1}$};
  \draw (node15) node [above right=1.5mm] {\footnotesize $\an{C_2}$};
\end{tikzpicture}
}
\caption{An inversion with components intercepting the highlighted factors.}\label{fig:inversion}
\end{figure}

\noindent
Fig.~\ref{fig:inversion} gives an example of an inversion involving
the loop $L_1$ with its first component and the loop $L_2$ with its
second component (we highlighted the anchors and the factors
corresponding to these components).

\begin{definition}\label{def:period}
A word $w=a_1 \cdots a_n$ has \emph{period} $p$ if $a_i = a_{i+p}$ 
for all pairs of positions $i,i+p$ of $w$. 
\end{definition}

\noindent
For example, $w = abc\, abc\, ab$ has period $3$. 

One-way definability of functional two-way transducers essentially 
amounts to showing that the output produced by every inversion has bounded
period.
The proposition below shows a slightly stronger periodicity property, 
which refers to the output produced inside the inversion extended on 
both sides by the trace outputs.
We will need this stronger property later, when dealing with 
overlapping portions of the run delimited by different inversions.  

\begin{proposition}\label{prop:inversion}
If $\cT$ is one-way definable, then for every inversion $(L_1,C_1,L_2,C_2)$ 
of a successful run $\rho$ of $\cT$, the word
\[
  \outb{\tr{C_1}} ~ \outb{\rho[\an{C_1},\an{C_2}]} ~ \outb{\tr{C_2}}
\]
has period $p$ that divides both $|\out{\tr{C_1}}|$ and $|\out{\tr{C_2}}|$.
Moreover, $p \le \bound$.
\end{proposition}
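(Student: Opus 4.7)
The plan is to exploit the assumed one-way transducer $\cT'$ equivalent to $\cT$, pumping both idempotent loops $L_1,L_2$ in parallel and reading periodicity off $\cT'$'s finite-state computation. Write $A=\out{\tr{C_1}}$, $B=\out{\tr{C_2}}$, and $X=\out{\rho[\an{C_1},\an{C_2}]}$. I would handle the main case $\max(L_2)\le\min(L_1)$, reducing the overlapping case by a preliminary choice of disjoint idempotent sub-loops with non-empty trace outputs. Two applications of Proposition~\ref{prop:pumping} then yield
\[
\out{\pump_{L_1}^{m+1}\bigl(\pump_{L_2}^{n+1}(\rho)\bigr)} ~=~ W(m,n) ~:=~ V_0\cdot A^{m+1}\cdot X\cdot B^{n+1}\cdot V_2
\]
for fixed words $V_0,V_2$ and all $m,n\ge 0$.

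Next, I would analyse $\cT'$ on these pumped inputs, viewed as $\phi\cdot\kappa^{n+1}\cdot\mu\cdot\lambda^{m+1}\cdot\psi$ with $\kappa,\lambda$ the bodies of $L_2,L_1$. A pigeonhole on the states of $\cT'$ reached while scanning $\kappa^{n+1}$ and $\lambda^{m+1}$ produces integers $r_1,r_2>0$, thresholds $m'_1,n'_2$, and fixed words $\alpha_0,u_2,\gamma_0,u_1$ such that: for all $k\ge 0$, reading $\phi\cdot\kappa^{n'_2+kr_2}$ leaves $\cT'$ in a fixed state $q$ with output $\alpha_0\cdot u_2^k$; symmetrically, from a fixed state, reading $\lambda^{m'_1+lr_1}\cdot\psi$ produces an output suffix $u_1^l\cdot\gamma_0$. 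Matching lengths with $W(m,n)$ forces $|u_2|=r_2\,|B|$ and $|u_1|=r_1\,|A|$.

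The main step, and the main obstacle, is propagating these per-cycle periodicities across the central region $A^{m+1}\cdot X\cdot B^{n+1}$; this is where the inversion hypothesis $x_1\ge x_2$ is essential. Fixing $m$ large and letting $k$ vary, the window of $W(m,n'_2+kr_2)$ at positions $|\alpha_0|+1,\ldots,|\alpha_0|+k|u_2|$ equals $u_2^k$ on one hand, and is a substring of $V_0\cdot A^{m+1}\cdot X\cdot B^{n'_2+kr_2+1}\cdot V_2$ on the other. For $k$ sufficiently large, one copy of $u_2$ inside $u_2^k$ lies entirely in the $B^{\ldots}$-region, so $u_2$ is a factor of $B^{\infty}$ and has period $|B|$; the whole window of length $k|u_2|$ then has period $|B|$. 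Because $\cT'$ reads the $L_2$-pumped portion \emph{before} the $L_1$-pumped portion (by $x_2\le x_1$), the left endpoint $|\alpha_0|$ is bounded above independently of $m$, so for large $m$ the window covers a long suffix of $V_0\cdot A^{m+1}\cdot X$; combining this period $|B|$ with the trivial period $|A|$ of $A^{m+1}$ and invoking Fine and Wilf gives a common period $\gcd(|A|,|B|)$ on a long portion of $A^{m+1}\cdot X$. A symmetric pumping argument for $L_1$ yields the analogous period on a long portion of $X\cdot B^{n+1}$.

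Combining the two gives, for large $m,n$, a common period $p_0=\gcd(|A|,|B|)$ on $A^{m+1}\cdot X\cdot B^{n+1}$, and a fortiori on $A\cdot X\cdot B$, which proves the first conclusion with $p:=p_0$ dividing both $|A|$ and $|B|$. For the bound $p\le\bound$, note that $p\le\min(|A|,|B|)$; if both $|A|$ and $|B|$ exceed $\bound$, I would apply Theorem~\ref{thm:simon2} to the interval $L_1$ (resp.\ $L_2$), taking $K$ to be the set of locations of $\tr{C_1}$ (resp.\ $\tr{C_2}$), to produce a strictly smaller idempotent sub-loop with non-empty trace output of length at most $\bound$; this sub-loop may replace $L_1$ (resp.\ $L_2$) in the inversion, and the period thereby obtained divides a trace output of length at most $\bound$. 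The hardest point is the alignment in the third paragraph: one must verify that the inversion inequality $x_1\ge x_2$ forces $\cT'$ to commit part of its final output corresponding to the $A^{m+1}$-region \emph{before} reading the $L_1$-block of input, so that the growing window $u_2^k$ actually straddles the boundary between the $A$- and $B$-regions of $W(m,n)$.
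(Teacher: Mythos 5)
The central claim $\out{\pump_{L_1}^{m+1}(\pump_{L_2}^{n+1}(\rho))} = V_0\cdot A^{m+1}\cdot X\cdot B^{n+1}\cdot V_2$ with \emph{fixed} $V_0,X,V_2$ does not follow from Proposition~\ref{prop:pumping}. An idempotent loop has, in general, several components, and pumping iterates the traces of \emph{all} of them; the portions of the output before $\an{C_1}$, between $\an{C_1}$ and $\an{C_2}$, and after $\an{C_2}$ therefore contain $m$- and $n$-dependent powers of the other components' trace outputs. The paper explicitly writes $v_0^{(m_1,m_2)},v_2^{(m_1,m_2)},v_4^{(m_1,m_2)}$, states Lemma~\ref{lemma:one-way-vs-two-way} so that these middle words may contain factors $v^{m_1},v^{m_2}$, and then needs Theorem~\ref{thm:saarela} to push the resulting periodicity from large $m_1,m_2$ back to $m_1=m_2=0$. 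Your ``a fortiori on $A\cdot X\cdot B$'' step works only because $X$ is assumed fixed; with parameter-dependent $X$ the transfer is a genuine issue and a Saarela-type argument is unavoidable. Your reduction of the overlapping case via ``a preliminary choice of disjoint idempotent sub-loops with non-empty trace outputs'' is also unjustified: it is not clear such sub-loops exist with a component whose trace output is non-empty and whose anchor is still correctly ordered, whereas pumping $L_1$ to three copies and taking the rightmost one gives a loop disjoint from $L_2$ with literally the same component and trace output.

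The bound $p\le\bound$ is not established either. Theorem~\ref{thm:simon2} produces a loop and component with \emph{non-empty} trace output, not one of length at most $\bound$; for the latter you need an output-minimal pair (Definition~\ref{def:output-minimal}) and then Lemma~\ref{lem:output-minimal}. More importantly, after shrinking $L_1$ (or $L_2$) to a sub-loop $L_0$ with component $C_0$, you must still exhibit an inversion involving $(L_0,C_0)$, which requires controlling both the run order of $\an{C_0}$ relative to $\an{C_2}$ (or $\an{C_1}$) and the position order of the anchors. The paper does this by pumping to $\rho^{(2,1)}$, finding an output-minimal $(L_0,C_0)$ inside the leftmost copy of $L_1$, and arguing by cases that either $(L_0,C_0,L_2,C_2)$ or (rightmost copy of $L_1$, $C_1$, $L_0$, $C_0$) is an inversion; the resulting period then divides $|\out{\tr{C_0}}|\le\bound$, and Theorem~\ref{thm:saarela} transports this bound back to the unpumped run. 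Your sketch omits all three of these steps.
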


The basic combinatorial argument for proving
Proposition~\ref{prop:inversion} is a classical result in word
combinatorics called Fine and Wilf's theorem~\cite{fine-wilf}.
Essentially, the theorem says that, whenever two periodic words
$w_1,w_2$ share a sufficiently long factor, then they have as period
the greatest common divisor of the two original periods.  Below, we
state a slightly stronger variant of Fine-Wilf's theorem, which
contains an additional claim showing how to align a common factor
of the words $w_1,w_2$ so as to form a third word $w_3$ that contains
a prefix of $w_1$ and a suffix of $w_2$.  The additional claim will be
fully exploited in the proof of
Proposition~\ref{prop:block-periodicity}.

\begin{lemma}[Fine-Wilf's theorem]\label{lemma:fine-wilf}
If $w_1 = w'_1\,w\:w''_1$ has period $p_1$, 
$w_2 = w'_2\,w\,w''_2$ has  period $p_2$, and 
the common factor $w$ has length at least $p_1+p_2-\gcd(p_1,p_2)$, 
then $w_1$, $w_2$, and $w_3 = w'_1\,w\,w''_2$ have period $\gcd(p_1,p_2)$.
\end{lemma}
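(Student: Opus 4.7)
The plan is a three-stage argument: extract a fine period $d := \gcd(p_1,p_2)$ on the common factor $w$ via classical Fine-Wilf, propagate it outward to $w_1$ and $w_2$ using that $d$ divides both $p_1$ and $p_2$, and finally glue to obtain $w_3$.

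First, $w$ sits as a factor of both $w_1$ and $w_2$, hence inherits periods $p_1$ and $p_2$; since $|w| \geq p_1+p_2-d$, the classical Fine-Wilf theorem gives period $d$ on $w$. I would prove the classical statement by induction on $p_1+p_2$. Assuming $p_1 \leq p_2$, first argue that $w$ also admits period $p_2-p_1$ by combining, for each valid index $i$, a forward $p_1$-shift and a backward $p_2$-shift (or vice versa, depending on whether $i$ is near the left or right boundary); the length hypothesis $|w| \geq p_1+p_2-d$ is exactly what is needed for at least one of the two chains of shifts to stay inside $w$. Then invoke the induction hypothesis on the pair $(p_1,\, p_2-p_1)$, whose sum is strictly smaller and whose gcd is still $d$; the base case $p_1 = p_2$ is immediate.

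Second, to lift the $d$-period from $w$ to the whole word $w_1$, I would use a short extension lemma: if a word $x$ has period $p$, contains a factor $y$ with $|y|\geq p$, and $y$ has period $d\mid p$, then $x$ has period $d$. The point is that $|x|\geq|y|\geq p$ makes the $p$-period of $x$ collapse the value $x[i]$ to a function of $i \bmod p$ alone (chains of $\pm p$-shifts stay in range), and the $d$-period on a full $p$-window $y$ then refines this to dependence on $i \bmod d$. Since $|w| \geq p_1+p_2-d \geq p_1$ and $d \mid p_1$, applying the lemma to $(x,y)=(w_1,w)$ with $p = p_1$ yields period $d$ on $w_1$, and symmetrically on $w_2$.

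Third, decompose $w_3 = w'_1\,w\,w''_2$ into its prefix $w'_1 w$ (a prefix of $w_1$, hence of period $d$) and its suffix $w\,w''_2$ (a suffix of $w_2$, hence of period $d$). For any index $i$ of $w_3$ with $i+d<|w_3|$, the pair $\{i, i+d\}$ cannot straddle $w$ from a position in $w'_1$ to a position in $w''_2$, because $d \leq p_1 \leq |w|$. Hence both endpoints lie within one of the two halves, and $w_3[i] = w_3[i+d]$ follows from that half's $d$-periodicity. The main obstacle will be the induction step of classical Fine-Wilf, where boundary indices near the ends of $w$ need a careful combination of forward and backward shifts; once that is in place, both the $w_1/w_2$ extension and the $w_3$ splicing reduce to bookkeeping on residues modulo $p_1, p_2, d$.
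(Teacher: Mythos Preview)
Your argument is correct. The three-stage plan (classical Fine--Wilf on the shared factor $w$, extension to $w_1$ and $w_2$ via the divisibility $d\mid p_i$ and $|w|\ge p_i$, then gluing to get $w_3$) goes through cleanly; the only delicate spot is indeed the Euclidean induction for the classical statement, and your handling of the boundary indices there is the standard one.

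As for comparison: the paper does not supply a proof of this lemma. It is presented as a slight strengthening of the classical Fine--Wilf theorem, with a reference to the original paper of Fine and Wilf, and is then used as a black box throughout (notably in the proofs of Proposition~\ref{prop:inversion} and Proposition~\ref{prop:block-periodicity}). So there is no ``paper's own proof'' to compare against; your write-up is a self-contained justification that the paper simply omits.
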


Two further combinatorial results are heavily used in the proof
of Proposition \ref{prop:inversion}. 
The first one is a result of Kortelainen~\cite{kortelainen98}, 
which was later improved and 
simplified by Saarela \cite{saarela15}. It is related to word 
equations with iterated factors, like those that arise from 
considering outputs of pumped versions of a run.
To improve readability, we highlight the important 
iterations of factors inside the considered equations.

\begin{theorem}[Theorem 4.3 in \cite{saarela15}]\label{thm:saarela}
Consider a word equation
\[
  v_0 \: \pmb{v_1^m} \: v_2 \: ... \: v_{k-1} \: \pmb{v_k^m} \: v_{k+1} 
  ~=~
  w_0 \: \pmb{w_1^m} \: w_2 \: ... \: w_{k'-1} \: \pmb{w_{k'}^m} \: w_{k'+1}
\]
where $m$ is the unknown and $v_i,w_j$ are words.
Then the set of solutions of the equation is either finite or $\bbN$.
\end{theorem}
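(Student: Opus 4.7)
My plan is to prove the dichotomy by first reducing to the case where both sides always have equal total length, and then analysing the alignment between the ``$m$-dependent blocks'' of the two sides via a pigeonhole argument.

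First, observe that the total lengths of the two sides are linear polynomials in $m$: $\alpha m + \beta$ on the left and $\alpha' m + \beta'$ on the right, where $\alpha = \sum_{i=1}^k |v_i|$, $\alpha' = \sum_{j=1}^{k'} |w_j|$, and $\beta, \beta'$ collect the non-iterated parts. If the equation has two distinct solutions $m_1 \ne m_2$, then $\alpha = \alpha'$ and $\beta = \beta'$, so the length condition holds for all $m \in \bbN$. Otherwise the solution set has size at most one, hence is finite, and we are done. So from now on I assume the lengths match identically.

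Next, for each solution $m$, I would record an \emph{alignment type}: the tuple that stores, for each $i \in \{1,\dots,k\}$, the index $j_i(m)$ of the block of the right-hand side in which the starting position of the block $v_i^m$ falls, together with the offset modulo $|w_{j_i(m)}|$ of that starting position inside $w_{j_i(m)}$. Since these indices and offsets take values in a finite set whose size depends only on $k, k', |v_i|, |w_j|$, there is a constant $N$ bounding the number of alignment types. I would then argue that, for each fixed alignment type $\tau$, the set of $m$ for which the equation holds with type $\tau$ is either empty, a singleton, or an entire arithmetic progression; the period of this progression is dictated by the least common multiple of the relevant block lengths.

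The decisive step is then the transfer from ``holds on an arithmetic progression'' to ``holds for all $m$''. If the set of solutions is infinite, by pigeonhole on the alignment type there are two solutions $m_1 < m_2$ with the same type, and the increment $d = m_2 - m_1$ inserts $d$ further copies of each $v_i$ that land, by the alignment assumption, exactly where additional copies of some substring of $w_{j_i}^{\ast}$ appear on the right. Iterating shows the equation holds on the progression $m_1 + t d$ for every $t \in \bbN$. To climb from this progression to all of $\bbN$, I would perform the same pigeonhole on a second pair of solutions within the progression with a coprime period, or alternatively apply Fine--Wilf's theorem (Lemma~\ref{lemma:fine-wilf}) to the outputs, using the two periodicities to force a common period of $1$ inside every $v_i^m$ block.

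The hard part will be the alignment-transfer step: non-commutativity of word concatenation prevents a naive ``subtract two solutions'' argument, so I must handle the boundary effects at each block junction, verifying that when I stretch $v_i^m$ to $v_i^{m+d}$ the extra $v_i^d$ really matches the corresponding extra piece on the right-hand side, rather than a cyclic shift of it. This is where the explicit bookkeeping of offsets in the alignment type, combined with the length identity from the first step, becomes essential.
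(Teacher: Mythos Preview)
The paper does not prove this theorem: it is quoted as Theorem~4.3 of Saarela~\cite{saarela15} and used as a black box. So there is no ``paper's own proof'' to compare against; I will evaluate your argument on its own.

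Your reduction to the equal-length case is fine, and the pigeonhole on alignment types is a reasonable heuristic for showing that infinitely many solutions force an infinite arithmetic progression of solutions. The genuine gap is your step~5, the passage from ``an infinite arithmetic progression of solutions'' to ``all of $\bbN$''. Neither of the two mechanisms you propose can close it. If the only solutions you know are $m_1 + t d$ for $t\ge 0$, then any further pigeonhole ``within the progression'' yields two solutions whose difference is again a multiple of $d$, so you never produce a period coprime to $d$; nothing in your setup forces a second, independent residue class of solutions to exist. Likewise, Fine--Wilf (Lemma~\ref{lemma:fine-wilf}) can at best show that some $v_i$ has a small internal period, but that says nothing about whether the \emph{equation} holds at, say, $m=0$ or $m=1$. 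Concretely, your argument as written does not exclude a solution set of the form $\{0,2,4,6,\dots\}$; ruling this out is precisely the content of the theorem, not a detail.

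There is also a softer issue with the alignment type itself. You record only where each $v_i^m$ \emph{starts} on the right, modulo the length of the $w_j$-block it lands in. But a single $v_i^m$ may straddle several right-hand blocks (some iterated, some not), and matching alignment types at $m_1$ and $m_2$ does not by itself guarantee that the extra $v_i^{d}$ you insert aligns with extra copies of the \emph{same} $w_j$ rather than sliding across a block boundary. You flag this (``boundary effects''), but the bookkeeping needed to make it rigorous is substantial and is not supplied.

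For comparison, Saarela's proof avoids the progression-to-$\bbN$ step entirely by an algebraic encoding: one represents each side as a polynomial (in a suitable sense) in $m$, so that agreement at infinitely many integers forces equality of the polynomials, hence agreement at every $m$. That is the key idea your combinatorial route is missing.
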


The second combinatorial result considers a word equation with
iterated factors parametrized by two unknowns $m_1,m_2$ that
occur in opposite order in the left, respectively right 
hand-side of the equation. This type of equation 
arises when we compare the output associated with an inversion of 
$\cT$ and the output produced by an equivalent one-way 
transducer $\cT'$.

\begin{lemma}\label{lemma:one-way-vs-two-way}
Consider a word equation of the form
\[
  v_0^{(m_1,m_2)} \: \pmb{v_1^{m_1}} \: v_2^{(m_1,m_2)} \: \pmb{v_3^{m_2}} \: v_4^{(m_1,m_2)}
  ~=~
  w_0 \: \pmb{w_1^{m_2}} \: w_2 \: \pmb{w_3^{m_1}} \: w_4
\]
where $m_1,m_2$ are the unknowns, $v_1,v_3$ are non-empty words,
and $v_0^{(m_1,m_2)},v_2^{(m_1,m_2)},v_4^{(m_1,m_2)}$ are words
that may contain factors of the form $v^{m_1}$ or $v^{m_2}$,
for a generic word $v$.
If the above equation holds for all $m_1,m_2\in\bbN$, 
then the words
$\pmb{v_1} ~ \pmb{v_1^{m_1}} ~ v_2^{(m_1,m_2)} ~ \pmb{v_3^{m_2}} ~ \pmb{v_3}$
are periodic with period $\gcd(|v_1|,|v_3|)$, for all $m_1,m_2\in\bbN$.
\end{lemma}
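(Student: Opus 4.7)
The plan is to use Fine-Wilf's theorem (Lemma~\ref{lemma:fine-wilf}) to extract periodicity from the way the iterated blocks $v_1^{m_1},v_3^{m_2}$ on the left cross with $w_1^{m_2},w_3^{m_1}$ on the right, and then invoke Saarela's theorem (Theorem~\ref{thm:saarela}) to propagate this periodicity from asymptotic pairs $(m_1,m_2)$ to arbitrary ones.

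I start with a length analysis. Each $v_i^{(m_1,m_2)}$ is built from constants and iterated factors of the form $v^{m_1}$ or $v^{m_2}$, so its length is affine in $(m_1,m_2)$: write $|v_i^{(m_1,m_2)}|=a_i+\alpha_i m_1+\beta_i m_2$ for $i\in\{0,2,4\}$. Equating the $m_1$- and $m_2$-coefficients of the total lengths of both sides of the equation gives $\alpha_0+\alpha_2+\alpha_4+|v_1|=|w_3|$ and $\beta_0+\beta_2+\beta_4+|v_3|=|w_1|$, so in particular $|v_1|\le|w_3|$ and $|v_3|\le|w_1|$. I then instantiate the equation at large auxiliary parameters $(M_1,M_2)$. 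With $M_2$ much larger than $M_1$, the positional calculation shows that the block $w_1^{M_2}$ on the right-hand side straddles the entire block $v_1^{M_1}$ on the left-hand side, so $v_1^{M_1}$ becomes a long common subword simultaneously periodic with periods $|v_1|$ and $|w_1|$; by Fine-Wilf, $v_1^{M_1}$, and hence $v_1$ itself, is periodic with period dividing $\gcd(|v_1|,|w_1|)$. A symmetric choice $M_1\gg M_2$ applied to the overlap of $v_3^{M_2}$ with $w_3^{M_1}$ yields that $v_3$ has period dividing $\gcd(|v_3|,|w_3|)$. Exploiting additional cross overlaps (for instance the portion of $v_3^{M_2}$ that still lies inside $w_1^{M_2}$, or the portion of $v_1^{M_1}$ inside $w_3^{M_1}$), combined with $|v_3|\le|w_1|$ and $|v_1|\le|w_3|$, one concludes that both $v_1$ and $v_3$ are powers of a single word $u$ of length $p=\gcd(|v_1|,|v_3|)$, and that the whole central segment $v_1\cdot v_1^{M_1}\cdot v_2^{(M_1,M_2)}\cdot v_3^{M_2}\cdot v_3$ of the equation at $(M_1,M_2)$ is $u$-periodic, since $v_2^{(M_1,M_2)}$ is squeezed between two long $u$-periodic blocks that agree on both sides.

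To transfer the conclusion to an arbitrary target pair $(m_1^*,m_2^*)$, I fix one variable and invoke Saarela's theorem on the resulting single-variable word equation. Fixing $m_1=m_1^*$, the equation becomes an instance of the form covered by Theorem~\ref{thm:saarela} in the single unknown $m_2$, with all of $\bbN$ as its solution set; together with the asymptotic $u$-periodicity and the fact that $v_2^{(m_1^*,m_2)}$ is sandwiched between the fixed $u$-power $v_1^{m_1^*+1}$ and the growing $u$-power $v_3^{m_2+1}$, this forces $v_2^{(m_1^*,m_2)}$ itself to be a $u$-power aligned with $v_1,v_3$, for every $m_2\in\bbN$. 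A symmetric argument varying $m_1$ completes the proof. The main obstacle is precisely this last step: the asymptotic periodicity follows fairly directly from Fine-Wilf, but propagating it uniformly in $(m_1,m_2)$ requires tracking how the iterated factors inside $v_0^{(m_1,m_2)},v_2^{(m_1,m_2)},v_4^{(m_1,m_2)}$ shift the relative alignment of the two sides of the equation, and Saarela's theorem is what lets us avoid an otherwise unbounded case analysis on those shifts.
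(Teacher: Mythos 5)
There is a genuine gap in the step where you conclude that $v_2^{(M_1,M_2)}$, and hence the whole central segment, is $u$-periodic because it ``is squeezed between two long $u$-periodic blocks that agree on both sides.'' Being flanked by periodic blocks does not make a word periodic: take $v_1=v_3=a$ and $v_2^{(M_1,M_2)}=b$; then $a^{M_1+1}\,b\,a^{M_2+1}$ is not $a$-periodic even though $b$ sits between two $a$-periodic blocks. The correct source of periodicity for $v_2^{(m_1,m_2)}$, and the heart of the paper's proof, is that the \emph{single} periodic block $w_1^{m_2}$ on the right-hand side covers the whole stretch of the left-hand side running from some point inside $v_1^{m_1}$ through all of $v_2^{(m_1,m_2)}$ and into $v_3^{m_2}$; as a factor of $w_1^{m_2}$, this stretch is $|w_1|$-periodic, and Lemma~\ref{lemma:fine-wilf} applied to the overlap with $v_3^{m_2}$ reduces the period to $\gcd(|w_1|,|v_3|)$, which divides $|v_3|$. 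You do use coverage for $v_1^{M_1}$ itself, but then abandon it for $v_2^{(M_1,M_2)}$ in favor of the invalid sandwiching argument. (A minor related inaccuracy: $w_1^{M_2}$ starts at the fixed position $|w_0|$, so it need not ``straddle'' the whole of $v_1^{M_1}$ when $|v_0^{(M_1,M_2)}|<|w_0|$; the paper sidesteps this by first fixing $m_1$ so that $|v_1^{m_1}|>|w_0|+|w_1|$, guaranteeing a long overlap regardless of the offset.)

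The second gap is the last paragraph's use of Theorem~\ref{thm:saarela}. You invoke it on the original equation, but that equation is already assumed to hold for all $(m_1,m_2)$, so the dichotomy ``finite or all of $\bbN$'' gives nothing new, and the follow-up claim that ``this forces $v_2^{(m_1^*,m_2)}$ to be a $u$-power'' again leans on the invalid sandwiching step. In the paper, Theorem~\ref{thm:saarela} is applied to a \emph{different}, auxiliary equation: the periodicity statement for $v_1\,v_2^{(m_1,m_2)}\,v_3^{m_2+1}$ is itself recast as a one-variable word equation in $m_2$ (equating the word with a suitable power of a period word), shown via Fine--Wilf to hold for all but finitely many $m_2$, and then Saarela's theorem upgrades this cofinite solution set to all of $\bbN$. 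Without explicitly setting up that auxiliary equation, the transfer from asymptotic periodicity to arbitrary $(m_1,m_2)$ is not justified.
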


The last ingredient used in the proof of Proposition \ref{prop:inversion}
is a bound on the period of the output produced 
by an inversion. 
For this, we introduce a suitable notion of minimality of loops 
and loop components:

\begin{definition}\label{def:output-minimal}
Consider pairs $(L,C)$ consisting of an idempotent loop $L$
and a component $C$ of $L$.
\begin{itemize}
\item On such pairs, we define the
relation 
$\lesspair$ by  $(L',C') \lesspair (L,C)$ if
$L'\subsetneq L$ and at least one $(L',C')$-factor
is contained in some $(L,C)$-factor.
\item A pair $(L,C)$ is \emph{output-minimal} if for all 
pairs $(L',C') \lesspair (L,C)$, we have $\out{\tr{C'}}=\emptystr$.
\end{itemize}
\end{definition}

\noindent
Note that the relation $\lesspair$ is not a partial order in general
(it is however antisymmetric).
Lemma~\ref{lem:output-minimal} below shows that the length of the
output trace of $C$ inside $L$ is bounded whenever $(L,C)$ is
output-minimal.

\begin{lemma}\label{lem:output-minimal}
For every output-minimal pair $(L,C)$, $|\out{\tr{C}}| \le \bound$.
\end{lemma}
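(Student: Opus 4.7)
The plan is to prove the contrapositive: assuming $|\out{\tr{C}}|$ is too large, I will construct a strictly smaller pair $(L',C') \lesspair (L,C)$ whose trace still outputs a non-empty word, contradicting output-minimality. I would first decompose the trace as $\tr{C} = \beta_0 \beta_1 \cdots \beta_k$, one factor per edge of the cycle $C$; since $F_L$ lives on at most $\hmax$ nodes, so does $C$, and thus there are at most $\hmax$ such factors $\beta_j$.

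The heart of the argument will be an application of Theorem~\ref{thm:simon2} to a single $(L,C)$-factor $\beta_j$ whose output is large enough. First I would isolate such a factor by pigeonhole from the hypothesis that $|\out{\tr{C}}| = \sum_j |\out{\beta_j}|$ is too large, aiming to produce some $\beta_j$ with $|\out{\beta_j}|$ exceeding the threshold required to trigger Theorem~\ref{thm:simon2}. Then I would apply Theorem~\ref{thm:simon2} with $I = L$ and $K = [\ell_1,\ell_2]$ equal to the run-order interval of $\beta_j$. Since $\beta_j$ is intercepted by $L$, the set $Z = K \cap (L \times \bbN)$ is exactly the set of locations of $\beta_j$, so $|\out{\rho \mid Z}| = |\out{\beta_j}|$. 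The theorem would then return a strictly smaller idempotent loop $L' = [x'_1,x'_2]$ with $x_1 < x'_1 < x'_2 < x_2$, together with a component $C'$ of $L'$ such that $\an{C'} \in K$ and $\out{\tr{C'}} \neq \emptystr$.

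To verify $(L',C') \lesspair (L,C)$, I need to exhibit an $(L',C')$-factor contained in some $(L,C)$-factor. I would let $\gamma$ denote the unique $\LR$- or $\RL$-factor of $L'$ corresponding to the edge of $C'$ that leaves $\an{C'}$. Since $\an{C'}$ is a location visited by $\beta_j$ (from $\an{C'} \in K$) and the transitions of $\rho$ from any visited location are fixed, the continuation of $\beta_j$ starting at $\an{C'}$ coincides transition-by-transition with $\gamma$. The factor $\gamma$ stays inside $L' \subsetneq L$ and ends at the opposite boundary of $L'$, which still lies strictly inside $L$, so $\beta_j$ does not exit $L$ before $\gamma$ completes. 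Therefore $\gamma$ is a sub-factor of $\beta_j$, which yields $(L',C') \lesspair (L,C)$; combined with $\out{\tr{C'}} \neq \emptystr$, this contradicts output-minimality of $(L,C)$.

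The main obstacle will be extracting the ``large enough'' $\beta_j$ in the second step while matching the constants. A naive pigeonhole from $|\out{\tr{C}}| > \bound$ only yields some $|\out{\beta_j}| > \bound / \hmax$, and closing this factor of $\hmax$ will require a refined version of the factorization-forest argument underlying Theorem~\ref{thm:simon2}: specifically, a single intercepted factor crosses each position of $L$ only a bounded number of times rather than up to $\hmax$, which should allow the theorem to fire at a threshold smaller by a factor of $\hmax$ when $Z$ consists of the locations of a single intercepted factor. Working out this refinement so that the per-factor threshold times the $\hmax$ possible factors matches the declared bound $\bound = \cmax \cdot \hmax \cdot (2^{3\emax}+4)$ is the main technical point; once this is established, the rest of the contradiction goes through as sketched above.
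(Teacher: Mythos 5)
Your overall strategy---apply Theorem~\ref{thm:simon2} inside $L$, find a strictly smaller pair with non-empty trace, contradict output-minimality---matches the paper, and your verification that the resulting $(L',C')$-factor sits inside an $(L,C)$-factor is sound. But the mechanism you use to \emph{trigger} the theorem leaves a real gap that you yourself flag: pigeonholing to a single $\beta_j$ only gives $|\out{\beta_j}| > \bound / \hmax$, and the ``refined'' version of Theorem~\ref{thm:simon2} you invoke to close the gap is unproved and, more importantly, not obviously true. The $\hmax$ factor in $\bound$ comes from pigeonholing over the $\le\hmax$ levels at which source locations of output-producing transitions can sit; restricting $Z$ to a single intercepted factor does not reduce that, since one $\LL$-factor can still weave back and forth across an interior position as many times as the crossing sequence allows, so its transitions can originate from up to $\hmax$ distinct levels. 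There is no evident reason the threshold in Theorem~\ref{thm:simon2} drops by a factor $\hmax$ in your setting.

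The paper's proof sidesteps this entirely by never isolating one $\beta_j$. Let $\ell_1$ be the first and $\ell_2$ the last location visited by any of the $(L,C)$-factors $\beta_1,\dots,\beta_k$, set $K=[\ell_1,\ell_2]$, and take $Z = K \cap (L\times\bbN)$ with $I=L$. Between consecutive $(L,C)$-factors in run order the run is \emph{outside} $L$ (by Lemma~\ref{lem:component2} the factors are $k$ $\LL$-factors, one $\LR$-factor, then $k$ $\RR$-factors, separated by excursions to the left of $x_1$ or right of $x_2$). Hence $\rho\mid Z$ is precisely the concatenation, in run order, of all the $\beta_j$, so $|\out{\rho\mid Z}| = |\out{\tr{C}}| > \bound$ and Theorem~\ref{thm:simon2} applies directly, without any pigeonhole or refinement. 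From there the argument concludes as you wrote: the $(L',C')$-factor from $\an{C'}\in K$ lies strictly inside $L$ and thus inside some $\beta_j$, giving $(L',C')\lesspair(L,C)$ with $\out{\tr{C'}}\neq\emptystr$, contradicting output-minimality. The fix to your argument is simply to aggregate the factors rather than isolate one.
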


\begin{proof}[Proof sketch]
We use a Ramsey-type argument here: if $|\out{\tr{C}}| > \bound$,
then Theorem~\ref{thm:simon2} can be applied to exhibit an idempotent loop
strictly inside $L$ and a component $C$ of it with non-empty trace output.
This would contradict the output-minimality of $(L,C)$.
\end{proof}

We remark that the above lemma cannot be used directly to bound the 
period of the output produced by an inversion.
The reason is that we cannot assume that inversions are built up 
from output-minimal pairs.
\begin{wrapfigure}{r}{4.2cm}
\vspace{-1mm}\hspace{-2.25mm}
\scalebox{0.8}{%
\begin{tikzpicture}[baseline=0, inner sep=0, outer sep=0, minimum size=0pt, scale=0.3, xscale=0.68,yscale=1.2]
  \tikzstyle{dot} = [draw, circle, fill=white, minimum size=4pt]
  \tikzstyle{fulldot} = [draw, circle, fill=black, minimum size=4pt]
  \tikzstyle{factor} = [->, shorten >=1pt, dotted, rounded corners=5]
  \tikzstyle{fullfactor} = [->, >=stealth, shorten >=1pt, very thick, rounded corners=5]

  \fill [pattern=north east lines, pattern color=gray!25]
        (4,-0.75) rectangle (10,9);
  \draw [dashed, thin, gray] (4,-0.75) -- (4,9);
  \draw [dashed, thin, gray] (10,-0.75) -- (10,9);
  \draw [gray] (4,-1) -- (4,-1.25) -- (10,-1.25) -- (10,-1);
  \draw [gray] (7,-1.5) node [below] {\footnotesize $L_2$};

  \fill [pattern=north east lines, pattern color=gray!25]
        (15,-0.75) rectangle (21,9);
  \draw [dashed, thin, gray] (15,-0.75) -- (15,9);
  \draw [dashed, thin, gray] (21,-0.75) -- (21,9);
  \draw [gray] (15,-1) -- (15,-1.25) -- (21,-1.25) -- (21,-1);
  \draw [gray] (18,-1.5) node [below] {\footnotesize $L_1$};

  \draw (1,0) node (node0) {};
  \draw (4,0) node [dot] (node1) {};
  \draw (10,0) node [dot] (node2) {};
  \draw (15,0) node [dot] (node3) {};
  \draw (21,0) node [dot] (node4) {};
  \draw (23,0) node (node5) {};
  \draw (23,1) node (node6) {};

  \draw (21,1) node [dot] (node7) {};
  \draw (18,1) node (node8) {};
  \draw (18,2) node (node9) {};
  \draw (21,2) node (node10) {};
  \draw (24,2) node (node11) {};
  \draw (24,3) node (node12) {};
  \draw (21,3) node [fulldot] (node13) {};
  \draw (15,3) node [dot] (node14) {};

  \draw (10,3) node [dot] (node15) {};
  \draw (4,3) node [dot] (node16) {};
  \draw (2,3) node (node17) {};
  \draw (2,4) node (node18) {};

  \draw (4,4) node [dot] (node19) {};
  \draw (7,4) node (node20) {};
  \draw (7,5) node (node21) {};
  \draw (4,5) node [dot] (node22) {};
  \draw (1,5) node (node23) {};
  \draw (1,6) node (node24) {};
  \draw (4,6) node [fulldot] (node25) {};
  \draw (10,6) node [dot] (node26) {};

  \draw (15,6) node [dot] (node27) {};
  \draw (19,6) node (node28) {};
  \draw (19,7) node (node29) {};
  \draw (15,7) node [dot] (node30) {};

  \draw (10,7) node [dot] (node31) {};
  \draw (6,7) node (node32) {};
  \draw (6,8) node (node33) {};
  \draw (10,8) node [dot] (node34) {};

  \draw (15,8) node [dot] (node35) {};
  \draw (21,8) node [dot] (node36) {};
  \draw (24,8) node (node37) {};

  \draw [factor] (node0) -- (node1);
  \draw [factor] (node1) -- (node2);
  \draw [factor] (node2) -- (node3);
  \draw [factor] (node3) -- (node4);
  \draw [factor] (node4) -- (node5.center) -- (node6.center) -- (node7);
  
  \draw [fullfactor] (node7) -- (node8.center) -- (node9.center) -- (node10);
  \draw [factor] (node10) -- (node11.center) -- (node12.center) -- (node13);
  \draw [fullfactor] (node13) -- (node14);

  \draw [factor] (node14) -- (node15);
  \draw [factor] (node15) -- (node16);
  \draw [factor] (node16) -- (node17.center) -- (node18.center) -- (node19);

  \draw [fullfactor] (node19) -- (node20.center) -- (node21.center) -- (node22);
  \draw [factor] (node22) -- (node23.center) -- (node24.center) -- (node25);
  \draw [fullfactor] (node25) -- (node26);
  
  \draw [factor] (node26) -- (node27);
  \draw [fullfactor, red] (node27) -- (node28.center) -- (node29.center) -- (node30);
  \draw [factor] (node30) -- (node31);
  \draw [fullfactor, red] (node31) -- (node32.center) -- (node33.center) -- (node34);

  \draw [factor] (node34) -- (node35);
  \draw [factor] (node35) -- (node36);
  \draw [factor] (node36) -- (node37);
  
  \draw (node13) node [above right=1mm] {\footnotesize $\an{C_1}$};
  \draw (node25) node [above left=1mm] {\footnotesize $\an{C_2}$};
\end{tikzpicture}
}
\hspace{-3mm}\vspace{-2mm}
\end{wrapfigure} 
A counter-example is given in 
the figure to the right,
which shows a run where 
the only inversion $(L_1,C_1,L_2,C_2)$ contains pairs that are 
not output-minimal:
the factors that produce long outputs are those in red,
but they occur outside $\rho[\an{C_1},\an{C_2}]$.

\medskip
We are now ready to prove Proposition~\ref{prop:inversion}.
Here we only present the key ideas, and refer the reader 
to the appendix for more details.

\begin{proof}[Proof sketch of Proposition~\ref{prop:inversion}]
In the first half of the proof we pump the two loops $L_1$ and $L_2$
so that we obtain also loops in the assumed equivalent one-way
transducer $\cT'$. 
We then consider the outputs of the pumped runs
of $\cT$ and $\cT'$, which contain iterated factors
parametrized by two natural numbers $m_1,m_2$.
As those outputs must agree due to the equivalence of $\Tt,\Tt'$, we get
an equation as in Lemma~\ref{lemma:one-way-vs-two-way},
where the word $v_1$
belongs to $\out{\tr{C_1}}^+$ 
and the word $v_3$ 
belongs to $\out{\tr{C_2}}^{+}$.
Lemma~\ref{lemma:one-way-vs-two-way} shows that the word described by
the equation has  period $p$ dividing $\gcd(|v_1|,|v_3|)$, 
and Lemma~\ref{lemma:fine-wilf} shows that 
$p$ even divides $|\out{\tr{C_1}}|$ and $|\out{\tr{C_2}}|$. 
Finally, we use Theorem~\ref{thm:saarela} to transfer the periodicity
property from the word of the equation
to the word $w \:=\: \out{\tr{C_1}} \: \out{\rho[\an{C_1},\an{C_2}]} \: \out{\tr{C_2}}$
produced by the original run of $\cT$.
This is possible because the word of the equation is obtained
by iterating factors of $w$. In particular, by reasoning separately 
on the parameters that define those iterations, and by stating the 
periodicity property as an equation in the form required by 
Theorem~\ref{thm:saarela}, one can prove that the periodicity 
equation holds on all parameters, and thus in particular on $w$.

In the second half of the proof we show that the period $p$ is bounded
by $\bound$. This requires a refinement of the previous arguments and involves
pumping the run of $\Tt$ simultaneously on three different loops. The
idea is that by pumping we manage to find inversions with
some output-minimal pair $(L_0,C_0)$. In this way we show that the
period $p$ also divides $\out{\tr{C_0}}$, which is bounded by $\bound$
according to Lemma~\ref{lem:output-minimal}.
\end{proof}

\section{One-way definability}\label{sec:decomposition}

Proposition~\ref{prop:inversion} is the main combinatorial argument for characterizing
two-way transducers that are one-way definable. In this section we provide the remaining
arguments. Roughly, the idea is to decompose every successful run $\rho$ into factors 
that produce long outputs either in a left-to-right manner (``diagonals''), 
or based on an almost periodic pattern (``blocks''). 

We say that a word $w$ is \emph{almost periodic with bound $p$} if $w=w_0~w_1~w_2$ 
for some words $w_0,w_2$ of length at most $p$ and some word $w_1$ 
of period at most $p$. 

We illustrate the following definition in Fig.~\ref{fig:factors}.

\begin{definition}\label{def:factors}
Consider a factor $\rho[\ell_1,\ell_2]$ of the run, where
$\ell_1=(x_1,y_1)$, $\ell_2=(x_2,y_2)$, and $x_1\le x_2$.
We call $\rho[\ell_1,\ell_2]$
\begin{itemize}
  \item a \emph{diagonal} 
        if for all $x\in[x_1,x_2]$, there is a location $\ell_x$ at position $x$
        such that $\ell_1\leqtime\ell_x\leqtime\ell_2$ and the words
        $\out{\rho\mid Z_{\ell_x}^\upperleft}$ and 
        $\out{\rho\mid Z_{\ell_x}^\lowerright}$ 
        have length at most $\bound$,
        where $Z_{\ell_x}^\upperleft = [{\ell_x},\ell_2] \:\cap\: \big([0,x]\times\bbN\big)$
        and $Z_{\ell_x}^\lowerright = [\ell_1,{\ell_x}] \:\cap\: \big([x,\omega]\times\bbN\big)$;
  \item a \emph{block} if the word
        $\out{\rho[\ell_1,\ell_2]}$ is almost periodic with bound $\bound$, 
        and 
        $\out{\rho\mid Z^\leftshort}$ and 
        $\out{\rho\mid Z^\rightshort}$ have length at most $\bound$,
        where $Z^\leftshort = [\ell_1,\ell_2] \:\cap\: \big([0,x_1]\times \bbN\big)$
        and $Z^\rightshort = [\ell_1,\ell_2] \:\cap\: \big([x_2,\omega]\times \bbN\big)$.
\end{itemize}
\end{definition}

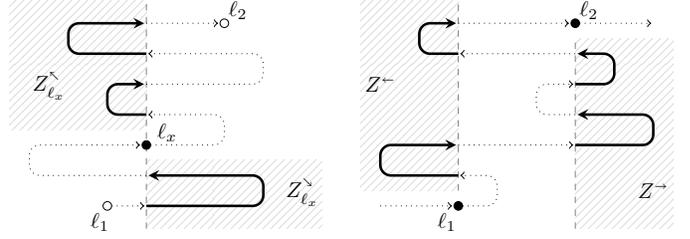
\begin{figure}[!t]
\centering
\scalebox{0.81}{%
\begin{tikzpicture}[baseline=0, inner sep=0, outer sep=0, minimum size=0pt, xscale=0.32, yscale=0.5]
\begin{scope}
  \tikzstyle{dot} = [draw, circle, fill=white, minimum size=4pt]
  \tikzstyle{fulldot} = [draw, circle, fill=black, minimum size=4pt]
  \tikzstyle{factor} = [->, shorten >=1pt, dotted, rounded corners=5]
  \tikzstyle{fullfactor} = [->, >=stealth, shorten >=1pt, very thick, rounded corners=5]

  \fill [pattern=north east lines, pattern color=gray!25]
        (1,6.75) rectangle (8,2.5);
  \fill [pattern=north east lines, pattern color=gray!25]
        (8,1.5) rectangle (17,-0.75);
  \draw [dashed, thin, gray] (8,-0.75) -- (8,6.75);
 
  \draw (6,0) node [dot] (node0) {};
  \draw (8,0) node (node1) {};
  \draw (14,0) node (node2) {};
  \draw (14,1) node (node3) {};
  \draw (8,1) node (node4) {};
  \draw (2,1) node (node5) {};
  \draw (2,2) node (node6) {};
  \draw (8,2) node [fulldot] (node7) {};
  \draw (12,2) node (node8) {};
  \draw (12,3) node (node9) {};
  \draw (8,3) node (node10) {};
  \draw (6,3) node (node11) {};
  \draw (6,4) node (node12) {};
  \draw (8,4) node (node13) {};
  \draw (14,4) node (node14) {};
  \draw (14,5) node (node15) {};
  \draw (8,5) node (node16) {};
  \draw (4,5) node (node17) {};
  \draw (4,6) node (node18) {};
  \draw (8,6) node (node19) {};
  \draw (12,6) node [dot] (node20) {};

  \draw [factor] (node0) -- (node1); 
  \draw [fullfactor] (node1) -- (node2.center) -- (node3.center) -- (node4); 
  \draw [factor] (node4) -- (node5.center) -- (node6.center) -- (node7); 
  \draw [factor] (node7) -- (node8.center) -- (node9.center) -- (node10);
  \draw [fullfactor] (node10) -- (node11.center) -- (node12.center) -- (node13); 
  \draw [factor] (node13) -- (node14.center) -- (node15.center) -- (node16); 
  \draw [fullfactor] (node16) -- (node17.center) -- (node18.center) -- (node19); 
  \draw [factor] (node19) -- (node20);
  
  \draw (node0) node [below = 1.2mm] {$\ell_1~~$};
  \draw (node7) node [above right = 0.8mm] {$~\ell_x$};
  \draw (node20) node [above right = 1mm] {$\ell_2$};

  \draw (16,0.5) node {$Z_{\ell_x}^\lowerright$};
  \draw (3,4) node {$Z_{\ell_x}^\upperleft$};
\end{scope}
\begin{scope}[xshift=18cm]
  \tikzstyle{dot} = [draw, circle, fill=white, minimum size=4pt]
  \tikzstyle{fulldot} = [draw, circle, fill=black, minimum size=4pt]
  \tikzstyle{factor} = [->, shorten >=1pt, dotted, rounded corners=5]
  \tikzstyle{fullfactor} = [->, >=stealth, shorten >=1pt, very thick, rounded corners=5]

  \fill [pattern=north east lines, pattern color=gray!25]
        (1,0.5) rectangle (6,6.75);
  \fill [pattern=north east lines, pattern color=gray!25]
        (12,-0.75) rectangle (17,5.5);
  \draw [dashed, thin, gray] (6,0.5) -- (6,6.75);
  \draw [dashed, thin, gray] (12,-0.75) -- (12,5.5);
 
  \draw (2,0) node (node0) {} ;
  \draw (6,0) node [fulldot] (node1) {};
  \draw (8,0) node (node2) {};
  \draw (8,1) node (node3) {};
  \draw (6,1) node (node4) {};
  \draw (2,1) node (node5) {};
  \draw (2,2) node (node6) {};
  \draw (6,2) node (node7) {};
  \draw (12,2) node (node8) {};
  \draw (16,2) node (node9) {};
  \draw (16,3) node (node10) {};
  \draw (12,3) node (node11) {};
  \draw (10,3) node (node12) {};
  \draw (10,4) node (node13) {};
  \draw (12,4) node (node14) {};
  \draw (14,4) node (node15) {};
  \draw (14,5) node (node16) {};
  \draw (12,5) node (node17) {};
  \draw (6,5) node (node18) {};
  \draw (4,5) node (node19) {};
  \draw (4,6) node (node20) {};
  \draw (6,6) node (node21) {};
  \draw (12,6) node [fulldot] (node22) {};
  \draw (16,6) node (node23) {} ;

	\draw [factor] (node0) -- (node1) ;
  \draw [factor] (node1) -- (node2.center) -- (node3.center) -- (node4); 
  \draw [fullfactor] (node4) -- (node5.center) -- (node6.center) -- (node7); 
  \draw [factor] (node7) -- (node8);
  \draw [fullfactor] (node8) -- (node9.center) -- (node10.center) -- (node11); 
  \draw [factor] (node11) -- (node12.center) -- (node13.center) -- (node14); 
  \draw [fullfactor] (node14) -- (node15.center) -- (node16.center) -- (node17); 
  \draw [factor] (node17) -- (node18);
  \draw [fullfactor] (node18) -- (node19.center) -- (node20.center) -- (node21); 
  \draw [factor] (node21) -- (node22);
  
  \draw (node1) node [below = 1.2mm] {$\ell_1~~~$};
  \draw (node22) node [above right = 1mm] {$\ell_2$};

  \draw (2,4) node {$Z^\leftshort$};
  \draw (16,0.5) node {$Z^\rightshort$};
  \draw [factor] (node22) -- (node23) ;
  
\end{scope}
\end{tikzpicture}
}
\caption{Outputs that need to be bounded in a diagonal and in a block.}\label{fig:factors}
\end{figure}

\noindent
Intuitively, the output of a diagonal $\rho[\ell_1,\ell_2]$ 
can be 
simulated 
while scanning the input interval $[x_1,x_2]$ from left to right, 
since the outputs of $\rho\mid Z_{\ell_x}^\upperleft$ and 
$\rho\mid Z_{\ell_x}^\lowerright$ are bounded. 
A similar argument applies to a block $\rho[\ell_1,\ell_2]$, 
where in addition, one exploits the fact that the output is almost periodic. 
Roughly, the idea is that one can simulate the output of a block
by outputting symbols according to a periodic pattern, and in a 
number that is determined from the transitions on $u[x_1,x_2]$ and the
guessed (bounded) outputs on $Z^\leftshort$ and $Z^\rightshort$.

The general idea for turning a two-way transducer $\cT$ into an equivalent 
one-way transducer $\cT'$ is to guess (and check) a factorization of a
successful run of $\cT$ into factors that are either diagonals or blocks,
and properly arranged following the order of positions.

\begin{definition}\label{def:decomposition}
A \emph{decomposition} of $\rho$ is a factorization $\prod_i\,\rho[\ell_i,\ell_{i+1}]$
of $\rho$ into diagonals and blocks, where $\ell_i=(x_i,y_i)$ and $x_i < x_{i+1}$ for all $i$.
\end{definition}

The one-way transducer $\cT'$ whose existence is stated by Theorem \ref{th:main}
simulates $\cT$ precisely on those inputs $u$ that have some successful
run admitting a decomposition.
To provide further intuition on the notion of decomposition,
we consider again the transduction of Example~\ref{ex:running} and the 
two-way transducer $\cT$ that implements it in the most natural way.
Fig.~\ref{fig:decomposition} shows an example of a run of $\cT$ on
an input of the form $u_1 \:\#\: u_2 \:\#\: u_3 \:\#\: u_4$, where
$u_2, u_4 \in (abc)^*$, $u_1\,u_3\nin (abc)^*$, and $u_3$ has even length. 
The factors of the run that produce long outputs are highlighted 
by the bold arrows. The first and third factors of the decomposition, 
i.e.~$\rho[\ell_1,\ell_2]$ and $\rho[\ell_3,\ell_4]$, are diagonals
(represented by the blue hatched areas); the second and fourth factors 
$\rho[\ell_2,\ell_3]$ and $\rho[\ell_4,\ell_5]$ are blocks
(represented by the red hatched areas). 

\begin{figure}[!t]
\centering
\scalebox{1}{
\begin{tikzpicture}[baseline=0, inner sep=0, outer sep=0, minimum size=0pt, scale=0.32]
  \tikzstyle{dot} = [draw, circle, fill=white, minimum size=4pt]
  \tikzstyle{fulldot} = [draw, circle, fill=black, minimum size=4pt]
  \tikzstyle{factor} = [->, shorten >=1pt, dotted, rounded corners=5]
  \tikzstyle{fullfactor} = [->, >=stealth, shorten >=1pt, very thick, rounded corners=5]

  \fill [pattern=north east lines, pattern color=blue!18]
        (0.5,-.5) rectangle (6.5,2.5);
  \fill [pattern=north east lines, pattern color=red!20]
        (6.5,1.5) rectangle (12.5,4.5);
  \fill [pattern=north east lines, pattern color=blue!18]
        (12.5,3.5) rectangle (18.5,6.5);
  \fill [pattern=north east lines, pattern color=red!20]
        (18.5,5.5) rectangle (24.5,8.5);
  \draw [dashed, thin, gray] (0.5,-1) -- (0.5,9);
  \draw [dashed, thin, gray] (6.5,-1) -- (6.5,9);
  \draw [dashed, thin, gray] (12.5,-1) -- (12.5,9);
  \draw [dashed, thin, gray] (18.5,-1) -- (18.5,9);
  \draw [dashed, thin, gray] (24.5,-1) -- (24.5,9);
  \draw [gray] (1,-1.25) -- (1,-1.5) -- (6,-1.5) -- (6,-1.25);
  \draw [gray] (3.5,-2) node [below] {\small $u_1$};
  \draw [gray] (6.5,-1.75) node [below] {\footnotesize $\#$};
  \draw [gray] (7,-1.25) -- (7,-1.5) -- (12,-1.5) -- (12,-1.25);
  \draw [gray] (9.5,-2) node [below] {\small $u_2$};
  \draw [gray] (12.5,-1.75) node [below] {\footnotesize $\#$};
  \draw [gray] (13,-1.25) -- (13,-1.5) -- (18,-1.5) -- (18,-1.25);
  \draw [gray] (15.5,-2) node [below] {\small $u_3$};
  \draw [gray] (18.5,-1.75) node [below] {\footnotesize $\#$};
  \draw [gray] (19,-1.25) -- (19,-1.5) -- (24,-1.5) -- (24,-1.25);
  \draw [gray] (21.5,-2) node [below] {\small $u_4$};

  \draw (0.5,0) node [dot] (node0) {};  

  \draw (2,0) node (node1) {};
  \draw (5.5,0) node (node2) {};
  \draw (12,0) node (node3) {};
  \draw (12,1) node (node4) {};
  \draw (5,1) node (node5) {};
  \draw (5,2) node (node6) {};

  \draw (6.5,2) node [dot] (node10') {};

  \draw (8,2) node (node11) {};
  \draw (11.5,2) node (node12) {};
  \draw (18,2) node (node13) {};
  \draw (18,3) node (node14) {};
  \draw (17,3) node (node15) {};
  \draw (8,3) node (node16) {};
  \draw (7,3) node (node17) {};
  \draw (7,4) node (node18) {};
  \draw (8,4) node (node19) {};
  \draw (11.5,4) node (node20) {};

  \draw (12.5,4) node [dot] (node20') {};

  \draw (14,4) node (node21) {};
  \draw (17.5,4) node (node22) {};
  \draw (24,4) node (node23) {};
  \draw (24,5) node (node24) {};
  \draw (17,5) node (node25) {};
  \draw (17,6) node (node26) {};

  \draw (18.5,6) node [dot] (node30') {};

  \draw (20,6) node (node31) {};
  \draw (23.5,6) node (node32) {};
  \draw (24,6) node (node33) {};
  \draw (24,7) node (node34) {};
  \draw (23,7) node (node35) {};
  \draw (20,7) node (node36) {};
  \draw (19,7) node (node37) {};
  \draw (19,8) node (node38) {};
  \draw (20,8) node (node39) {};
  \draw (23.5,8) node (node40) {};

  \draw (24.5,8) node [dot] (node40') {};

  \draw [factor] (node0) -- (node1);
  \draw [fullfactor] (node1) -- (node2);
  \draw [factor] (node2) -- (node3.center) -- (node4.center) 
                         -- (node5.center) -- (node6.center) -- (node10');

  \draw [factor] (node10') -- (node11);
  \draw [fullfactor] (node11) -- (node12);
  \draw [factor] (node12) -- (node13.center) -- (node14.center) -- (node15) -- 
                 (node15) -- (node16) --
                 (node16) -- (node17.center) -- (node18.center) -- (node19); 
  \draw [fullfactor] (node19) -- (node20); 
  \draw [factor] (node20) -- (node20');

  \draw [factor] (node20') -- (node21);
  \draw [fullfactor] (node21) -- (node22);
  \draw [factor] (node22) -- (node23.center) -- (node24.center) 
                          -- (node25.center) -- (node26.center) -- (node30');

  \draw [factor] (node30') -- (node31);
  \draw [fullfactor] (node31) -- (node32);
  \draw [factor] (node32) -- (node33.center) -- (node34.center) -- (node35) --
                 (node35) -- (node36) -- 
                 (node36) -- (node37.center) -- (node38.center) -- (node39);
  \draw [fullfactor] (node39) -- (node40);
  \draw [factor] (node40) -- (node40');

  \draw (node0) node [left = 1.2mm] {$\ell_1$};
  \draw (node10') node [above left = 1mm] {$\ell_2$};
  \draw (node20') node [above right = 1mm] {$\ell_3$};
  \draw (node30') node [above left = 1mm] {$\ell_4$};
  \draw (node40') node [right = 1.2mm] {$\ell_5$};
\end{tikzpicture}
}
\caption{A decomposition of a run of a two-way transducer.}\label{fig:decomposition}
\end{figure}
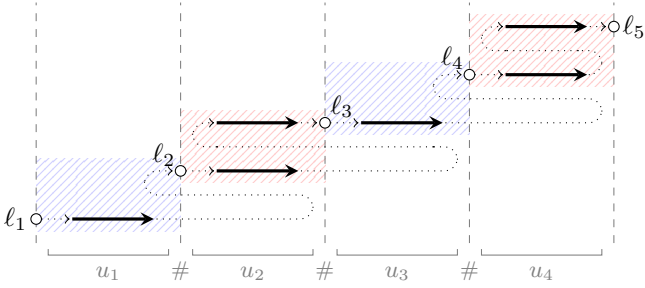

\begin{theorem}\label{thm:main2}
Let $\cT$ be a functional two-way transducer. 
The following are equivalent:
\begin{itemize}
  \item[\PR1)] $\cT$ is one-way definable.
  \item[\PR2)] For all inversions $(L_1,C_1,L_2,C_2)$ of all 
               successful runs of $\cT$, the word
               \[
                 \outb{\tr{C_1}} ~ \outb{\rho[\an{C_1},\an{C_2}]} ~ \outb{\tr{C_2}}
               \]
               has period $p \le \bound$ dividing $|\out{\tr{C_1}}|$, $|\out{\tr{C_2}}|$.
  \item[\PR3)] Every successful run of $\cT$ admits a decomposition. 
\end{itemize}
\end{theorem}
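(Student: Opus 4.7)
The plan is to prove the cyclic chain $\PR1 \Rightarrow \PR2 \Rightarrow \PR3 \Rightarrow \PR1$. The implication $\PR1 \Rightarrow \PR2$ is immediate from Proposition~\ref{prop:inversion}, which was already established.

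For $\PR3 \Rightarrow \PR1$, I would construct an equivalent one-way transducer $\cT'$ that nondeterministically guesses, while scanning the input from left to right, a decomposition $\prod_i \rho[\ell_i,\ell_{i+1}]$ of some successful run of $\cT$, together with sufficient auxiliary data to emit the output on the fly. Inside a diagonal $\rho[\ell_i,\ell_{i+1}]$, the transducer carries at each input position $x \in [x_i,x_{i+1}]$ the crossing sequence $\rho\mid x$, together with the bounded words $\out{\rho\mid Z_{\ell_x}^\upperleft}$ and $\out{\rho\mid Z_{\ell_x}^\lowerright}$ (both of length at most $\bound$); their concatenation in the order dictated by $\rho$ yields the diagonal's output. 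Inside a block $\rho[\ell_i,\ell_{i+1}]$, the transducer guesses the period $p \le \bound$ together with the bounded prefix, suffix, and the boundary words on $Z^\leftshort$ and $Z^\rightshort$, and emits the appropriate number of periodic letters while reading $u[x_i,x_{i+1}]$. Consistency of every guess with an actual transition of $\cT$ is verified by a standard crossing-sequence simulation, and functionality of $\cT$ ensures that $\cT'\subseteq \cT$ and that the two transductions coincide on $\dom(\cT')$.

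The hard direction is $\PR2 \Rightarrow \PR3$. Given a successful run $\rho$, the strategy is to isolate the positions responsible for large ``out-of-order'' output and cluster them into blocks; what remains in between will be diagonals. Concretely, I would enumerate the anchors $\an{C}$ of components of idempotent loops with non-empty trace output using Theorem~\ref{thm:simon2}, and pair them up into inversions $(L_1,C_1,L_2,C_2)$. For each such inversion, $\PR2$ provides a period $p\le \bound$ on $\out{\tr{C_1}}\,\out{\rho[\an{C_1},\an{C_2}]}\,\out{\tr{C_2}}$ that divides both $|\out{\tr{C_1}}|$ and $|\out{\tr{C_2}}|$. I would then use the Fine--Wilf Lemma~\ref{lemma:fine-wilf} to merge the periodicities of overlapping inversions (those that share a common trace output) into a single period equal to the gcd, and inductively aggregate all maximally overlapping anchors into one almost periodic window which, once extended to include full trace pumpings, is a block of Definition~\ref{def:factors}. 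Between two consecutive such blocks, no inversion can survive, so a further application of Theorem~\ref{thm:simon2} to any sub-interval of positions and locations forces the surviving long outputs to be produced strictly left-to-right, yielding a diagonal.

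The main obstacle is the merging step: to invoke Fine--Wilf one must show that consecutive inversions actually share a sufficiently long common factor (of length at least $p_1+p_2-\gcd(p_1,p_2)$), so that the individual periods combine into their gcd. The refined statement of $\PR2$---that the period also divides $|\out{\tr{C_i}}|$---is precisely what makes this possible, since the two trace outputs then align and contribute their lengths to the shared factor. Equally delicate is bounding the boundary contributions $\out{\rho\mid Z^\leftshort}$, $\out{\rho\mid Z^\rightshort}$ of each block and the side outputs of each diagonal by $\bound$: whenever one of these words exceeds $\bound$, Theorem~\ref{thm:simon2} produces a fresh anchor inside the offending interval, which can be paired with an anchor on the other side to form an inversion, contradicting the maximality of the cluster or the chosen cut between block and diagonal.
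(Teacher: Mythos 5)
Your proposal follows essentially the same route as the paper. The cyclic chain is identical: \PR1$\then$\PR2 via Proposition~\ref{prop:inversion}; \PR2$\then$\PR3 by clustering overlapping inversions (the paper formalizes your ``maximally overlapping anchors'' as the $\simeq$-equivalence of Definition~\ref{def:equivalence}), merging their periods with Fine--Wilf (Proposition~\ref{prop:block-periodicity}), enlarging each cluster to a block (Definition~\ref{def:bounding-box}, Lemma~\ref{lemma:bounding-box}), and showing via Theorem~\ref{thm:simon2} that the leftover factors are diagonals (Lemma~\ref{lem:diagonal}); and \PR3$\then$\PR1 by the one-way simulation with guessed bounded side-words (Proposition~\ref{prop:sufficiency}). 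You correctly identify the two delicate points --- the role of the divisibility condition in \PR2 for aligning trace outputs so Fine--Wilf applies, and the use of Theorem~\ref{thm:simon2} to turn an overlong side-output into a contradicting inversion --- so the sketch is sound.
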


The implication from \PR1 to \PR2 was already shown 
in Proposition~\ref{prop:inversion}.  
The rest of this section is devoted to prove the 
implications from \PR2 to \PR3 and from \PR3 to \PR1.
The issues related to the complexity of the characterization 
will be discussed further below.

\paragraph{From periodicity to existence of decompositions (\PR2$\then$\PR3).}
As usual, we fix a successful run $\rho$ of $\cT$.
We will prove a slightly stronger result than the
implication from \PR2 to \PR3, namely: if every inversion of 
$\rho$ satisfies the periodicity property stated in \PR2,
then $\rho$ admits a decomposition
(note that this is independent of whether other runs satisfy or not \PR2). 
To identify the blocks of a possible decomposition 
of $\rho$ we consider a suitable equivalence relation 
between locations:

\begin{definition}\label{def:equivalence}
A location $\ell$ is \emph{covered} by an inversion $(L_1,C_1,L_2,C_2)$ 
if $\an{C_1} \leqtime \ell \leqtime \an{C_2}$.
We define the relation $\crossrel$ by letting $\ell\crossrel\ell'$ 
if $\ell,\ell'$ are covered by the {\sl same} inversion. 
We define the equivalence relation $\simeq$ as the reflexive and transitive closure of $\crossrel$.
\end{definition}

\noindent
Locations covered by the same inversion $(L_1,C_1,L_2,C_2)$ 
yield an interval w.r.t.~the run ordering $\leqtime$. 
Thus every non-singleton $\simeq$-class can be seen as 
a union of such intervals, say $K_1,\dots,K_m$, 
that are two-by-two overlapping, namely, $K_i \cap K_{i+1} \neq \emptyset$ for all $i<m$. 
In particular, a non-singleton $\simeq$-class is an interval of locations 
witnessed by a series of inversions $(L_{2i},C_{2i},L_{2i+1},C_{2i+1})$ such that
$\an{C_{2i}} \leqtime \an{C_{2i+2}} \leqtime \an{C_{2i+1}} \leqtime \an{C_{2i+3}}$.

The next result exploits the shape of a non-singleton $\simeq$-class, 
the assumption that $\rho$ satisfies the periodicity property stated in \PR2, 
and Lemma~\ref{lemma:fine-wilf}, to show that the output produced inside 
an $\simeq$-class has bounded period.

\begin{proposition}\label{prop:block-periodicity}
If $\rho$ satisfies the periodicity property stated in \PR2
and $\ell \leqtime \ell'$ are two locations in the 
same $\simeq$-class, then $\outb{\rho[\ell,\ell']}$ 
has period at most $\bound$.
\end{proposition}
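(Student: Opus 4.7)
The plan is to proceed by induction on the length $m$ of a minimal chain of $\crossrel$-related inversions connecting $\ell$ to $\ell'$ within their $\simeq$-class. Recall that such a non-singleton $\simeq$-class decomposes as a union of inversion-covered intervals $K_i = [\an{C_{2i}}, \an{C_{2i+1}}]$ satisfying the interleaving condition $\an{C_{2i}} \leqtime \an{C_{2i+2}} \leqtime \an{C_{2i+1}} \leqtime \an{C_{2i+3}}$.

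For the base case $m = 1$, both $\ell$ and $\ell'$ are covered by a single inversion $(L_1, C_1, L_2, C_2)$, so that $\an{C_1} \leqtime \ell \leqtime \ell' \leqtime \an{C_2}$. The hypothesis \PR2 directly gives that the extended word $\out{\tr{C_1}} \cdot \out{\rho[\an{C_1}, \an{C_2}]} \cdot \out{\tr{C_2}}$ has period $p \le \bound$ dividing both flanking trace output lengths. Since $\out{\rho[\ell, \ell']}$ is a factor of this extended word, it inherits the same period.

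For the inductive step, I would focus on the first two overlapping inversions $I_0 = (L_0, C_0, L_1, C_1)$ and $I_1 = (L_2, C_2, L_3, C_3)$ of the chain. Their extended words $W_0 = \out{\tr{C_0}} \cdot \out{\rho[\an{C_0}, \an{C_1}]} \cdot \out{\tr{C_1}}$ and $W_1 = \out{\tr{C_2}} \cdot \out{\rho[\an{C_2}, \an{C_3}]} \cdot \out{\tr{C_3}}$ have periods $p_0, p_1 \le \bound$ (each dividing the flanking trace output lengths), and they share the common factor $w = \out{\rho[\an{C_2}, \an{C_1}]}$. I would then invoke Lemma~\ref{lemma:fine-wilf}, using its additional claim about the hybrid word $w_3 = \out{\tr{C_0}} \cdot \out{\rho[\an{C_0}, \an{C_3}]} \cdot \out{\tr{C_3}}$, to conclude -- under the length condition $|w| \ge p_0 + p_1 - \gcd(p_0, p_1)$ -- that $w_3$ has period $p = \gcd(p_0, p_1) \le \bound$. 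Crucially, $p$ still divides $|\out{\tr{C_0}}|$ and $|\out{\tr{C_3}}|$ (being a divisor of both $p_0$ and $p_1$), so the combined region behaves as a single virtual inversion satisfying the same hypothesis of the proposition. This reduces the chain length by one and the induction closes on the shortened chain.

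The hard part will be verifying the length condition $|w| \ge p_0 + p_1 - \gcd(p_0, p_1)$ required by Fine-Wilf. It does not follow from the chain definition alone, since two consecutive inversions may overlap in run order by only a single location, producing empty output. The remedy is to exploit precisely the divisibility of each $p_i$ by the flanking trace output lengths: the periodic patterns visible in $W_0$ and $W_1$ each extend at least one full period into the traces $\out{\tr{C_1}}$ and $\out{\tr{C_2}}$, producing additional virtual agreement beyond $w$ itself when the extended words are aligned at their common run content. Making this alignment rigorous, handling the degenerate case where even this extended overlap is insufficient (in which case the combined region ends up short enough that the periodicity claim is trivial), and tracking how the resulting gcd period continues to divide the outermost trace output lengths throughout the iterated merging, constitute the central technical challenge.
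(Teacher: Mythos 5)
Your base case and overall induction structure match the paper, but the inductive step as you have written it has a genuine gap, and it is precisely the one you flag at the end: the common factor $w=\out{\rho[\an{C_2},\an{C_1}]}$ between $W_0$ and $W_1$ can be too short (even empty) to satisfy the Fine--Wilf length requirement, and your proposed remedy does not fix this. Saying that the periodic patterns of $W_0$ and $W_1$ ``extend into'' $\out{\tr{C_1}}$ and $\out{\tr{C_2}}$ does not produce a longer \emph{common} factor between $W_0$ and $W_1$, because $\out{\tr{C_1}}$ sits at the tail of $W_0$ and $\out{\tr{C_2}}$ sits at the head of $W_1$; these are not the same piece of text. There is also no trivial ``degenerate case'' escape here --- when the overlap is short, periodicity of the concatenation is genuinely in doubt and needs a real argument.

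The missing idea is to manufacture a \emph{third} application of \PR2, on a new \emph{cross inversion}. From $\an{C_0}\leqtime\an{C_2}\leqtime\an{C_1}\leqtime\an{C_3}$ together with a WLOG maximality assumption on the inversions in the chain (no inversion $(L,C,L',C')$ strictly encloses a chosen one in the sense that $\an{C}\leqtime\an{C_{2i}}\leqtime\an{C_{2i+1}}\leqtime\an{C'}$), one can show that $(L_2,C_2,L_1,C_1)$ is itself an inversion: the maximality rules out $\an{C_1}$ being strictly to the right of $\an{C_2}$, since otherwise $(L_0,C_0,L_3,C_3)$ would be a larger inversion. Applying \PR2 to this cross inversion yields the bridge word $w_2=\out{\tr{C_2}}\,\out{\rho[\an{C_2},\an{C_1}]}\,\out{\tr{C_1}}$ with period $\gcd(p_1,p_2)$. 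Now the Fine--Wilf overlaps come for free: $w_2$ shares with $W_0$ the suffix $\out{\rho[\an{C_2},\an{C_1}]}\,\out{\tr{C_1}}$ of length at least $|\out{\tr{C_1}}|=p_1$, and since the period of $w_2$ divides $p_1$ this already meets the Fine--Wilf threshold; symmetrically $w_2$ shares with $W_1$ the prefix $\out{\tr{C_2}}$ of length $p_2$, again a multiple of both periods in play. Chaining two applications of Lemma~\ref{lemma:fine-wilf} through $w_2$ then glues $W_0$ to $W_1$ with period $\gcd(p_1,p_2,p_3)$ (after removing a factor of length a multiple of the period) and keeps the divisibility by the outermost trace lengths, which is exactly the invariant your induction needs. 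Without the cross inversion and the maximality hypothesis that legitimizes it, the length condition simply cannot be established, so the proposal does not go through as written.
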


The $\simeq$-classes considered so far cannot be directly used as blocks
for the desired decomposition of $\rho$, since the $x$-coordinates of their 
endpoints might not be in the appropriate order. The next definition
takes care of this, by enlarging the $\simeq$-classes according to
$x$-coordinates of anchors.

\begin{definition}\label{def:bounding-box}
Let $K=[\ell,\ell']$ be a non-singleton $\simeq$-class, 
let $\an{K}$ be the restriction of $K$ to the locations 
that are anchors of components of inversions,
and let $X_{\an{K}} = \{x \::\: \exists y\: (x,y)\in \an{K}\}$ 
be the projection of $\an{K}$ on positions.
\par\noindent
We define $\block{K}=[\ell_1,\ell_2]$, where 
\begin{itemize}
  \item $\ell_1$ is the latest location $(x,y) \leqtime \ell$ 
        such that $x = \min\big(X_{\an{K}}\big)$, 
  \item $\ell_2$ is the earliest location $(x,y) \geqtime \ell'$ 
        such that $x = \max\big(X_{\an{K}}\big)$ 
\end{itemize}  
(note that the location $\ell_1$ exists since
$\ell$ is the anchor of the first component of an inversion,
and $\ell_2$ exists for similar reasons).
\end{definition}

\begin{lemma}\label{lemma:bounding-box}
If $K=[\ell,\ell']$ is a non-singleton $\simeq$-class, then $\rho[\ell_1,\ell_2]$ is a block,
where $[\ell_1,\ell_2]=\block{K}$.
\end{lemma}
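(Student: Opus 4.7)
The plan is to verify the three defining properties of a block for the factor $\rho[\ell_1,\ell_2]$: almost periodicity of $\out{\rho[\ell_1,\ell_2]}$ with bound $\bound$, together with $|\out{\rho\mid Z^\leftshort}|,|\out{\rho\mid Z^\rightshort}|\le\bound$. Each bound will come from an application of Theorem~\ref{thm:simon2}, combined with the structure of the $\simeq$-class $K=[\ell,\ell']$ and the defining property of $\ell_1,\ell_2$ in Definition~\ref{def:bounding-box}.

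First I would record a simple structural observation: the sub-factor $\rho[\ell_1,\ell]$ visits position $x_1$ only at the starting location $\ell_1$ and stays at positions $>x_1$ thereafter, and symmetrically $\rho[\ell',\ell_2]$ visits $x_2$ only at $\ell_2$ and stays at $<x_2$ before. This follows from the choice of $\ell_1$ as the latest location $\leqtime\ell$ at position $x_1$, together with the unit-step nature of two-way transitions: any return to $x_1$ strictly between $\ell_1$ and $\ell$ would contradict the latest-ness of $\ell_1$, so after the forced initial rightward step the run cannot come back to $x_1$ and must stay strictly above until reaching $\ell$.

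Next I would prove $|\out{\rho[\ell_1,\ell]}|\le\bound$ (and symmetrically $|\out{\rho[\ell',\ell_2]}|\le\bound$). Assuming the contrary, Theorem~\ref{thm:simon2} applied with $I=[0,\omega]$ and run-order interval $[\ell_1,\ell]$ (so that $\out{\rho\mid Z}=\out{\rho[\ell_1,\ell]}$) yields an idempotent loop $L^\dagger$ and a component $C^\dagger$ of $L^\dagger$ with anchor $\an{C^\dagger}$ strictly between $\ell_1$ and $\ell$ in run order, with $\out{\tr{C^\dagger}}\neq\emptystr$ and, by the structural observation, position $x^\dagger>x_1$. I then construct a new inversion by pairing $C^\dagger$ with a component of $K$: if $x^\dagger\ge x_\ell$, pair with the component anchored at $\ell$ that witnesses $\ell\in\an{K}$; otherwise, pair with a component anchored at the minimal position $x_1\in X_{\an{K}}$. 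A direct check of Definition~\ref{def:inversion} confirms that the pair is an inversion in both cases, so $\an{C^\dagger}\simeq\ell$, placing $\an{C^\dagger}\in K$ strictly before $\ell$ in run order---contradicting the fact that $\ell$ is the earliest location of $K$.

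The bounds on $\out{\rho\mid Z^\leftshort}$ and $\out{\rho\mid Z^\rightshort}$ are proved by the same scheme, now applying Theorem~\ref{thm:simon2} with $I=[0,x_1]$ (resp.~$[x_2,\omega]$) and $K=[\ell_1,\ell_2]$. The resulting anchor $\an{C^\dagger}$ has position $x^\dagger<x_1$ (resp.~$>x_2$); the structural observation excludes it from the outer segments $(\ell_1,\ell)$ and $(\ell',\ell_2)$ incompatible with its position, forcing $\an{C^\dagger}$ into the remaining portion of the run. Pairing $C^\dagger$ with the component of $K$'s first (resp.~last) inversion then creates a new inversion that places $\an{C^\dagger}\in\an{K}$ at a position violating the minimality of $x_1$ (resp.~maximality of $x_2$) in $X_{\an{K}}$, contradicting Definition~\ref{def:bounding-box}. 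Combining these four bounds delivers almost periodicity: in the factorisation $\out{\rho[\ell_1,\ell_2]}=\out{\rho[\ell_1,\ell]}\cdot\out{\rho[\ell,\ell']}\cdot\out{\rho[\ell',\ell_2]}$ the outer factors have length $\le\bound$ and the middle factor has period $\le\bound$ by Proposition~\ref{prop:block-periodicity}. The main obstacle is the case analysis for constructing the contradicting inversion from $C^\dagger$: the position inequality in Definition~\ref{def:inversion} goes in the ``wrong'' direction for some candidate partners, so one must split according to whether $x^\dagger$ lies above or below $x_\ell$ (resp.~$x_{\ell'}$) and then appeal to the existence of the extremal anchor of $X_{\an{K}}$.
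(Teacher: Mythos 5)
Your proof is correct and follows the same structure as the paper's: the structural observation that $\rho[\ell_1,\ell]$ stays strictly to the right of $x_1$ after $\ell_1$ (and symmetrically on the right), applications of Theorem~\ref{thm:simon2} to produce an anchor $\an{C^\dagger}$ with non-empty trace output, and pairing $C^\dagger$ with a suitable anchor of $K$ to form an inversion contradicting either the minimality of $\ell$ in $K$ or the extremality of $x_1$ (resp.\ $x_2$) in $X_{\an{K}}$. The only slight over-complication is the case split on whether $x^\dagger\ge x_\ell$ in the first step: since $x^\dagger>x_1$ always holds, pairing with the anchor at position $x_1$ works uniformly (this is what the paper does), so the first branch of your case analysis is never needed.
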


\begin{proof}[Proof sketch]
  The periodicity of $\out{\rho[\ell,\ell']}$ is obtained by applying
  Proposition~\ref{prop:block-periodicity}.
  Then Theorem~\ref{thm:simon2} is applied twice:
  first to bound $\out{\rho[\ell_1,\ell]}$ and $\out{\rho[\ell',\ell_2]}$
  (hence proving that $\out{\rho[\ell_1,\ell_2]}$ is almost periodic
  with bound $\bound$),
  and second, to bound $\out{\rho\mid Z^\leftshort}$ and
  $\out{\rho\mid Z^\rightshort}$, as introduced in Definition~\ref{def:factors}.
\end{proof}

The next lemma shows that blocks do not overlap along the input axis: 

\begin{lemma}\label{lem:consecutive-blocks}
Suppose that $K_1$ and $K_2$ are two different non-singleton $\simeq$-classes
such that $\ell \lesstime \ell'$ for all $\ell \in K_1$ and $\ell' \in
K_2$.
Let $\block{K_1}=[\ell_1,\ell_2]$ and $\block{K_2}=[\ell_3,\ell_4]$,
with $\ell_2=(x_2,y_2)$ and $\ell_3=(x_3,y_3)$. 
Then $x_2 < x_3$.
\end{lemma}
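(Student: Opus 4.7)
The plan is to argue by contradiction: assume $x_2 \geq x_3$ and exhibit an inversion that covers simultaneously a location of $K_1$ and a location of $K_2$, thus forcing $K_1 = K_2$ and contradicting the hypothesis that the two $\simeq$-classes are distinct.

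To set things up, I would unfold Definition~\ref{def:bounding-box}, which gives $x_2 = \max(X_{\an{K_1}})$ and $x_3 = \min(X_{\an{K_2}})$. Since each of these values is by construction the first coordinate of an actual element of $\an{K_i}$, there exist locations $\tilde\ell = (x_2, \tilde y) \in \an{K_1}$ and $\tilde\ell' = (x_3, \tilde y') \in \an{K_2}$. Because every location of $K_1$ strictly precedes every location of $K_2$ in run order, we have $\tilde\ell \lesstime \tilde\ell'$ (in particular $\tilde\ell \leqtime \tilde\ell'$).

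Next I would unfold the definition of $\an{K_i}$: every element of $\an{K_i}$ is, by construction, the anchor of a component of some inversion of $\rho$, and every such component has a non-empty trace output by Definition~\ref{def:inversion}. Thus $\tilde\ell$ is the anchor of a component $C$ of an idempotent loop $L$ with $\out{\tr{C}} \neq \emptystr$, and similarly $\tilde\ell'$ is the anchor of a component $C''$ of an idempotent loop $L''$ with $\out{\tr{C''}} \neq \emptystr$. Under our contradictory assumption $x_2 \geq x_3$, the tuple $(L, C, L'', C'')$ satisfies every clause of Definition~\ref{def:inversion}, so it is itself an inversion of $\rho$.

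Finally, by Definition~\ref{def:equivalence} this inversion covers both of its own anchors $\tilde\ell = \an{C}$ and $\tilde\ell' = \an{C''}$, hence $\tilde\ell \crossrel \tilde\ell'$ and therefore $\tilde\ell \simeq \tilde\ell'$; but $\tilde\ell \in K_1$ and $\tilde\ell' \in K_2$ lie in distinct $\simeq$-classes, a contradiction. I do not anticipate any real obstacle here: the argument is essentially a chase through the definitions, and the only care needed is to verify, clause by clause, that the tuple $(L, C, L'', C'')$ is an inversion, which follows directly from the definition of $\an{K_i}$, from $\tilde\ell \lesstime \tilde\ell'$, and from the assumption $x_2 \geq x_3$.
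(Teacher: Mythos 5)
Your proof is correct and follows essentially the same route as the paper's: assume $x_2\ge x_3$, use Definition~\ref{def:bounding-box} to obtain anchors $\tilde\ell\in\an{K_1}$ at position $x_2$ and $\tilde\ell'\in\an{K_2}$ at position $x_3$, check that $(L,C,L'',C'')$ satisfies every clause of Definition~\ref{def:inversion}, and conclude $\tilde\ell\simeq\tilde\ell'$, contradicting $K_1\neq K_2$. The only difference is that you spell out the clause-by-clause verification that the paper compresses into one sentence.
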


\begin{proof}[Proof sketch]
  If $x_2\geq x_3$, one can exhibit an inversion between a component of a loop
  in $K_1$ and another one in $K_2$, and deduce that $K_1=K_2$.
\end{proof}

For the sake of brevity, we call \emph{$\simeq$-block} any 
factor of the form $\rho\mid\block{K}$ that is obtained by applying 
Definition~\ref{def:bounding-box} to a non-singleton $\simeq$-class $K$.
The results obtained so far imply that every location 
covered by an inversion is also covered by an $\simeq$-block
(Lemma \ref{lemma:bounding-box}), and that the order of occurrence 
of $\simeq$-blocks is the same as the order of positions 
(Lemma \ref{lem:consecutive-blocks}).
So the $\simeq$-blocks can be used as factors for the decomposition
of $\rho$ we are looking for. Below, we show that the remaining 
factors of $\rho$, which do not overlap the $\simeq$-blocks, are diagonals. 
This will complete the construction of a decomposition of $\rho$.

Formally, we say that a factor $\rho[\ell_1,\ell_2]$ \emph{overlaps}
another factor $\rho[\ell_3,\ell_4]$ if 
$[\ell_1,\ell_2] \:\cap\: [\ell_3,\ell_4] \neq \emptyset$, 
$\ell_2 \neq \ell_3$, and $\ell_1 \neq \ell_4$.

\begin{lemma}\label{lem:diagonal}
Let $\rho[\ell_1,\ell_2]$ be a
factor of $\rho$ that does not overlap any $\simeq$-block,
with $\ell_1=(x_1,y_1)$, $\ell_2=(x_2,y_2)$, and $x_1<x_2$.
Then $\rho[\ell_1,\ell_2]$ is a diagonal.
\end{lemma}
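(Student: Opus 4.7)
The plan is to prove the contrapositive position by position: fix $x \in [x_1, x_2]$, assume no location $\ell_x$ at position $x$ satisfies the diagonal condition, and build a $\simeq$-block that overlaps $\rho[\ell_1,\ell_2]$. I would enumerate in run order the visits at position $x$ inside $[\ell_1,\ell_2]$ as $\ell_x^{(1)} \leqtime \cdots \leqtime \ell_x^{(k)}$ and set $R_j = |\outb{\rho \mid Z_{\ell_x^{(j)}}^\lowerright}|$ and $L_j = |\outb{\rho \mid Z_{\ell_x^{(j)}}^\upperleft}|$. By construction $R_j$ is monotone non-decreasing in $j$, $L_j$ is monotone non-increasing, and $R_1 = L_k = 0$ because the run cannot reach positions strictly on the far side of $x$ before the first (nor after the last) visit at position~$x$.

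Under the contrapositive assumption, for every $j$ one has $R_j > \bound$ or $L_j > \bound$, so the suffix $A = \{j : R_j > \bound\}$ and the prefix $B = \{j : L_j > \bound\}$ cover $\{1,\dots,k\}$; setting $j_R := \min A$ and $j_L := \max B$, one gets $j_R \le j_L + 1$. A double application of Theorem~\ref{thm:simon2} then produces idempotent loops on both sides of $x$: with position interval $[x,\omega]$ and location interval $[\ell_1,\ell_x^{(j_R)}]$ it yields a loop $L_R$ with $x < \min(L_R) < \max(L_R) < \omega$ and a component $C_R$ whose anchor $\an{C_R}$ lies strictly between $\ell_1$ and $\ell_x^{(j_R)}$ with $\outb{\tr{C_R}} \neq \emptystr$; symmetrically, with $[0,x]$ and $[\ell_x^{(j_L)},\ell_2]$, it yields $L_L$ and a component $C_L$ whose anchor $\an{C_L}$ lies strictly between $\ell_x^{(j_L)}$ and $\ell_2$ at position $<x$. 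The boundary cases $x \in \{0,\omega\}$ are vacuous: since transitions shift the head by $\pm 1$, one has $R_j \equiv 0$ for $x = \omega$ and $L_j \equiv 0$ for $x = 0$, so a good $\ell_x$ exists trivially.

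The crux is to verify that $\an{C_R} \leqtime \an{C_L}$, so that $(L_R,C_R,L_L,C_L)$ qualifies as an inversion in the sense of Definition~\ref{def:inversion} (the position requirement, $\an{C_R}$ strictly right of $x$ and $\an{C_L}$ strictly left, is already built in). When $j_R \le j_L$, the chain $\an{C_R} \lesstime \ell_x^{(j_R)} \leqtime \ell_x^{(j_L)} \lesstime \an{C_L}$ gives the ordering immediately. When $j_R = j_L + 1$, a short case analysis on the factor of $\rho$ strictly between $\ell_x^{(j_L)}$ and $\ell_x^{(j_R)}$ is needed: this factor is either empty (the two visits being connected by a single same-position transition), or a left excursion whose intermediate locations lie at positions $<x$, or a right excursion whose intermediate locations lie at positions $>x$. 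In the first two subcases, $\an{C_R}$ (at position $>x$) cannot fit strictly between the two visits and must satisfy $\an{C_R} \lesstime \ell_x^{(j_L)} \lesstime \an{C_L}$; in the third subcase, $\an{C_L}$ (at position $<x$) cannot fit there and must satisfy $\an{C_R} \lesstime \ell_x^{(j_R)} \lesstime \an{C_L}$. Either way the required ordering holds.

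Once the inversion is at hand, Definition~\ref{def:equivalence} places $\an{C_R}$ and $\an{C_L}$ in a common non-singleton $\simeq$-class $K$, and Lemma~\ref{lemma:bounding-box} provides a $\simeq$-block $\rho[\ell_1',\ell_2'] = \rho \mid \block{K}$ containing both anchors. Because the two anchors sit strictly inside $(\ell_1,\ell_2)$ at distinct positions, either equality $\ell_2 = \ell_1'$ or $\ell_1 = \ell_2'$ would collapse them to a single common endpoint, which is impossible. Hence $\rho[\ell_1,\ell_2]$ properly overlaps $\rho[\ell_1',\ell_2']$, contradicting the hypothesis of the lemma and completing the proof. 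The main obstacle in the plan is precisely the case analysis highlighted above: when $j_R$ and $j_L$ are consecutive indices, placing the two anchors in the correct run order forces one to exploit the precise type of excursion (or its absence) occurring between the two visits at position $x$.
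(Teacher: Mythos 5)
Your proof is correct and follows the same overall strategy as the paper's: negate the diagonal condition at a position $x$, apply Theorem~\ref{thm:simon2} to the two windows to obtain idempotent loops and components on either side of $x$, verify that the pair of anchors forms an inversion, and conclude that the induced $\simeq$-block (via Definition~\ref{def:equivalence} and Lemma~\ref{lemma:bounding-box}) overlaps $\rho[\ell_1,\ell_2]$. Where you genuinely diverge from the paper is in the step that checks the run-order $\an{C_R}\leqtime\an{C_L}$. The paper first pins down a \emph{single} pivot $\ell=(x,y)$ whose two windows both exceed $\bound$: it argues that the highest level at which condition~1 holds and the lowest level at which condition~2 holds are both odd (by comparing the windows at consecutive levels and observing that the outgoing transition at an even level is rightward), hence $y\ge y'$, and then feeds the same pivot to Theorem~\ref{thm:simon2} twice, after which the anchor order is automatic. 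You instead keep the two pivots $\ell_x^{(j_R)}$ and $\ell_x^{(j_L)}$, establish only $j_R\le j_L+1$, and in the tight case $j_R=j_L+1$ classify the factor strictly between the two consecutive visits at $x$ as a single same-position step, a left excursion, or a right excursion; in each subcase one of the two anchors, being strictly on the far side of $x$, cannot fall in that range, which forces the required ordering. The two routes are comparable in length. Your excursion-type trichotomy does the same job as the paper's parity-of-level argument, but it relies only on the positional constraints delivered by Theorem~\ref{thm:simon2} and never needs to compare $\out{\rho\mid Z^\upperleft}$ across two neighboring levels. Your handling of the endpoints $x\in\{0,\omega\}$ and the final overlap argument with $\rho\mid\block{K}$ are both sound.
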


\begin{proof}[Proof sketch]
  If $\rho[\ell_1,\ell_2]$ is not a diagonal,
  we can find a location $\ell_1\leqtime \ell \leqtime \ell_2$
  for which $|\out{\rho\mid Z_\ell^\upperleft}| > \bound$ and
  $|\out{\rho\mid Z_\ell^\lowerright}| > \bound$ (recall
  Definition~\ref{def:factors}).
  By applying again  Theorem~\ref{thm:simon2}, 
  we derive the existence of an inversion between $\ell_1$ and
  $\ell_2$, and thus of an $\simeq$-block overlapping
  $\rho[\ell_1,\ell_2]$. 
\end{proof}

\paragraph{From decompositions to one-way definability (\PR3$\then$\PR1).}
Hereafter, we denote by $U$ the language of words $u\in\dom(\cT)$ such that 
{\sl all} successful runs of $\cT$ on $u$ admit a decomposition.

So far, we know that if $\Tt$ is one-way definable (\PR1), 
then $U=\dom(\Tt)$ (\PR3). 
This reduces the one-way definability problem for $\cT$ 
to the containment problem $\dom(\cT) \subseteq U$. 
We will see later how the latter problem can be decided 
in double exponential space
by further reducing it to checking the emptiness of the 
intersection of the languages $\dom(\cT)$ and $U^\complement$, 
where $U^\complement$ is the complement of $U$.

Below, we show how to construct a one-way transducer 
$\Tt'$ of triple exponential size such that 
\[
  \Tt' \:\subseteq\: \Tt
  \qquad\text{and}\qquad
  \dom(\Tt') \:\supseteq\: U.
\]
In particular, the existence of such a transducer $\cT'$ 
proves the implication from \PR3 to \PR1 of Theorem~\ref{thm:main2}.
It also proves the second item of Theorem~\ref{th:main}, because
when $\cT$ is one-way definable, $U=\dom(\cT)$, and hence $\cT$ and 
$\cT'$ are equivalent.

Intuitively, given an input $u$, the one-way transducer $\cT'$ 
will guess a successful run $\rho$ of $\cT$ on $u$ and a 
decomposition of $\rho$, and then use the decomposition to 
simulate the output produced by $\rho$.
Note that $\cT'$ accepts at least all the words of $U$, 
possibly more. As a matter of fact, it would be difficult
to construct a transducer whose domain coincides with $U$,
since checking membership in $U$ involves a universal quantification.
The proof of the following result is in the appendix.

\begin{proposition}\label{prop:sufficiency}
Given a functional two-way transducer $\cT$, 
one can construct in $\threeexptime$ 
a one-way transducer $\cT'$ such that 
$\cT' \subseteq \cT$ and $\dom(\cT') \supseteq U$.
\end{proposition}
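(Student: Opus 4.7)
The plan is to construct $\cT'$ as a non-deterministic one-way transducer that, while scanning $u$ from left to right, guesses a successful run $\rho$ of $\cT$ on $u$ together with a decomposition of $\rho$ into diagonals and blocks (Definition~\ref{def:decomposition}), and simultaneously outputs $\out{\rho}$ piece by piece. The state of $\cT'$ at input position $x$ will record: the crossing sequence $\rho|x$ of the guessed run; the type (diagonal or block) of the factor currently being simulated; and the bounded auxiliary data appropriate to that type. The boundaries $x_i$ are guessed on the fly, and at each such boundary $\cT'$ checks that the crossing sequence and the chosen level $y_i$ agree on both sides of $\ell_i$, so that the guessed sequence of factors really is a factorization of a run of $\cT$. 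At end of input $\cT'$ accepts iff the guessed run is successful and all pending bounded data has been properly consumed.

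For a diagonal $\rho[\ell_i,\ell_{i+1}]$ spanning the interval $[x_i,x_{i+1}]$, $\cT'$ proceeds column by column. At column $x$ it stores the guessed level $y_x$ of the ``middle'' location $\ell_x=(x,y_x)$ witnessing the diagonal property, together with the two bounded strings $\out{\rho\mid Z_{\ell_x}^\upperleft}$ and $\out{\rho\mid Z_{\ell_x}^\lowerright}$, each of length at most $\bound$ by Definition~\ref{def:factors}. Advancing to column $x+1$ it reads $u[x+1]$, guesses $y_{x+1}$, extends the crossing sequence using transitions of $\cT$, and updates the two auxiliary strings by appending the outputs of the newly used transitions between the two columns and discarding the prefixes/suffixes that become committed or swap sides. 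Simultaneously $\cT'$ emits exactly the letters of $\out{\rho[\ell_i,\ell_{i+1}]}$ that become finalized when the split moves from $\ell_x$ to $\ell_{x+1}$, i.e.\ those letters that lie at positions $\le x+1$ and before $\ell_{x+1}$ in run order but were not yet committed at column $x$. The bound $\bound$ on the auxiliary strings ensures this bookkeeping is finite-state.

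Blocks are simulated using their almost-periodic structure. On entering a block $\rho[\ell_i,\ell_{i+1}]$, $\cT'$ guesses once and for all: the factorization $\out{\rho[\ell_i,\ell_{i+1}]}=w_0\,w_1\,w_2$ with $|w_0|,|w_2|\le\bound$ and $w_1$ of period $p\le\bound$, recorded via a period word $\pi\in\Gamma^p$; the two bounded strings $\out{\rho\mid Z^\leftshort}$ and $\out{\rho\mid Z^\rightshort}$; and the crossing sequences at the two extremities. While scanning $[x_i,x_{i+1}]$ column by column, $\cT'$ updates the crossing sequence, maintains a counter modulo $p$ for the current phase in the period, and emits at each column the number of letters predicted by the transitions of the simulated run at that column, taken in order from the pattern $w_0$ followed by cyclic repetitions of $\pi$ followed by $w_2$. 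At the exit column $x_{i+1}$, $\cT'$ checks that the total number of letters emitted matches $|w_0|+|w_1|+|w_2|$ and that the guessed $\out{\rho\mid Z^\leftshort}$, $\out{\rho\mid Z^\rightshort}$ are compatible with the simulated transitions at the boundary columns.

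Correctness is then immediate: whenever $\cT'$ accepts $u$ with output $w$, the guess yields a successful run $\rho$ of $\cT$ with $\out{\rho}=w$, hence $\cT'\subseteq\cT$; and since every $u\in U$ admits a decomposable successful run that $\cT'$ can guess, $\dom(\cT')\supseteq U$. The state space is the product of: crossing sequences ($|Q|^{\hmax}$, single exponential in $|\cT|$); the current factor type and endpoint data; period words over $\Gamma$ of length $\le\bound$; and bounded auxiliary strings over $\Gamma$ of length $\le\bound$. Since $\bound=\cmax\cdot\hmax\cdot(2^{3\emax}+4)$ with $\emax=(2|Q|)^{2\hmax}$ single exponential, $\bound$ is double exponential in $|\cT|$, so the set of bounded strings has triple exponential size, which dominates $|\cT'|$. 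The main obstacle, hidden in the ``update'' and ``emit'' steps above, is to verify that the change of auxiliary strings and the number of letters emitted at a single column transition really depend only on local data (the two crossing sequences, the guessed level change, the current transitions), and that the letters emitted over the whole factor reconstruct the correct segment of $\out{\rho[\ell_i,\ell_{i+1}]}$ in left-to-right order. This is exactly what the inequalities of Definition~\ref{def:factors} are engineered to enable, but turning this into a precise one-way construction requires a meticulous case analysis on the four transition types between consecutive columns and on which parts of the bounded strings flow into the committed output versus remain pending.
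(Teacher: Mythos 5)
Your construction follows the same overall strategy as the paper's proof of Proposition~\ref{prop:sufficiency}: guess a successful run $\rho$ of $\cT$ via crossing sequences together with a decomposition into diagonals and blocks, simulate diagonals column by column carrying bounded auxiliary strings, simulate blocks via a period word and a phase pointer, and bound the state space triple-exponentially since $\bound$ is double exponential in $|\cT|$.

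There is, however, a genuine gap in the diagonal bookkeeping that you flag but do not resolve. You store, at column $x$, only the two strings $\out{\rho\mid Z_{\ell_x}^\upperleft}$ and $\out{\rho\mid Z_{\ell_x}^\lowerright}$ attached to the currently active diagonal. This is not enough. A later factor $\rho[\ell_j,\ell_{j+1}]$ of the decomposition, with $x_j>x$, may itself revisit positions $\le x$; the output produced during those backward excursions must be emitted only when that later factor becomes active, and, since the input is scanned left-to-right once, it must already have been harvested when column $x$ was passed. Once the current diagonal ends, its own $Z_{\ell_{x_{i+1}}}^\upperleft$ is empty and carries none of this information, so a purely local store loses data the simulation needs. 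The paper's construction therefore stores, at every $x$, the intercepted outputs in $[0,x]$ occurring after any $\ell_{x'}$ with $x'\ge x$ in a diagonal, together with the analogous data for future blocks (and, symmetrically, guessed data on the right for past factors); one then argues this enlarged store is still bounded because $\rho$ is normalized, so only about $\hmax$ later factors can dip to the left of $x$, each contributing at most $\bound$. A related issue: your emission rule (emit only the letters at positions $\le x+1$ before $\ell_{x+1}$) would scramble run order, since a transition at position $x+1$ may occur in the run before a transition at some position $\le x$ that you already emitted. The paper instead maintains the invariant that after column $x$ the entire prefix of $\out{\rho}$ up to $\ell_x$ has been produced, emitting contributions from positions $>x$ eagerly as guesses and verifying them later. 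Both repairs are needed to turn your outline into a correct construction.
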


\paragraph{Deciding one-way definability.}
Recall that $\cT$ is one-way definable iff $\dom(\cT) \subseteq U$, 
so iff $\dom(\cT) \cap U^\complement =\emptyset$. 
The lemma below exploits the characterization of Theorem~\ref{thm:main2} 
to show that the language $U^\complement$ can be recognized by an
 NFA $\cU^\complement$ of triple exponential size. The lemma
actually shows that the NFA recognizing $U^\complement$ can be constructed
using double exponential {\sl workspace}. 

\begin{lemma}\label{lem:U-complement}
Given a functional two-way transducer $\cT$, 
one can construct in $\twoexpspace$ an NFA recognizing $U^\complement$.
\end{lemma}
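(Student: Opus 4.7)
The plan is to build the NFA $\mathcal{A}$ recognizing $U^\complement$ as the union of NFAs for the two natural components. First, I would show $U^\complement = \dom(\cT)^\complement \cup V$, where $V$ consists of the inputs $u$ admitting a successful run $\rho$ of $\cT$ with a \emph{violating} inversion $(L_1,C_1,L_2,C_2)$---one whose output word $w := \out{\tr{C_1}}\,\out{\rho[\an{C_1},\an{C_2}]}\,\out{\tr{C_2}}$ has no period $p\le\bound$ dividing both $|\out{\tr{C_1}}|$ and $|\out{\tr{C_2}}|$. The inclusion $U^\complement\subseteq\dom(\cT)^\complement\cup V$ is the contrapositive of the per-run form of \PR2$\Rightarrow$\PR3 established before Proposition~\ref{prop:block-periodicity}; the reverse inclusion $V\cap\dom(\cT)\subseteq U^\complement$ is also per run, because if $\rho$ admitted a decomposition, the construction of Proposition~\ref{prop:sufficiency} applied to that decomposition would yield a one-way transducer matching $\rho$ on $u$, and Proposition~\ref{prop:inversion} would then forbid any violating inversion inside $\rho$. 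For $\dom(\cT)^\complement$, standard crossing-sequence subset construction and complementation produce a one-way DFA of single exponential size.

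For $V$ I would build an NFA $\mathcal{A}_V$ that scans $u$ left-to-right while nondeterministically (i) simulating a successful run $\rho$ of $\cT$ via crossing sequences; (ii) committing to the two idempotent loops $L_1,L_2$ by marking their endpoints and tracking the associated effects in the semigroup $(\cE,\odot)$; (iii) committing to the components $C_1,C_2$ and the anchors $\an{C_1},\an{C_2}$, and verifying $\an{C_1}\leqtime\an{C_2}$ together with $x_1\ge x_2$; and (iv) generating the characters of $w$ in order by replaying, whenever the head lies inside $L_i$, the $(L_i,C_i)$-factors in the order prescribed by Lemma~\ref{lem:component2}. Components (i)--(iii) contribute only single-exponential state, as crossing sequences and effects are of single exponential size.

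The main obstacle is verifying, on top of (iv), that $w$ fails periodicity for \emph{every} $p\le\bound$, since $\bound$ is double exponential in $|\cT|$. My plan is to run $\bound$ verifiers in parallel inside the state of $\mathcal{A}_V$. The verifier for $p$ maintains the residues $|\out{\tr{C_1}}|\bmod p$ and $|\out{\tr{C_2}}|\bmod p$, plus---via nondeterminism---a guessed witness position $i$ inside $w$ together with the remembered letter $w[i]$ and a $p$-step countdown to $w[i+p]$; a branch survives only if, for every $p$, either one of the residues is nonzero or a mismatch $w[i]\neq w[i+p]$ has been confirmed. Each verifier uses $O(\log p)$ bits, so the periodicity gadget occupies $\sum_{p=1}^{\bound}O(\log p)=O(\bound\log\bound)$ bits, which is double exponential in $|\cT|$. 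Combined with the single-exponential overhead of (i)--(iii), the state of $\mathcal{A}_V$ is encoded in double exponential bits, and its transition relation is computable on-the-fly within the same space bound, yielding the announced $\twoexpspace$ construction.
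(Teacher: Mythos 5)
Your construction matches the paper's: both build an NFA that guesses a normalized run $\rho$ via crossing sequences, an inversion $\cI$, and, for each $p \le \bound$, a witness that $p$ is not a valid period, with the double-exponential workspace bound coming from the $O(\bound\log\bound)$ bits needed to run the $\bound$ verifiers in parallel. You are in fact a bit more careful than the paper's own two-line sketch, which phrases the condition only as ``no $p\le\bound$ is a period of $w_{\rho,\cI}$'' and does not mention the divisibility constraints $p\mid|\out{\tr{C_1}}|$, $p\mid|\out{\tr{C_2}}|$ that appear in property \PR2; your residue counters handle exactly this. You also split off $\dom(\cT)^\complement$ explicitly, which the paper leaves implicit.

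One step of your justification is not sound, though it does not break the construction. For the inclusion $V\cap\dom(\cT)\subseteq U^\complement$ you argue: if $\rho$ admitted a decomposition then Proposition~\ref{prop:sufficiency} would produce a one-way transducer ``matching $\rho$ on $u$'', and Proposition~\ref{prop:inversion} would then exclude violating inversions. But Proposition~\ref{prop:inversion} requires the hypothesis that $\cT$ itself is one-way definable (its proof pumps the input to $u^{(m_1,m_2)}$ and needs the one-way transducer to agree with $\cT$ on all those pumped inputs); a one-way transducer that merely agrees with $\cT$ on the single word $u$ does not satisfy that hypothesis. So you have not actually established $V\cap\dom(\cT)\subseteq U^\complement$. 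Fortunately, this direction is never used: what the algorithm needs is only that (i) $u\notin\dom(\cT)$ or some run of $\rho$ fails to decompose implies your NFA accepts $u$ (the contrapositive of the per-run \PR2$\Rightarrow$\PR3 argument gives you a violating inversion), and (ii) if your NFA accepts some $u\in\dom(\cT)$ then $\cT$ is not one-way definable (a direct application of Proposition~\ref{prop:inversion} with its correct hypothesis). So the NFA may accept a language that is a priori a superset of $U^\complement\cap\dom(\cT)$ inside $\dom(\cT)$, but the emptiness test $\dom(\cT)\cap L(\cA)=\emptyset$ is still exactly one-way definability, which is all Corollary~\ref{cor:complexity} requires.
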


\begin{proof}
Consider an input word $u$. By Theorem~\ref{thm:main2} we know that 
$u\in U^\complement$ iff there exist a successful run $\rho$ of $\cT$ 
on $u$ and an inversion $\cI=(L_1,C_1,L_2,C_2)$ of $\rho$ such that
no positive number $p \le \bound$ is a period of the word
\[
  w_{\rho,\cI} ~=~ 
  \outb{\tr{C_1}} ~ \outb{\rho[\an{C_1},\an{C_2}]} ~ \outb{\tr{C_2}}.
\]
The latter condition on $w_{\rho,\cI}$ can be rephrased
as follows: there is a function $f:\{1,\dots,\bound\} \rightarrow \{1,\dots,|w_{\rho,\cI}|\}$
such that $w_{\rho,\cI}[f(p)] \neq w_{\rho,\cI}[f(p)+p]$ 
for all positive numbers $p\le\bound$.
Recall that $\bound = \cmax \cdot \hmax \cdot (2^{3\emax}+4)$,
where $\hmax = 2|Q|-1$, $\emax = (2|Q|)^{2\hmax}$, and
$Q$ is the state space of the two-way transducer $\cT$.
This means that the run $\rho$, the inversion $\cI$, and the 
function $f$ described above can all be guessed within double exponential space,
namely, using a number of states that is at most a triple exponential 
w.r.t.~$|\cT|$. In particular, we can construct in $\twoexpspace$ 
an NFA recognizing $U^\complement$.
\end{proof}

As a consequence of the previous lemma and of Theorem~\ref{thm:main2}, we have that the emptiness 
of the language $\dom(\cT) \cap U^\complement$, and hence the one-way 
definability of $\cT$, can be decided in $\twoexpspace$:

\begin{corollary}\label{cor:complexity}
The problem of deciding whether a functional two-way transducer is one-way definable is in $\twoexpspace$.
\end{corollary}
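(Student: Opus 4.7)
The plan is to reduce one-way definability to the emptiness of an intersection of two NFAs, and to decide this emptiness by an on-the-fly reachability computation.

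First, I would invoke Theorem~\ref{thm:main2} to restate the problem: $\cT$ is one-way definable iff $\dom(\cT)\subseteq U$, equivalently iff $\dom(\cT)\cap U^\complement=\emptyset$. Next, two NFAs are needed. On the one hand, a standard crossing-sequence construction converts the two-way automaton underlying $\cT$ into a one-way NFA $\cD$ of exponential size recognizing $\dom(\cT)$. On the other hand, Lemma~\ref{lem:U-complement} gives an NFA $\cU^\complement$ recognizing $U^\complement$, whose state set is of triple exponential size but whose individual states and transitions can be represented and generated in double exponential workspace (each state encodes a current configuration of a guessed run $\rho$ of $\cT$, together with a partially-constructed inversion $(L_1,C_1,L_2,C_2)$ and a witness function $f:\{1,\dots,\bound\}\to\bbN$ ruling out every candidate period).

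Second, I would decide emptiness of $\cD\times\cU^\complement$ by a non-deterministic on-the-fly search. The algorithm guesses an input symbol at a time, maintaining the pair of current states of $\cD$ and $\cU^\complement$ and a length counter bounded by the size of the product (triple exponential in $|\cT|$); acceptance is declared once both components reach final states. A pair of states and the step counter each fit in double exponential space, so the whole procedure runs in $\mathsf{NSPACE}(2^{2^{O(|\cT|)}})$. By Savitch's theorem this equals $\twoexpspace$, so one-way definability of $\cT$ is decidable in double exponential space.

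The only delicate point is confirming that the transitions of $\cU^\complement$ can indeed be produced symbol by symbol within double exponential space, without materializing the whole automaton. This is exactly what the proof of Lemma~\ref{lem:U-complement} provides: updating the guessed run $\rho$ by one transition of $\cT$, updating the incrementally guessed loops, components and anchors of the inversion $\cI$, and updating the witness function $f$ at the new position, all amount to local operations on objects whose bit-size is double exponential in $|\cT|$. Combined with the on-the-fly reachability check above, this yields the claimed $\twoexpspace$ upper bound.
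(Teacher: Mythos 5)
Your proof is correct and follows essentially the same route as the paper: reduce one-way definability to emptiness of $\dom(\cT) \cap U^\complement$ via Theorem~\ref{thm:main2}, invoke Lemma~\ref{lem:U-complement} for the NFA recognizing $U^\complement$, and conclude with an on-the-fly emptiness check. The paper leaves the emptiness-check step implicit in a single sentence; you make it explicit (exponential-size NFA for $\dom(\cT)$, nondeterministic reachability in the product with double-exponential per-state space, Savitch), but the underlying argument is the same.
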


\section{Definability by sweeping transducers}\label{sec:sweepingness}

A two-way transducer is called \emph{sweeping} 
if every successful run of it performs reversals only 
at the extremities of the input word, i.e.~when reading 
the symbols $\vdash$ or $\dashv$. Similarly, we call
it \emph{$k$-pass sweeping} if it is sweeping and every 
successful run performs at most $k-1$ reversals. 
Clearly, a $1$-pass sweeping transducer is the same 
as a one-way transducer. 

In this section we are considering the following question: given a
functional two-way transducer, is it equivalent to some $k$-pass
sweeping transducer? We call such transducers \emph{$k$-pass sweeping
definable}. If the parameter $k$ is not given \emph{a priori}, then we
denote them as \emph{sweeping definable transducers}. 

In \cite{bgmp16} we built up on the characterization of 
one-way definability for (the restricted class of) sweeping 
transducers~\cite{bgmp15} in order to determine the minimal number
of passes required by sweeping transductions.
Essentially, the idea was to consider a 
generalization of the notion of inversion, called 
\emph{$k$-inversion}, and proving that $k$-pass sweeping 
definability is equivalent to asking that every $k$-inversion 
generates a periodic output.

We show that we can follow the same approach for two-way transducers. 
More precisely, we first define a \emph{co-inversion} 
in a way similar to Definition \ref{def:inversion}, namely,
as a tuple $(L_1,C_1,L_2,C_2)$ consisting of two idempotent 
loops $L_1,L_2$, a component $C_1$ of $L_1$, and a component 
$C_2$ of $L_2$ such that 
\begin{itemize}
  \item $\an{C_1} \leqtime \an{C_2}$,
  \item $\out{\tr{C_1}},\out{\tr{C_2}}\neq\emptystr$, and
  \item $\an{C_i}=(x_i,y_i)$ for $i=1,2$, then $x_1 \le x_2$.
\end{itemize}
The only difference compared to inversions is the ordering 
of the positions of the anchors, which is now reversed.

Alternating inversions and co-inversions leads to:

\begin{definition}\label{def:k-inversion}
A \emph{$k$-inversion} is a tuple 
$\bar\cI = (\cI_0,\dots,\cI_{k-1})$, where $\cI_i = (L_i,C_i,L'_i,C'_i)$
is either an inversion or a co-inversion depending on whether $i$ is 
even or odd, and $\an{C'_i} \leqtime \an{C_{i+1}}$ for all $i<k-1$.
\par\noindent
A $k$-inversion $\bar\cI$ is \emph{safe} if 
for {\sl some} $0\le i<k$, the word 
\[
  \outb{\tr{C_i}} ~ \outb{\rho[\an{C_i},\an{C'_i}]} ~ \outb{\tr{C'_i}}
\] 
has period $p\le\bound$ dividing 
$|\out{\tr{C_i}}|$ and $|\out{\tr{C'_i}}|$. 
\end{definition}

Similar to the characterization of $k$-pass 
sweeping definability in~\cite{bgmp16}, 
we show now the following characterization for 2-way
transducers, using Theorem~\ref{thm:main2} as a black-box:

\begin{theorem}\label{thm:k-pass-sweeping}
Let $\cT$ be a functional two-way transducer and $k>0$. 
The following are equivalent:
\begin{enumerate}
  \item $\cT$ is $k$-pass sweeping definable.
  \item All $k$-inversions of all successful runs of $\cT$ are safe.
\end{enumerate}
The problem of deciding whether the above conditions hold
is in $\twoexpspace$; more precisely, it can be decided in double
exponential space w.r.t.~$|\cT|$ and in polynomial space w.r.t.~$k$.
\end{theorem}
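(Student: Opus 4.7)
\medskip
\noindent
My plan is to follow the same blueprint as Theorem~\ref{thm:main2}, but with an alternating pattern of $k$ inversions and co-inversions playing the role that a single inversion played in the one-way case. The key idea behind safety is a pigeonhole argument: a $k$-pass sweeping transducer only has $k$ passes available, each pass being either left-to-right or right-to-left, so whenever $\cT$ exhibits $k$ alternating inversion/co-inversion behaviours, at least one of them must be ``realized in a single pass'' of any equivalent $k$-pass sweeping transducer, and the corresponding output must therefore be periodic with small period, exactly as in Proposition~\ref{prop:inversion}. Throughout the argument I will re-use Theorem~\ref{thm:main2} as a black box on restricted portions of runs.

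\smallskip
\noindent
For the direction (1)$\,\Rightarrow\,$(2), I would fix an equivalent $k$-pass sweeping transducer $\cT'$ and consider an arbitrary $k$-inversion $\bar\cI = (\cI_0,\dots,\cI_{k-1})$ of a successful run $\rho$ of $\cT$. By simultaneously pumping the $2k$ idempotent loops occurring in the $\cI_i$'s and comparing the output of $\cT$ with that of $\cT'$, I would partition the output produced by $\cT'$ into the contributions of its $k$ passes. Each pass of $\cT'$ is one-way (in some direction) on a factor of the input, so within a single pass the combinatorial argument of Proposition~\ref{prop:inversion} applies verbatim. Since there are $k$ indices $i$ but only $k$ passes, and since inversions (odd-to-even position reversal) and co-inversions (even-to-odd) can only be ``absorbed'' by passes in the matching direction, at least one $\cI_i$ must be entirely covered by a single pass of $\cT'$. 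Applying Proposition~\ref{prop:inversion} to that pass yields the period bound required by safety.

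\smallskip
\noindent
For the converse (2)$\,\Rightarrow\,$(1), I would generalize the decomposition technique of Section~\ref{sec:decomposition}. Fix a successful run $\rho$ of $\cT$ and assume all its $k$-inversions are safe. I would define a refined equivalence $\simeq_k$ on locations, where two locations are equivalent if they are linked by a chain of alternating inversions and co-inversions of length less than $k$, and show via Proposition~\ref{prop:block-periodicity} (applied pass by pass after a Fine--Wilf gluing as in Lemma~\ref{lemma:fine-wilf}) that each $\simeq_k$-class induces a ``block'' with periodic output of period at most $\bound$. Safety of every $k$-inversion guarantees that no chain of length $k$ can be non-periodic everywhere, so one can partition $\rho$ into at most $k$ alternating factors, each one decomposable into diagonals and blocks in the sense of Definition~\ref{def:decomposition}, with the $i$-th factor traversed in a single direction. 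Each such factor can then be realized by one pass of a sweeping transducer, exactly as in Proposition~\ref{prop:sufficiency}. Gluing these $k$ passes together produces a $k$-pass sweeping transducer $\cT'$ with $\cT'\subseteq\cT$ and $\dom(\cT')\supseteq\dom(\cT)$; functional equivalence follows.

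\smallskip
\noindent
For the complexity statement, the language $U_k^\complement$ of inputs admitting an unsafe $k$-inversion can be recognized by a non-deterministic machine that simultaneously guesses $\rho$, the $k$ quadruples $(L_i,C_i,L'_i,C'_i)$, and for each $i$ a position witnessing non-periodicity $p\le\bound$ (a function $f_i:\{1,\dots,\bound\}\to\bbN$ as in the proof of Lemma~\ref{lem:U-complement}). Guessing and verifying each quadruple takes double exponential space in $|\cT|$, and the $k$ quadruples are stored in sequence, costing only a multiplicative factor polynomial in $k$. Thus $U_k^\complement$ can be built on the fly in double exponential space w.r.t.~$|\cT|$ and polynomial space w.r.t.~$k$, and the emptiness test $\dom(\cT)\cap U_k^\complement = \emptyset$ fits in the same bound. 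The main obstacle I anticipate is the converse direction: one has to verify that the $\simeq_k$-classes really split along the input axis into $k$ directed factors and that the combinatorial patching across consecutive blocks produced in opposite directions (the analog of Lemma~\ref{lem:consecutive-blocks}) remains compatible with the alternating pass structure of a sweeping transducer.
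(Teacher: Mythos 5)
Your outline follows the same blueprint as the paper's proof sketch (which itself defers the details to the earlier sweeping paper): factorize the run into $k$ alternating one-way pieces for the direction (2)$\Rightarrow$(1), pump simultaneously to derive a periodicity property for (1)$\Rightarrow$(2), and decide (2) by guessing an unsafe $k$-inversion in $\twoexpspace$. The pigeonhole counting for (1)$\Rightarrow$(2) is sound: $k$ disjoint (in run order) pieces $\cI_0,\dots,\cI_{k-1}$ of the output, and $k-1$ cut points between the $k$ passes, so some $\cI_i$ lands entirely inside one pass. The complexity discussion also matches the paper's Lemma~\ref{lem:U-complement}.

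One imprecision worth flagging: you assert that Proposition~\ref{prop:inversion} "applies verbatim" to the pass $\pi_j$ that covers the surviving $\cI_i$. It does not apply as a black box, for two reasons. First, $\pi_j$ produces only a chunk of the output, whereas the proposition is stated for a one-way transducer equivalent to $\cT$ (producing the \emph{whole} output); the equation you obtain is not of the clean form (\ref{eq:one-way-vs-two-way}), since after pumping the loops of $\cI_i$ every pass $\pi_1,\dots,\pi_k$ contributes iterated factors to the right-hand side, not just $\pi_j$. Second, the loops $L_i,L'_i$ must be arranged (via a preliminary pumping, as in the proof of Proposition~\ref{prop:inversion}) to be loops of all $k$ passes simultaneously. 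What you can reuse is the \emph{argument} of Proposition~\ref{prop:inversion} (Lemmas~\ref{lemma:one-way-vs-two-way}, \ref{lemma:fine-wilf}, Theorem~\ref{thm:saarela} applied to the richer equation), which is exactly how the paper phrases it. Similarly for (2)$\Rightarrow$(1): your $\simeq_k$ gadget is plausible but not obviously equivalent to the paper's simpler greedy cut into $k$ alternating factors, each of which has only periodic inversions or only periodic co-inversions; the paper's route lets you invoke the $\simeq$-block machinery and Proposition~\ref{prop:sufficiency} per factor with little change, while your version would still need to establish an analogue of Lemma~\ref{lem:consecutive-blocks} across factors of opposite scanning direction, as you yourself anticipate at the end.
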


\begin{proof}[Proof sketch]
A proof of this result (modulo the necessary changes in complexity 
due to the new characterization) can be found in \cite{bgmp16}.
Here we present in an informal way the main steps of the proof.

Proving the implication from 2) to 1) boils down to factorize a
successful run $\rho$ of $\cT$ into factors $\rho_1,\dots,\rho_k$ 
in such a way that, for every odd (resp.~even) index $i$, 
$\rho_i$ contains only inversions (resp.~co-inversions) that are
safe, namely, that yield periodic outputs. We use the 
constructions presented in Section \ref{sec:decomposition} to
simulate the output of each factor $\rho_i$ with a one-way transducer,
which scans the input either from left to right or from right to left, 
depending on whether $i$ is odd or even. 

The implication from 1) to 2) amounts at showing that every 
$k$-inversion is safe under the assumption that $\cT$ is $k$-pass
sweeping definable. The proof builds upon the characterization of
one-way definability. More precisely, we consider a successful run 
of $\cT$ and the corresponding run of an equivalent $k$-pass sweeping 
transducer $\cT'$ that produces the same output. We then pump those
runs simultaneously on all loops $L_1,\dots,L_{2k}$ that form the 
$k$-inversion. By reasoning as in the proof of 
Proposition~\ref{prop:inversion}, we derive a periodicity
property that shows that the $k$-inversion is safe.

Finally, the $\twoexpspace$ complexity of the decision problem
follows from reducing $k$-pass sweeping definability to the 
emptiness of the language $\dom(\cT) \:\cap\: U^\complement$,
where $U$ is now the language of words $u\in\dom(\cT)$
such that all $k$-inversions of all successful runs on $u$ are safe.
As usual the latter problem is solved
by constructing an NFA that recognizes $U^\complement$ by
guessing a successful run $\rho$ of $\cT$ and an unsafe 
$k$-inversion of $\rho$.
\end{proof}

A similar problem, called \emph{sweeping definability}, concerns 
the characterization of those transductions that are definable by 
sweeping transducers, but this time without enforcing any bound 
on the number of passes (or reversals). 
Of course the latter problem 
is interesting only when the transductions are presented by means of 
two-way transducers. 
Below we show that the sweeping definability problem reduces to 
the $k$-pass sweeping definability problem, when we set $k$ large enough.

\begin{theorem}\label{thm:sweeping}
A functional two-way transducer $\cT$ is sweeping definable iff
it is $k$-pass sweeping definable, for $k=2\hmax\cdot (2^{3\emax}+1)$.
\end{theorem}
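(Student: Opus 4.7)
One direction is immediate: a $k$-pass sweeping transducer is in particular a sweeping transducer, so $k$-pass sweeping definability trivially implies sweeping definability.

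For the converse, I would prove the contrapositive: if $\cT$ is not $k$-pass sweeping definable for $k = 2\hmax \cdot (2^{3\emax}+1)$, then $\cT$ is not sweeping definable. By Theorem~\ref{thm:k-pass-sweeping} the assumption furnishes a successful run $\rho$ of $\cT$ carrying an unsafe $k$-inversion $\bar{\cI} = (\cI_0, \ldots, \cI_{k-1})$, where every single component $\cI_i = (L_i, C_i, L'_i, C'_i)$ is itself unsafe (in the sense that no admissible period divides both trace-output lengths). The plan is then to amplify $\bar{\cI}$ into unsafe $k'$-inversions for every $k' \ge k$, in suitably pumped versions of $\rho$; Theorem~\ref{thm:k-pass-sweeping} will then rule out $k'$-pass sweeping definability for every $k'$, hence rule out sweeping definability.

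The amplification is a pigeonhole-and-pumping argument, and the value of $k$ is tailored to make it work. Each $\cI_i$ carries structural data including the levels of its two anchors, which lie in $\{0,\ldots,\hmax-1\}$, together with a parity constraint from Definition~\ref{def:k-inversion} (inversion versus co-inversion). The factor $2\hmax$ in $k$ is chosen so that pigeonholing over these level-and-parity classes extracts a long subsequence of the $\cI_i$'s that share matching anchor data. The remaining factor $2^{3\emax}+1$ ensures that the sequence of effects of the intervals delimited by the anchors of this subsequence has length exceeding $2^{3\emax}$, so that in any factorization forest of height at most $3\emax$ produced by Simon's theorem (Theorem~\ref{th:simon}), some internal node must have more than two children and hence be labelled by an idempotent effect. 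This yields an idempotent loop $L^\star$ inside $\rho$ that strictly contains at least one of the unsafe components $\cI_j$.

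Pumping $L^\star$ via Proposition~\ref{prop:pumping} produces a new successful run of $\cT$ in which each iteration of $L^\star$ contributes a fresh copy of the unsafe inversion $\cI_j$. Unsafety is a local periodicity property of the outputs $\outb{\tr{C_j}}$, $\outb{\rho[\an{C_j},\an{C'_j}]}$ and $\outb{\tr{C'_j}}$, all of which are preserved by idempotent pumping, so each new copy remains unsafe. Iterating, we obtain pumped runs carrying unsafe $k'$-inversions for every $k' \ge k$, as required. The main technical obstacle is to ensure that the many copies of $\cI_j$ produced by pumping can be assembled, together with the surrounding inversions of $\bar{\cI}$, into a genuine $k'$-inversion in the sense of Definition~\ref{def:k-inversion}, i.e.~that the alternation between inversions and co-inversions and the anchor ordering constraint $\an{C'_i} \leqtime \an{C_{i+1}}$ are preserved across successive copies of $L^\star$. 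This is precisely why the pigeonhole step must match parities as well as levels, forcing consecutive copies of $L^\star$ to contribute inversions of the correct types in the correct order.
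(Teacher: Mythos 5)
Your high-level plan---take the contrapositive, extract an unsafe $k$-inversion from Theorem~\ref{thm:k-pass-sweeping}, find an idempotent loop by a pigeonhole/Simon argument, pump it to manufacture unsafe $k'$-inversions for all $k'$---is the right skeleton and matches the paper's. But the central step has a genuine gap: pumping a loop $L^\star$ that contains a \emph{single} (co-)inversion $\cI_j$ cannot produce the alternation that Definition~\ref{def:k-inversion} requires. All the copies of $\cI_j$ created by iterating $L^\star$ have the same type (all inversions, or all co-inversions, since the relative positions of the anchors are preserved under pumping), and a $k'$-inversion must strictly alternate inversions at even indices with co-inversions at odd ones, with the chaining constraint $\an{C'_i}\leqtime\an{C_{i+1}}$. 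Your suggested remedy---pigeonholing the $\cI_i$'s by anchor levels and parities so that ``consecutive copies of $L^\star$ contribute inversions of the correct types''---does not touch this: matching anchor data across different $\cI_i$'s still gives you copies of one type, not an inversion--co-inversion alternation.

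The paper's proof works around exactly this obstacle, and it is worth seeing how. For each \emph{even} index $i$ it considers the interval of positions $X_i$ swept by the run between $\an{C_i}$ and $\an{C'_{i+1}}$, i.e.\ the span covering the inversion $\cI_i$ \emph{together with} the next co-inversion $\cI_{i+1}$. The factor $2\hmax$ in the bound on $k$ then enters through a lexicographic greedy argument, not a pigeonhole over anchor data: any fixed position is covered by at most $\hmax$ of these $X_i$'s (because crossing sequences have length at most $\hmax$ in a normalized run), so among the $k/2=\hmax(2^{3\emax}+1)$ intervals one can extract $n=2^{3\emax}+1$ that are pairwise disjoint. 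Simon's theorem is then applied to the \emph{left endpoints} of those $n$ disjoint intervals, yielding consecutive idempotent loops $[x,x']$ and $[x',x'']$ of equal effect, with $x,x',x''$ among those endpoints. The interval $X_i$ whose left endpoint is the middle point $x'$ satisfies $\min(L)<\min(X_i)\le\max(X_i)<\max(L)$ for $L=[x,x'']$, so the whole run factor $\rho[\an{C_i},\an{C'_{i+1}}]$---carrying one inversion \emph{and} the following co-inversion---lies strictly inside a single factor $\alpha$ intercepted by $L$. Pumping $L$ then replicates $\alpha$, hence replicates a full inversion--co-inversion pair each time, and these copies chain together into an unsafe $2m$-inversion of $\pump_L^m(\rho)$ for every $m$. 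The unsafety of each copy is indeed inherited as you noted, since the relevant outputs are untouched by pumping. The crucial idea you are missing is to trap a \emph{consecutive pair} $(\cI_i,\cI_{i+1})$ inside one intercepted factor, not a single $\cI_j$; and to obtain the strict positional containment that makes this trapping possible, one must pass through the disjointness of the $X_i$'s rather than any pigeonhole over anchor levels.
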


\begin{proof}[Proof sketch]
The right-to-left implication is trivial.
The proof of the converse direction is in the appendix;
here we only provide a rough idea.
Suppose that $\cT$ is not $k$-pass sweeping definable,
for $k=2\hmax\cdot (2^{3\emax}+1)$.
By Theorem \ref{thm:k-pass-sweeping}, there exists
a successful run $\rho$ of $\cT$ and an unsafe 
$k$-inversion $\bar\cI$ of $\rho$. 
One can exploit the fact that $k$ is large enough 
to find an idempotent loop $L$ and an intercepted 
factor of it that covers two consecutive (co-)inversions
of $\bar\cI$. 
Then, by pumping the loop $L$, one can 
introduce arbitrarily long alternations between 
inversions and co-inversions, thus showing that
there are successful runs with unsafe
$k'$-inversions for all $k'>0$.
By Theorem \ref{thm:k-pass-sweeping}, this
proves that $\cT$ is not sweeping definable.
\end{proof}

\begin{corollary}\label{cor:sweeping}
The problem of deciding sweeping definability of a 
functional two-way transducer is in $\twoexpspace$.
\end{corollary}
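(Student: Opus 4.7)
The plan is an immediate combination of Theorems~\ref{thm:sweeping} and~\ref{thm:k-pass-sweeping}. First I would invoke Theorem~\ref{thm:sweeping} to reduce the sweeping definability question for the input functional two-way transducer $\cT$ to its $k$-pass sweeping definability for the fixed value $k = 2\hmax\cdot(2^{3\emax}+1)$, and then run the decision procedure supplied by Theorem~\ref{thm:k-pass-sweeping} on the pair $(\cT,k)$.

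Next I would verify that the resulting complexity lands in $\twoexpspace$. Since $\hmax = 2|Q|-1$ is linear in $|\cT|$ and $\emax = (2|Q|)^{2\hmax}$ is single exponential in $|\cT|$, the parameter $k$ is doubly exponential in $|\cT|$, with a binary representation of single exponential size. Theorem~\ref{thm:k-pass-sweeping} decides $k$-pass sweeping definability in double exponential space w.r.t.~$|\cT|$ and in polynomial space w.r.t.~$k$; substituting our choice of $k$, the second contribution becomes polynomial in a doubly exponential quantity and is therefore still double exponential in $|\cT|$. Taking the maximum of the two contributions leaves a double exponential space bound in $|\cT|$, which is precisely $\twoexpspace$.

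There is essentially no real technical obstacle here, only a small bookkeeping point to check: the decision procedure from Theorem~\ref{thm:k-pass-sweeping} must be instantiable with such a large $k$ without exceeding the space budget. This holds because all its pieces that depend on $k$ (typically counters tracking a current pass index or similar) use only $O(\log k)$ bits, i.e.~single exponential space in $|\cT|$, which comfortably fits inside the double exponential budget. No modification of the earlier algorithms is required, and the corollary follows.
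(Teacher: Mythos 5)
Your proposal is correct and follows exactly the route the paper intends: combine Theorem~\ref{thm:sweeping} to fix $k=2\hmax\cdot(2^{3\emax}+1)$, then invoke the $\twoexpspace$-and-$\mathsf{poly}(k)$-space bound from Theorem~\ref{thm:k-pass-sweeping}, observing that $k$ is at most doubly exponential in $|\cT|$ so the overall space stays doubly exponential. The only minor wrinkle is that your closing paragraph shifts between two readings of ``polynomial space w.r.t.\ $k$'' (polynomial in the integer $k$ vs.\ $O(\log k)$ bits), but as you note both interpretations stay within the double exponential budget for this $k$, so the conclusion is unaffected.
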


Another consequence is that it is decidable in $\twoexpspace$
whether a functional two-way transducer is equivalent to some
two-way transducer performing a bounded number of reversals in every run.
Indeed, in \cite{bgmp16} we proved that a functional transducer is
$k$-pass sweeping definable iff it is $(k-1)$-reversal definable.

Other classes of transducers are amenable to characterizations
via similar techniques. For example, we may consider an even more
restricted variant of transducer, called \emph{rotating transducer}.
This is a sweeping transducer that emits output only when moving from left to right.
Such a transducer is called \emph{$k$-pass} if it performs at most $k$ passes from 
left to right.
To characterize those transductions that are definable by $k$-pass rotating transducers
it suffices to modify slightly the definition of $k$-inversion, by 
removing co-inversions. Formally, one defines a \emph{rotating $k$-inversion}
as a tuple $\bar\cI = (\cI_0,\dots,\cI_{k-1})$, where each
$\cI_i = (L_i,C_i,L'_i,C'_i)$ is an inversion and 
$\an{C'_i} \lesstime \an{C_{i+1}}$ for all $i<k-1$. 
The analogous of Theorems \ref{thm:k-pass-sweeping}
and \ref{thm:sweeping} would then carry over.

\section{Conclusions}\label{sec:conclusions}

It was shown recently \cite{fgrs13} that it is decidable whether 
a given two-way transducer can be implemented by some one-way
transducer, however the complexity of the algorithm is non-elementary.

The main contribution of our paper is a new algorithm that solves 
the above question with elementary complexity, precisely in $\twoexpspace$.
The algorithm is based on a characterization of those transductions, 
given as two-way transducers, that can be realized by one-way 
transducers. The flavor of our characterization is 
different from that of \cite{fgrs13}. The approach from
\cite{fgrs13} is based on a variant of Rabin and Scott's construction 
\cite{RS59} of one-way automata, and on local modifications of
the two-way run. Our approach relies instead on the global notion
of \emph{inversions} and on combinatorial arguments, and is inspired
by our previous result for sweeping
transducers~\cite{bgmp15}. The technical challenge in this paper
compared to~\cite{bgmp15} is however significant, and required several
involved proof ingredients, ranging from the type of loops we
consider, up to the decomposition of the runs.

Our characterization based on inversions yields not only an elementary
solution for the problem of one-way definability, but also for definability by sweeping
(resp.~rotating) transducers, with either known or unknown number of passes.
All characterizations above are effective, and can be decided in $\twoexpspace$.

\bibliographystyle{IEEEtran}
\bibliography{biblio}

\newpage
\onecolumn
\appendix

Before proving Lemma~\ref{lem:component2},
we show that in a loop, the levels of each component form an interval.

\begin{lemma}\label{lem:component}
Let $C$ be a component of a loop $L=[x_1,x_2]$, $y^-=\min(C)$, and $y^+=\max(C)$. 
The nodes of $C$ are precisely the levels in the interval $[y^-,y^+]$.
Moreover, if $C$ is left-to-right (resp.~right-to-left), 
then $y^+$ is the smallest level 
$\ge y^-$ such that between $(x_1,y^-)$ and $(x_2,y^+)$ (resp.~$(x_2,y^-)$ and $(x_1,y^+)$)
there are equally many $\LL$-factors and $\RR$-factors intercepted by $L$.
\end{lemma}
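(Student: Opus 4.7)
My plan is to prove the two assertions by exploiting the planar structure of the run $\rho$ together with the time-ordering of column visits. Since $\rho$ is normalized, it traces a simple path in the two-dimensional grid of locations, so the factors intercepted by $L$ are pairwise disjoint and form a non-crossing family of arcs inside the rectangle $[x_1,x_2]\times\{0,\dots,h-1\}$, where $h=h_1=h_2$ since $L$ is a loop. Every node $y<h$ has one outgoing and one incoming edge in $F_L$, so $F_L$ is a permutation of $\{0,\dots,h-1\}$ whose cycles are precisely the components, and the factor endpoints yield a non-crossing perfect matching on the cyclic list
\[
 (x_1,0),(x_1,1),\dots,(x_1,h-1),(x_2,h-1),(x_2,h-2),\dots,(x_2,0),
\]
in which the two boundary points attached to each node $y$ sit at mirrored positions $y$ and $2h-1-y$.

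For the first assertion I argue by contradiction. Suppose some $z\in(y^-,y^+)$ lies in another cycle $C'$. The chords belonging to $C$, augmented by artificial ``identification'' arcs pairing the two boundary points of every $y\in C$, form a closed walk whose interior, by non-crossing, encloses the two boundary points attached to $z$ (they lie strictly between those of $y^-$ and $y^+$ on both halves of the cyclic order). Any chord emanating from $z$ must therefore avoid crossing the chords of $C$, hence remain inside the enclosed region, and reach a level also lying strictly between $y^-$ and $y^+$. Iterating through the full cycle of $z$ in $C'$, all of its nodes would stay in $(y^-,y^+)$, and combined with the monotonicity constraint (the $y$-th visit of $\rho$ to a column is at level $y$) this eventually forces a chord of $C'$ to conflict with a chord of $C$, a contradiction.

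For the second assertion, assume $C$ is left-to-right (so $y^-$ is even); the other case is symmetric. Each $\LL$-factor intercepted by $L$ contributes exactly two $x_1$-visits and zero $x_2$-visits; each $\RR$-factor contributes two $x_2$-visits and zero $x_1$-visits; and each $\LR$- or $\RL$-factor contributes one visit per column. Hence within the time window from $(x_1,y^-)$ to $(x_2,y^+)$, the difference between the numbers of $x_1$- and $x_2$-visits equals twice the difference between the numbers of $\LL$- and $\RR$-factors intercepted by $L$ in that window. By the first assertion the nodes of $C$ cover $[y^-,y^+]$, and one checks that this forces the numbers of $x_1$- and $x_2$-visits in the window to coincide, hence so do the $\LL$- and $\RR$-counts. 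Minimality of $y^+$ follows from the first assertion: any $y\in[y^-,y^+)$ would miss some node of the unique cycle containing $y^-$, preventing the same balance.

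The main obstacle is turning the planarity/enclosure argument of the first part into a fully rigorous combinatorial proof, since one must carefully track how the non-crossing matching interacts with the time-ordered visits. A cleaner alternative I would fall back on is to induct on $h$, peeling off an outermost $\LL$- or $\RR$-factor of $L$ (which by non-crossing spans a contiguous block of levels on one side) and reducing to a loop of smaller height where the interval property of the remaining cycles follows by induction.
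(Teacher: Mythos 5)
Your approach via planarity (non-crossing arcs inside the rectangle $[x_1,x_2]\times\{0,\dots,h-1\}$) is genuinely different from the paper's. The paper tracks an integer quantity $d_\ell$ --- the difference between the numbers of $\LL$-factors and $\RR$-factors that have ended strictly before location $\ell$ --- shows that its zeros alternate in parity and exactly delimit the components, and then exhibits explicit covering paths inside each partition interval to show that each one is a single cycle. The planar picture is reasonable intuition, but as written your argument has a genuine gap, one you flag yourself.

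Non-crossing alone does \emph{not} force the cycles of $F_L$ to be intervals. With $h=5$, the matching $\{(x_1,0),(x_1,3)\}$, $\{(x_1,2),(x_1,1)\}$, $\{(x_1,4),(x_2,4)\}$, $\{(x_2,3),(x_2,0)\}$, $\{(x_2,1),(x_2,2)\}$ is non-crossing in your cyclic order and respects the source/target parity structure, yet it would yield the non-interval cycle $\{0,3\}$ alongside $\{1,2\}$ and $\{4\}$. What rules such a configuration out for actual runs is not planarity but the fact that every $\LL$- and every $\RR$-factor connects \emph{consecutive} levels: an intercepted factor stays strictly inside $[x_1,x_2]$, so it can only cross $x_1$ (resp.\ $x_2$) at its two endpoints, which are therefore the $y$-th and $(y{+}1)$-th crossing of that column. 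The paper isolates this as the ``incremental edges'' observation, and it is the real engine of the argument. Your ``monotonicity constraint (the $y$-th visit of $\rho$ to a column is at level $y$)'' gestures at this, but your decisive step --- ``this eventually forces a chord of $C'$ to conflict with a chord of $C$'' --- is never derived; without incrementality, an enclosed cycle $C'$ can sit inside the curve of $C$ with no chord conflict at all. You yourself call this ``the main obstacle,'' and indeed the first assertion remains unproved.

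The second assertion inherits the same problem: ``one checks that this forces the numbers of $x_1$- and $x_2$-visits in the window to coincide'' is asserted rather than argued, and the minimality sentence (``any $y\in[y^-,y^+)$ would miss some node... preventing the same balance'') does not explain why a missed node implies an imbalance. The fallback induction on $h$ is also not developed: once incrementality is in hand, \emph{every} $\LL$/$\RR$-factor already joins adjacent levels, so ``outermost'' is not the right notion to peel off, and you would need to specify precisely how the flow of the reduced loop relates to the original one.
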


\begin{figure}[!t]
\centering
\begin{tikzpicture}[baseline=0, inner sep=0, outer sep=0, minimum size=0pt, scale=0.32]
  \tikzstyle{dot} = [draw, circle, fill=white, minimum size=4pt]
  \tikzstyle{fulldot} = [draw, circle, fill=black, minimum size=4pt]
  \tikzstyle{factor} = [->, shorten >=1pt, dotted, rounded corners=5]
  \tikzstyle{fullfactor} = [->, >=stealth, shorten >=1pt, very thick, rounded corners=5]

\begin{scope}[yscale=1.2]
  \fill [pattern=north east lines, pattern color=gray!25]
        (0,-2) rectangle (6,13);
  \draw [dashed, thin, gray] (0,-2) -- (0,13);
  \draw [dashed, thin, gray] (6,-2) -- (6,13);
  \draw [gray] (0,-2.25) -- (0,-2.5) -- (6,-2.5) -- (6,-2.25);
  \draw [gray] (3,-2.75) node [below] {\footnotesize $L$};
 
  \draw (3,-0.75) node () {$\vdots$};
  \draw (3,12.25) node () {$\vdots$};

  \draw (0,0) node [dot] (node1) {};
  \draw (2,0) node (node2) {};
  \draw (2,1) node (node3) {};
  \draw (0,1) node [dot] (node4) {};

  \draw (0,2) node [dot] (node5) {};
  \draw (2,2) node (node6) {};
  \draw (2,3) node (node7) {};
  \draw (0,3) node [dot] (node8) {};

  \draw (0,4) node [dot] (node9) {};
  \draw (2,4) node (node10) {};
  \draw (2,5) node (node11) {};
  \draw (0,5) node [dot] (node12) {};

  \draw (0,6) node [fulldot] (node13) {};
  \draw (6,6) node [fulldot] (node14) {};

  \draw (6,7) node [dot] (node15) {};
  \draw (4,7) node (node16) {};
  \draw (4,8) node (node17) {};
  \draw (6,8) node [dot] (node18) {};

  \draw (6,9) node [dot] (node19) {};
  \draw (4,9) node (node20) {};
  \draw (4,10) node (node21) {};
  \draw (6,10) node [dot] (node22) {};

  \draw (6,11) node [fulldot] (node23) {};
  \draw (0,11) node [fulldot] (node24) {};

  \draw (node13) node [left = 2mm, gray] {\footnotesize $\ort y_i$};
  \draw (node14) node [right = 2mm, gray] {\footnotesize $\olft y_{i-1}+1$};
  \draw (node23) node [right = 2mm, gray] {\footnotesize $\olft y_i$};
  \draw (node24) node [left = 2mm, gray] {\footnotesize $\ort y_i+1$};

  \draw [fullfactor] (node1) -- (node2.center) -- (node3.center) -- (node4); 
  \draw [fullfactor] (node5) -- (node6.center) -- (node7.center) -- (node8); 
  \draw [fullfactor] (node9) -- (node10.center) -- (node11.center) -- (node12); 

  \draw [fullfactor, blue] (node13) -- (node14);

  \draw [fullfactor] (node15) -- (node16.center) -- (node17.center) -- (node18); 
  \draw [fullfactor] (node19) -- (node20.center) -- (node21.center) -- (node22); 

  \draw [fullfactor, red] (node23) -- (node24);
\end{scope}

\begin{scope}[yscale=1.7,xshift=20cm]
  \draw [dashed, thin, gray] (0,-0.5) -- (0,8.5);
  \draw [gray] (0,-1.75) node [below] {\footnotesize $F_L$};
    
  \draw (0,-0.75) node () {$\vdots$};
  \draw (0,9.25) node () {$\vdots$};

  \draw (0,0) node [dot] (node1) {};
  \draw (2,0) node (node2) {};
  \draw (2,1) node (node3) {};
  
  \draw (0,1) node [dot] (node4) {};
  \draw (-2,1) node (node5) {};
  \draw (-2,2) node (node6) {};
  
  \draw (0,2) node [dot] (node7) {};
  \draw (2,2) node (node8) {};
  \draw (2,3) node (node9) {};

  \draw (0,3) node [dot] (node10) {};
  \draw (-2,3) node (node11) {};
  \draw (-2,4) node (node12) {};

  \draw (0,4) node [dot] (node13) {};
  \draw (2,4) node (node14) {};
  \draw (2,5) node (node15) {};

  \draw (0,5) node [fulldot] (node16) {};
  \draw (-3,5) node (node17) {};
  \draw (-3,8) node (node18) {};
  \draw (-2,8) node (node19) {};
  \draw (0,8) node [fulldot] (node20) {};

  \draw (0,7) node [fulldot] (node21) {};
  \draw (4,7) node (node22) {};
  \draw (4,-1) node (node23) {};
  \draw (2,-1) node (node24) {};
  \draw (0,0) node [fulldot] (node25) {};

  \draw (node1) node [left = 1mm, gray] {\footnotesize $\olft y_{i-1}+1$};
  \draw (node16) node [above right = 1mm, gray] {\footnotesize $\olft y_i$};
  \draw (node21) node [left = 1mm, gray] {\footnotesize $\ort y_i$};
  \draw (node20) node [right = 1mm, gray] {\footnotesize $\ort y_i+1$};

  \draw [fullfactor] (node1) -- (node2.center) -- (node3.center) -- (node4); 
  \draw [fullfactor] (node4) -- (node5.center) -- (node6.center) -- (node7); 
  \draw [fullfactor] (node7) -- (node8.center) -- (node9.center) -- (node10); 
  \draw [fullfactor] (node10) -- (node11.center) -- (node12.center) -- (node13); 
  \draw [fullfactor] (node13) -- (node14.center) -- (node15.center) -- (node16); 
  \draw [fullfactor, red] (node16) -- (node17.center) -- 
                          (node18.center) -- (node19.center) -- (node20);
  \draw [fullfactor, blue] (node21) -- (node22.center) -- 
                           (node23.center) -- (node24.center) -- (node25);  
\end{scope}
\end{tikzpicture}
\caption{Some factors intercepted by $L$ and the corresponding edges in the flow.}%
\label{fig:edges} 
\end{figure}

\begin{proof}
To ease the understanding the reader may refer to Fig.~\ref{fig:edges}, 
that shows some factors intercepted by $L$ and the corresponding edges in the flow.

We begin the proof by partitioning the set of levels of the flow into 
suitable intervals as follows.
We observe that every loop $L=[x_1,x_2]$ intercepts equally many
$\LL$-factors and $\RR$-factors. This is so because the crossing
sequences at $x_1,x_2$ have the same length $h$. 
We also observe that the sources of the factors intercepted by $L$
are either of the form $(x_1,y)$, with $y$ even, or $(x_2,y)$, with $y$ odd.
For any location $\ell\in\{x_1,x_2\}\times\bbN$ that is the source
of an intercepted factor, we define $d_{\ell}$ to be the difference 
between the number of $\LL$-factors and the number of $\RR$-factors 
intercepted by $L$ that {\sl end} at a location {\sl strictly before} $\ell$.
Intuitively, $d_{\ell}=0$ when the prefix of the run up to location $\ell$ 
has visited equally many times the position $x_1$ and the position $x_2$. 
For the sake of brevity, we let $d_y=d_{(x_1,y)}$ for an even level $y$,
and $d_y=d_{(x_2,y)}$ for an odd level $y$. 
Note that $d_0=0$. We also let $d_{h+1}=0$, by convention.

We now consider the numbers $z$'s, with $0\le z\le h+1$, such that 
$d_z=0$, that is: $0 = z_0 < z_1 < \dots < z_k = h+1$.  
Using a simple induction, we prove that for all $i \le k$, 
the parity of $z_i$ is the same as the parity of its index $i$. 
The base case $i = 0$ is trivial, since $z_0 = 0$. For the inductive
case, suppose that $z_i$ is even (the case of $z_i$ odd is similar). 
We prove that $z_{i+1}$ is odd by a case distinction based on the 
type of factor intercepted by $L$ that starts at level $z_i$. 
If this factor is an $\LR$-factor, then it ends at the same level $z_i$,
and hence $d_{z_i+1} = d_{z_i} = 0$, which implies that $z_{i+1} = z_i +1$ is odd. 
Otherwise, if the factor is an $\LL$-factor, then for all levels $z$ strictly
between $z_i$ and $z_{i+1}$, we have $d_z > 0$, and since $d_{z_{i+1}} =0$, 
the last factor before $z_{i+1}$ must decrease $d_z$, that is, must be an $\RR$-factor. 
This implies that $(x_2,z_{i+1})$ is the source of an intercepted factor, 
and thus $z_{i+1}$ is odd.

The levels $0 = z_0 < z_1 < \dots < z_k = h+1$ induce a partition of
the set of nodes of the flow into intervals of the form $Z_i=[z_i,z_{i+1}-1]$.
To prove the lemma, it is suffices to show that the subgraph of the flow 
induced by each interval $Z_i$ is connected. Indeed, because the union of 
the previous intervals covers all the nodes of the flow, and because each 
node has one incoming and one outgoing edge, this will imply that the 
intervals coincide with the components of the flow.

\medskip
Now, let us fix an interval of the partition, which we denote by $Z$ 
to avoid clumsy notation. Hereafter, we will focus on the edges of subgraph
of the flow induced by $Z$ (we call it \emph{subgraph of $Z$} for short). 
We prove a few basic properties of these edges.
For the sake of brevity, we call $\LL$-edges the edges of the subgraph of $Z$
that correspond to the $\LL$-factors intercepted by $L$, and similarly for 
the $\RR$-edges, $\LR$-edges, and $\RL$-edges.

We make a series of assumption to simplify our reasoning.
First, we assume that the edges are ordered based on the occurrences of the 
corresponding factors in the run. For instance, we may say the first, 
second, etc.~$\LR$-edge (of the subgraph of $Z$) --- from now on, we 
tacitly assume that the edges are inside the subgraph of $Z$.
Second, we assume that the first edge of the subgraph of $Z$ starts 
at an even node, namely, it is an $\LL$-edge or an $\LR$-edge 
(if this were not the case, one could apply symmetric arguments to prove the lemma).
From this it follows that the subgraph contains $n$ $\LR$-edges interleaved by
$n-1$ $\RL$-edges, for some $n>0$. 
Third, we assume that $\min(Z)=0$, in order to avoid clumsy notations 
(otherwise, we need to add $\min(Z)$ to all the levels considered hereafter).

Now, we observe that, by definition of $Z$, there are equally 
many $\LL$-edges and $\RR$-edges: 
indeed, the difference between the number of $\LL$-edges and the number
of $\RR$-edges at the beginning and at the end of $Z$ is the same, namely,
$d_z = 0$ for both $z=\min(Z)$ and $z=\max(Z)$.
It is also easy to see that the $\LL$-edges and the $\RR$-edges 
are all of the form $y \rightarrow y+1$, for some level $y$.
We call these edges \emph{incremental edges}.

For the other edges, we denote by $\ort y_i$ (resp.~$\olft y_i$) 
the source level of the $i$-th $\LR$-edge (resp.~the $i$-th $\RL$-edge). 
Clearly, each $\ort y_i$ is even, and each $\olft y_i$ is odd,
and $i\le j$ implies $\ort y_i < \ort y_j$ and $\olft y_i < \olft y_j$.
Consider the location $(x_1,\ort y_i)$, which is the source of the 
$i$-th $\LR$-edge (e.g.~the edge in blue in the figure).  
The latest location at position $x_2$ that 
precedes $(x_1,\ort y_i)$ must be of the form 
$(x_2,\olft y_{i-1})$, provided that $i>1$.
This implies that, for all $1<i\le n$, the $i$-th $\LR$-edge 
is of the form $\ort y_i \rightarrow \olft y_{i-1} + 1$.
For $i=1$, we recall that $\min(Z)=0$ and observe that the 
first location at position $x_2$ that occurs after the location 
$(x_1,0)$ is $(x_2,0)$, and thus the
first $\LR$-edge has a similar form: $\ort y_1 \rightarrow \olft y_0 + 1$,
where $\olft y_0 = -1$ by convention.

Using symmetric arguments, we see that the $i$-th $\RL$-edge 
(e.g.~the one in red in the figure) is of the form 
$\olft y_i \rightarrow \ort y_i + 1$. In particular,
the last $\LR$-edge starts at the level $\ort y_n = \max(Z)$.

Summing up, we have just seen that the edges of the subgraph of $Z$ are of the following forms:
\begin{itemize}
  \item \rightward{$y \rightarrow y+1$}                      
        \hspace{25mm} (incremental edges),
  \item \rightward{$\ort y_i \rightarrow \olft y_{i-1} + 1$} 
        \hspace{25mm} ($i$-th $\LR$-edge, for $i=1,\dots,n$),
  \item \rightward{$\olft y_i \rightarrow \ort y_i + 1$}     
        \hspace{25mm} ($i$-th $\RL$-edge, for $i=1,\dots,n-1$).
\end{itemize}
In addition, we have $\ort y_i +1 = \olft y_i + 2d_{\olft y_i}$.
Since $d_z > 0$ for all $\min(Z) < z < \max(Z)$, this implies that 
$\ort y_i > \olft y_i$.

\medskip
The goal is to prove that the subgraph of $Z$ is strongly connected, 
namely, it contains a cycle that visits all its nodes. As a matter of fact, 
because components are also strongly connected subgraphs, and because every 
node in the flow has in-/out-degree $1$, this will imply that the considered 
subgraph coincides with a component $C$, thus implying that the nodes in $C$ 
form an interval.
Towards this goal, we will prove a series of claims that aim at
identifying suitable sets of nodes that are covered by paths in 
the subgraph of $Z$.
Formally, we say that a path \emph{covers} a set $Y$ 
if it visits all the nodes in $Y$, and possibly other nodes.
As usual, when we talk of edges or paths, we tacitly 
understand that they occur inside the subgraph of $Z$.
On the other hand, we do not need to assume $Y\subseteq Z$,
since this would follow from the fact that $Y$ is covered by
a path inside $Z$.
For example, the right hand-side of Fig.~\ref{fig:edges} shows 
a path from $\ort y_i$ to $\ort y_i+1$ that covers the set 
$Y=\{\ort y_i,\ort y_i+1\} \cup [\olft y_{i-1}+1,\olft y_i]$.

The covered sets will be intervals of the form
\[
  Y_i ~=~ [\olft y_{i-1}+1,\olft y_i].
\]
Note that the sets $Y_i$ are well-defined for all $i=1,\dots,n-1$, but not 
for $i=n$ since $\olft y_n$ is not defined either (the subgraph of $Z$
contains only $n-1$ $\RL$-edges). 

\begin{claim}
For all $i=1,\dots,n-1$, there is a path from $\ort y_i$ to $\ort y_i+1$ 
that covers $Y_i$ (for short, we call it an \emph{incremental path}).
\end{claim}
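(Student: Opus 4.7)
The plan is to prove the claim by strong induction on $i$. Since every node in the subgraph of $Z$ has out-degree at most one, the walk obtained by starting at $\ort y_i$ and repeatedly following the unique outgoing edge is entirely determined, and my goal is to show that this walk reaches $\ort y_i+1$ and visits every level of $Y_i$ along the way.

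First I would carry out a preliminary classification of the possible \emph{obstacles} inside $Y_i$, meaning nodes whose outgoing edge is a non-incremental $\LR$- or $\RL$-edge. Using the monotonicity of the sequences $(\olft y_j)_j$ and $(\ort y_j)_j$, together with $\ort y_j>\olft y_j$ and the bounds defining $Y_i$, I would show that any $\olft y_j\in Y_i$ forces $j=i$ (so $\olft y_i$ is the unique odd-level obstacle inside $Y_i$), while any $\ort y_j\in Y_i$ forces $j<i$. Let $j_1<\dots<j_m$ enumerate those indices $j<i$ with $\ort y_j\in Y_i$; the case $m=0$ is the base of the induction and reduces to a pure incremental climb.

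Then I would describe the deterministic walk and verify that it has the expected climb-and-detour shape. Starting at $\ort y_i$, the outgoing $\LR$-edge drops us at $\olft y_{i-1}+1$; since the classification excludes any obstacle strictly between $\olft y_{i-1}+1$ and $\ort y_{j_1}$, the walk climbs incrementally to $\ort y_{j_1}$. There the outgoing edge is the $\LR$-edge initiating the walk promised by the induction hypothesis (applicable since $j_1<i$), which reaches $\ort y_{j_1}+1$ while covering $Y_{j_1}$. Iterating, between each $\ort y_{j_a}+1$ and $\ort y_{j_{a+1}}$ the walk climbs incrementally (again no obstacles, by the classification) and then takes the IH-detour; after the final detour it climbs from $\ort y_{j_m}+1$ up to $\olft y_i$ and follows the $\RL$-edge $\olft y_i\to\ort y_i+1$.

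To conclude, I would read off coverage: the concatenation of the incremental segments spans exactly
$[\olft y_{i-1}+1,\ort y_{j_1}]\cup[\ort y_{j_1}+1,\ort y_{j_2}]\cup\dots\cup[\ort y_{j_m}+1,\olft y_i]$,
which equals $Y_i$. The detours may wander outside $Y_i$ (in general $Y_{j_a}\not\subseteq Y_i$), but this is harmless since the claim only requires covering $Y_i$, not staying inside it. The one step that needs real care is the obstacle classification in the inductive step: once the parity and monotonicity inequalities are tidied up so that ``no $\olft y_k$ with $k\ne i$ lies in $Y_i$'' and ``no $\ort y_k$ lies strictly between two consecutive anchors $\ort y_{j_a},\ort y_{j_{a+1}}$ (or between $\olft y_{i-1}+1$ and $\ort y_{j_1}$, or between $\ort y_{j_m}+1$ and $\olft y_i$)'' are both available, every incremental climb is forced, every detour is supplied by the IH with index strictly smaller than $i$, and the rest is bookkeeping.
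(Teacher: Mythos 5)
Your proposal is correct and follows essentially the same route as the paper's proof: strong induction on $i$, with the $\LR$-edge $\ort y_i \to \olft y_{i-1}+1$ as the initial edge, the $\RL$-edge $\olft y_i \to \ort y_i+1$ as the final edge, and the interior stitched together by incremental edges interleaved with IH-supplied detours at each $\ort y_j$ with $j<i$ falling in $Y_i$; your obstacle classification ($\olft y_j\in Y_i\Rightarrow j=i$, $\ort y_j\in Y_i\Rightarrow j<i$) is exactly the case analysis the paper performs on nodes in $[\olft y_{i-1}+1,\olft y_i-1]$. The only cosmetic difference is that you phrase the path as the unique deterministic out-degree-one walk and enumerate the detour anchors $j_1<\dots<j_m$ explicitly, while the paper constructs the same path by hand and treats $i=1$ as a separate base case (which in your framing is just the $m=0$ case).
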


\begin{claimproof}
We prove the claim by induction on $i$. The base case $i=1$ is rather easy. 
Indeed, we recall the convention that $\olft y_0+1 = \min(Z)=0$. 
In particular, the node $\olft y_0+1$ is the target of the first 
$\LR$-edge of the subgraph of $Z$.
Before this edge, according to the order induced by the run, 
we can only have $\LL$-edges of the form $y \rightarrow y+1$, 
with $y=0,2,\dots,\ort y_1-2$. Similarly, after the $\LR$-edge we have 
$\RR$-edges of the form $y \rightarrow y+1$, with $y=1,3,\dots,\olft y_1-2$.
Those incremental edges can be connected to form the path 
$\olft y_{i-1}+1 \rightarrow^* \olft y_1$ that covers 
the interval $[\olft y_0+1,\olft y_1]$ . 
By prepending to this path the $\LR$-edge $\ort y_1 \rightarrow \olft y_0 + 1$,
and by appending the $\RL$-edge $\olft y_1 \rightarrow \ort y_1 + 1$,
we get a path from $\ort y_1$ to $\ort y_1+1$ that covers 
the interval $[\olft y_0+1,\olft y_1]$. The latter interval is precisely 
the set $Y_1$.

For the inductive step, we fix $1<i<n$ and we construct the desired
path from $\ort y_i$ to $\ort y_i+1$. The initial edge of this path
is defined to be the $\LR$-edge $\ort y_i \rightarrow \olft y_{i-1} + 1$.
Similarly, the final edge of the path will be the $\RL$-edge 
$\olft y_i \rightarrow \ort y_i + 1$, which exists since $i<n$.
It remains to connect $\olft y_{i-1} + 1$ to $\olft y_i$.
For this, we consider the edges that depart from nodes strictly 
between $\olft y_{i-1}$ and $\olft y_i$.

Let $y$ be an arbitrary node in $[\olft y_{i-1}+1,\olft y_i-1]$. 
Clearly, $y$ cannot be of the form $\olft y_j$, for some $j$, 
because it is strictly between $\olft y_{i-1}$ and $\olft y_i$.
So $y$ cannot be the source of an $\RL$-edge.
Moreover, recall that the $\LL$-edges and the $\RR$-edges are the
of the form $y \rightarrow y+1$. As these incremental edges 
do not pose particular problems for the construction of the path, 
we focus mainly on the $\LR$-edges that depart from nodes inside 
$[\olft y_{i-1}+1,\olft y_i-1]$.

Let $\ort y_j \rightarrow \olft y_{j-1}+1$ be such an $\LR$-edge, 
for some $j$ such that $\ort y_j \in [\olft y_{i-1}+1,\olft y_i-1]$.
If we had $j\ge i$, then we would have 
$\ort y_j \ge \ort y_i > \olft y_i$, but this would contradict
the assumption that $\ort y_j \in [\olft y_{i-1}+1,\olft y_i-1]$.
So we know that $j<i$. 
This enables the use of the inductive hypothesis, which implies
the existence of an incremental path from $\ort y_j$ to $\ort y_j+1$ 
that covers the interval $Y_j$. 

Finally, by connecting the above paths using the 
incremental edges, and by adding the initial and 
final edges $\ort y_i \rightarrow \olft y_{i-1} + 1$ and
$\olft y_i \rightarrow \ort y_i + 1$, we obtain a path 
from $\ort y_i$ to $\ort y_i+1$. It is easy to see that this
path covers the interval $Y_i$.
\end{claimproof}

Next, we define 
\[
  Y ~=~ [\olft y_{n-1}+1,\ort y_n] ~\cup \bigcup_{1\le i<n} Y_i.
\]
We prove a claim similar to the previous one, 
but now aiming to cover $Y$ with a {\sl cycle}.
Towards the end of the proof we will argue that the set 
$Y$ coincides with the full interval $Z$, 
thus showing that there is a component $C$ 
whose set of notes is precisely $Z$.

\begin{claim}
There is a cycle that covers $Y$.
\end{claim}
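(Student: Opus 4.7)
My plan is to reuse, essentially verbatim, the inductive construction of Claim 1, with a single cosmetic modification: since there is no $n$-th $\RL$-edge, the walk will not end at $\ort y_n+1$ but will instead close up at $\ort y_n$, yielding a cycle. Concretely, starting at $\ort y_n$ I would first take the $n$-th $\LR$-edge $\ort y_n \to \olft y_{n-1}+1$, and then walk upward toward $\ort y_n$ using incremental edges $y\to y+1$, with a detour at each encountered $\ort y_j$ ($j<n$) that uses the path produced by Claim 1 to move from $\ort y_j$ to $\ort y_j+1$ while covering $Y_j$.

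First I would verify well-definedness: every $\ort y_j$ met on the main walk satisfies $\ort y_j<\ort y_n$ and hence $j<n$, so Claim 1 applies; moreover the node sets visited by distinct Claim 1 detours are pairwise disjoint (the $Y_i$'s partition $[0,\olft y_{n-1}]$) and are disjoint from the main walk's nodes in $[\olft y_{n-1}+1,\ort y_n]$. Thus the construction is a simple walk in the subgraph, and the main walk directly covers the top interval $[\olft y_{n-1}+1,\ort y_n]$.

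The main obstacle I anticipate is showing that every $Y_j$ with $1\le j<n$ is actually covered, since an $\ort y_j$ may sit below $\olft y_{n-1}+1$ and hence not be encountered on the main walk. I would handle this by introducing the set $U$ of indices $j$ for which $\ort y_j$ is visited either on the main walk or recursively inside some Claim 1 detour, and arguing by maximality that $U=\{1,\dots,n-1\}$. If some $j<n$ were outside $U$, picking $j$ maximal would force $\ort y_j\le\olft y_{n-1}$ (otherwise $\ort y_j$ lies on the main walk); by the partition of $[0,\olft y_{n-1}]$ into the $Y_i$'s together with the parity mismatch between $\ort y_j$ (even) and $\olft y_i$ (odd), one would find an index $i<n$ with $\ort y_j\in[\olft y_{i-1}+1,\olft y_i-1]$; since $\ort y_j<\olft y_i<\ort y_i$ forces $j<i$, the maximality of $j$ yields $i\in U$, and then the walk inside Claim 1's detour at $\ort y_i$ passes through $\ort y_j$ and takes a nested detour there, placing $j$ in $U$ as well, a contradiction.

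With $U=\{1,\dots,n-1\}$ the detours cover every $Y_j$, and combined with the main walk's coverage of $[\olft y_{n-1}+1,\ort y_n]$ the cycle covers $Y$. Finally, I would observe that the $Y_i$'s partition $[0,\olft y_{n-1}]$, so $Y=[0,\ort y_n]=Z$; since each node of the flow has in- and out-degree exactly one, the subgraph on $Z$ is forced to be a single cycle, i.e., a component of $F_L$, completing the proof of the lemma.
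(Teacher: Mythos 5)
Your proposal is correct and follows essentially the same strategy as the paper's proof: build the cycle starting from the $n$-th $\LR$-edge $\ort y_n \to \olft y_{n-1}+1$, walk upward using incremental edges and the incremental paths from Claim~1, and then establish by a maximality argument that every $\ort y_j$ with $j<n$ is visited (hence every $Y_j$ is covered). The two maximality arguments are mirror images: the paper takes the highest unvisited $\ort y_i$ and places it in some $Y_{j+1}$ with $j+1>i$, while you take the largest index $j$ outside $U$ and place $\ort y_j$ in some $Y_i$ with $i>j$, reaching the same contradiction.
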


\begin{claimproof}
It is convenient to construct our cycle starting from the last 
$\LR$-edge, that is, $\ort y_n \rightarrow \olft y_{n-1} + 1$, 
since this will cover the upper node $\ort y_n=\max(Z)$.
From there we continue to add edges and incremental paths, 
following an approach similar to the proof of the previous 
claim, until we reach the node $\ort y_n$ again.
More precisely, we consider the edges that depart from nodes strictly 
between $\olft y_{n-1}$ and $\ort y_n$. As there are only $n-1$ $\RL$-edges,
we know that every node in the interval $[\olft y_{n-1}+1,\ort y_n-1]$
must be source of an $\LL$-edge, an $\RR$-edge, or an $\LR$-edge.
As usual, incremental edges do not pose particular problems for 
the construction of the cycle, so we focus on the $\LR$-edges.
Let $\ort y_i \rightarrow \olft y_{i-1}+1$ be such an $\LR$-edge,
with $\ort y_i \in [\olft y_{n-1}+1,\ort y_n-1]$. Since $i < n$, we know 
from the previous claim that there is a path from $\ort y_i$ to $\ort y_i+1$
that covers $Y_i$.
We can thus build a cycle $\pi$ by connecting the above  
paths using the incremental edges and the $\LR$-edge 
$\ort y_n \rightarrow \olft y_{n-1} + 1$.

By construction, the cycle $\pi$ covers 
the interval $[\olft y_{n-1}+1,\ort y_n]$, and for every
$i<n$, if $\pi$ visits $\ort y_i$, then $\pi$ covers $Y_i$.
So to complete the proof --- namely, to show that $\pi$ covers the entire set $Y$ ---
it suffices to prove that $\pi$ visits each node $\ort y_i$, with $i<n$.

Suppose, by way of contradiction, that $\ort y_i$ is the node 
with the highest index $i<n$ that is not visited by $\pi$.
Recall that $\ort y_i > \olft y_i$. This shows that
\[
  \ort y_i ~\in~ [\olft y_i+1,\ort y_n] ~= \bigcup_{i\le j<n-1}[\olft y_j+1,\olft y_{j+1}] ~\cup~ [\olft y_{n-1}+1,\ort y_n].
\]
As we already proved that $\pi$ covers the interval $[\olft y_{n-1}+1,\ort y_n]$,
we know that $\ort y_i \in [\olft y_j+1,\olft y_{j+1}]$ for some $j$ with $i\le j<n-1$.
Now recall that $\ort y_i$ is the highest node that is not visited by $\pi$. This means
that $\ort y_{j+1}$ is visited by $\pi$. Moreover, since $j+1<n$, we know 
that $\pi$ uses the incremental path from $\ort y_{j+1}$ to $\ort y_{j+1}+1$, which
covers $Y_{j+1} = [\olft y_j+1,\olft y_{j+1}]$. But this contradicts the fact that $\ort y_i$ 
is not visited by $\pi$, since $\ort y_i \in [\olft y_j+1,\olft y_{j+1}]$.
\end{claimproof}

We know that the set $Y$ is covered by a cycle of the subgraph of $Z$, and that
$Z$ is an interval whose endpoints are consecutive levels $z<z'$, with $d_z=d_{z'}=0$.
For the homestretch, we prove that $Y=Z$. This will imply that the nodes of 
the cycle are precisely the nodes of the interval $Z$. 
Moreover, because the cycle must coincide with a component $C$ of the flow 
(recall that all the nodes have in-/out-degree $1$), this will show that 
the nodes of $C$ are precisely those of $Z$. 

To prove $Y=Z$ it suffices to recall its definition as the union of 
the interval $[\olft y_{n-1}+1,\ort y_n]$ with the sets $Y_i$, for all $i=1,\dots,n-1$.
Clearly, we have that $Y\subseteq Z$.
For the converse inclusion, we also recall that 
$\olft y_0+1 = 0 = \min(Z)$ and $\ort y_n=\max(Z)$. 
Consider an arbitrary level $z\in Z$. Clearly, we have either 
$z\le \olft y_i$, for some $1\le i<n$, or $z > \olft y_n$. 
In the former case, by choosing the smallest index $i$ such that $z \le \olft y_i$, 
we get $z\in [\olft y_{i-1}+1,\olft y_i]$, whence $z\in Y_i \subseteq Y$.
In the latter case, we immediately have $z\in Y$, by construction.
\end{proof}

\begin{replemma}{lem:component2}
If $C$ is a left-to-right (resp.~right-to-left) component 
of an {\sl idempotent} loop $L$, then the $(L,C)$-factors are in the following order: 
$k$ $\LL$-factors (resp.~$\RR$-factors), followed by one $\LR$-factor (resp.~$\RL$-factor), 
followed by $k$ $\RR$-factors (resp.~$\LL$-factors), for some $k \ge 0$. 
\end{replemma}

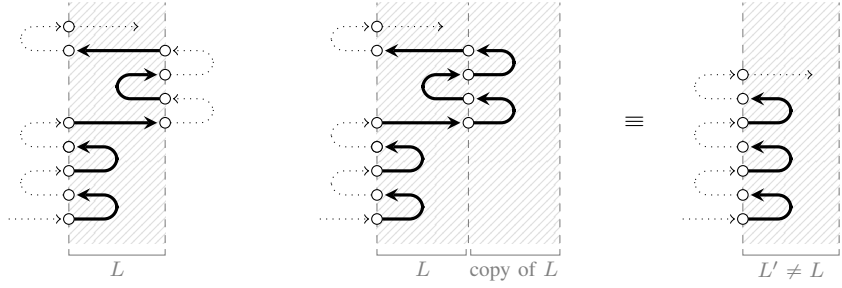
\begin{figure}[!t]
\centering
\begin{tikzpicture}[baseline=0, inner sep=0, outer sep=0, minimum size=0pt, scale=0.32]
  \tikzstyle{dot} = [draw, circle, fill=white, minimum size=4pt]
  \tikzstyle{fulldot} = [draw, circle, fill=black, minimum size=4pt]
  \tikzstyle{factor} = [->, shorten >=1pt, dotted, rounded corners=5]
  \tikzstyle{fullfactor} = [->, >=stealth, shorten >=1pt, very thick, rounded corners=5]

\begin{scope}[xshift=-10cm]
  \fill [pattern=north east lines, pattern color=gray!25]
        (4,-1) rectangle (8,9);
  \draw [dashed, thin, gray] (4,-1) -- (4,9);
  \draw [dashed, thin, gray] (8,-1) -- (8,9);
  \draw [gray] (4,-1.25) -- (4,-1.5) -- (8,-1.5) -- (8,-1.25);
  \draw [gray] (6,-1.75) node [below] {\footnotesize $L$};

  \draw (1.5,0) node (node0) {};
  \draw (4,0) node [dot] (node1) {};
  \draw (6,0) node (node2) {};
  \draw (6,1) node (node3) {};
  \draw (4,1) node [dot] (node4) {};
  \draw (2,1) node (node5) {};
  \draw (2,2) node (node6) {};
  \draw (4,2) node [dot] (node7) {};
  \draw (4,3) node [dot] (node8) {};
  \draw (10,2) node (node9) {};
  \draw (10,3) node (node10) {};
  \draw (4,3) node [dot] (node12) {};
  \draw (2,3) node (node13) {};
  \draw (2,4) node (node14) {};
  \draw (4,4) node [dot] (node15) {};
  \draw (8,4) node [dot] (node16) {};
  \draw (10,4) node (node17) {};
  \draw (10,5) node (node18) {};
  \draw (8,5) node [dot] (node19) {};
  \draw (6,5) node (node20) {};
  \draw (6,6) node (node21) {};
  \draw (8,6) node [dot] (node22) {};
  \draw (8,7) node  [dot] (node23) {};
  \draw (10,6) node (node24) {};
  \draw (10,7) node (node25) {} ;
  \draw (4,7) node [dot] (node26) {};
  \draw (4,8) node [dot] (node27) {};
  \draw (2,7) node (node28) {};
  \draw (2,8) node (node29) {};
  \draw (7,8) node (node30) {};
  \draw (6,2) node (node77) {};
  \draw (6,3) node (node88) {};

  \draw [factor] 
  (node0) -- (node1);
  \draw [fullfactor] (node1) -- (node2.center) -- 
                     (node3.center) -- (node4); 
  \draw [fullfactor] (node7) -- (node77.center) -- 
  (node88.center) -- (node8);
  
  \draw [factor] (node4) -- (node5.center) -- (node6.center) -- (node7); 
  
  \draw [factor] (node12) -- (node13.center) -- (node14.center) -- (node15);
  \draw [fullfactor] (node15) --  (node16);
  \draw [factor] (node16) -- (node17.center) -- (node18.center) -- (node19);
  \draw [fullfactor] (node19) -- (node20.center) -- 
                     (node21.center) -- (node22);
  \draw [factor] (node22) -- (node24.center) -- (node25.center) -- (node23);
  \draw [fullfactor] (node23) -- (node26) ;
   \draw [factor] (node26) -- (node28.center) -- (node29.center) -- (node27);
   \draw [factor] (node27) -- (node30) ;
\end{scope}

\begin{scope}[xshift=18cm]
  \fill [pattern=north east lines, pattern color=gray!25]
        (4,-1) rectangle (8,9);
  \draw [dashed, thin, gray] (4,-1) -- (4,9);
  \draw [dashed, thin, gray] (8,-1) -- (8,9);
  \draw [gray] (4,-1.25) -- (4,-1.5) -- (8,-1.5) -- (8,-1.25);
  \draw [gray] (6,-1.75) node [below] {\footnotesize $L' \neq L$};

  \draw (1.5,0) node (node0) {};
  \draw (4,0) node [dot] (node1) {};
  \draw (6,0) node (node2) {};
  \draw (6,1) node (node3) {};
  \draw (4,1) node [dot] (node4) {};
  \draw (2,1) node (node5) {};
  \draw (2,2) node (node6) {};
  \draw (4,2) node [dot] (node7) {};
  \draw (4,3) node [dot] (node8) {};
  \draw (4,4) node [dot] (node9) {};
  \draw (4,5) node [dot] (node10) {};
  \draw (6,4) node  (node11) {};
  \draw (6,5) node  (node12) {};
  \draw (2,3) node (node13) {};
  \draw (2,4) node (node14) {};
  \draw (4,6) node [dot] (node15) {};
  \draw (2,5) node (node17) {};
  \draw (2,6) node (node18) {};
  \draw (6,5) node (node20) {};
  \draw (6,6) node (node21) {};
  \draw (10.5,6) node (node23) {};
  \draw (6,2) node (node77) {};
  \draw (6,3) node (node88) {};

\draw (-0.5,4) node { $\equiv$}; 
  \draw [factor] 
  (node0) -- (node1);
  \draw [fullfactor] (node1) -- (node2.center) -- 
                     (node3.center) -- (node4); 
  \draw [factor] (node4) -- (node5.center) -- (node6.center) -- (node7); 
  \draw [fullfactor] (node7) -- (node77.center) --
  (node88.center) -- (node8);
  \draw [factor] (node8) -- (node13.center) -- (node14.center) -- (node9);
  \draw [fullfactor] (node9) -- (node11.center) -- (node12.center) -- (node10);
    \draw [factor] (node10) -- (node17.center) -- (node18.center) -- (node15);
    \draw [factor] (node15) -- (7,6);
\end{scope}

\begin{scope}[xshift=3cm, xscale=0.95]
  \fill [pattern=north east lines, pattern color=gray!25]
        (4,-1) rectangle (8,9);
  \fill [pattern=north east lines, pattern color=gray!25]
        (8,-1) rectangle (12,9);
  \draw [dashed, thin, gray] (4,-1) -- (4,9);
  \draw [dashed, thin, gray] (8,-1) -- (8,9);
  \draw [dashed, thin, gray] (12,-1) -- (12,9);
  \draw [gray] (4,-1.25) -- (4,-1.5) -- (7.9,-1.5) -- (7.9,-1.25);
  \draw [gray] (8.1,-1.25) -- (8.1,-1.5) -- (12,-1.5) -- (12,-1.25);
  \draw [gray] (6,-1.75) node [below] {\footnotesize $L$};
  \draw [gray] (10,-1.75) node [below] {\footnotesize copy of $L$};

  \draw (1.5,0) node (node0) {};
  \draw (4,0) node [dot] (node1) {};
  \draw (6,0) node (node2) {};
  \draw (6,1) node (node3) {};
  \draw (4,1) node [dot] (node4) {};
  \draw (2,1) node (node5) {};
  \draw (2,2) node (node6) {};
  \draw (4,2) node [dot] (node7) {};
  \draw (4,3) node [dot] (node8) {};
  \draw (10,2) node (node9) {};
  \draw (10,3) node (node10) {};
  \draw (4,3) node [dot] (node12) {};
  \draw (2,3) node (node13) {};
  \draw (2,4) node (node14) {};
  \draw (4,4) node [dot] (node15) {};
  \draw (8,4) node [dot] (node16) {};
  \draw (10,4) node (node17) {};
  \draw (10,5) node (node18) {};
  \draw (8,5) node [dot] (node19) {};
  \draw (6,5) node (node20) {};
  \draw (6,6) node (node21) {};
  \draw (8,6) node [dot] (node22) {};
  \draw (8,7) node  [dot] (node23) {};
  \draw (10,6) node (node24) {};
  \draw (10,7) node (node25) {} ;
  \draw (4,7) node [dot] (node26) {};
  \draw (4,8) node [dot] (node27) {};
  \draw (2,7) node (node28) {};
  \draw (2,8) node (node29) {};
  \draw (7,8) node (node30) {};
  \draw (6,2) node (node77) {};
  \draw (6,3) node (node88) {};

  \draw [factor] 
  (node0) -- (node1);
  \draw [fullfactor] (node1) -- (node2.center) -- 
                     (node3.center) -- (node4); 
  \draw [fullfactor] (node7) -- (node77.center) -- 
  (node88.center) -- (node8);
  
  \draw [factor] (node4) -- (node5.center) -- (node6.center) -- (node7); 
  
  \draw [factor] (node12) -- (node13.center) -- (node14.center) -- (node15);
  \draw [fullfactor] (node15) --  (node16);
  \draw [fullfactor] (node16) -- (node17.center) -- (node18.center) -- (node19);
  \draw [fullfactor] (node19) -- (node20.center) -- 
                     (node21.center) -- (node22);
  \draw [fullfactor] (node22) -- (node24.center) -- (node25.center) -- (node23);
  \draw [fullfactor] (node23) -- (node26) ;
   \draw [factor] (node26) -- (node28.center) -- (node29.center) -- (node27);
   \draw [factor] (node27) -- (node30) ;
\end{scope}

\end{tikzpicture}
\caption{Pumping a loop $L$ with a wrong shape and showing it is not
  idempotent.}\label{fig:notidem} 
\end{figure}

\begin{proof}
Suppose that $C$ is a left-to-right component of $L$.
We show by way of contradiction that $C$ has only one $\LR$-factor
and no $\RL$-factor. By Lemma~\ref{lem:component} this will yield to
the claimed shape. Fig.~\ref{fig:notidem} can be used as a reference
example for the arguments that follow.

We begin by listing the $(L,C)$-factors.
As usual, we order them based on their occurrences in the run $\rho$.
Let $\gamma$ be the first $(L,C)$-factor that is not an $\LL$-factor, 
and let $\beta_1,\dots,\beta_k$ be the $(L,C)$-factors that precede $\gamma$ 
(these are all $\LL$-factors).
Because $\gamma$ starts at an even level, it must be an $\LR$-factor.
Suppose that there is another $(L,C)$-factor, say $\zeta$, that comes 
after $\gamma$ and it is neither an $\RR$-factor nor an $\LL$-factor. 
Because $\zeta$ starts at an odd level, it must be an $\RL$-factor. 
Further let $\delta_1,\dots,\delta_{k'}$ be the intercepted $\RR$-factors 
that occur between $\gamma$ and $\zeta$.
We claim that $k'<k$, namely, that the number of $\RR$-factors between 
$\gamma$ and $\zeta$ is strictly less than the number of $\LL$-factors 
before $\gamma$. Indeed, if this were not the case, then, by Lemma 
\ref{lem:component}, the level where $\zeta$ starts would not belong to
the component $C$.

Now, consider the pumped run $\rho'=\pump_L^2(\rho)$, obtained by adding a new
copy of $L$. Let $L'$ be the loop of $\rho'$ obtained from the union
of $L$ and its copy. Since $L$ is idempotent, the components of $L$ are isomorphic
to the components of $L'$. In particular, we can denote by $C'$ the component
of $L'$ that is isomorphic to $C$. 
Let us consider the $(L',C')$-factors of $\rho'$. The first $k$ such factor 
are isomorphic to the $k$ $\LL$-factors $\beta_1,\dots,\beta_k$ from $\rho$.
However, the $(k+1)$-th element has a different shape: it is isomorphic to
$\gamma~\beta_1~\delta_1~\beta_2~\dots~\delta_{k'}~\beta_{k'+1}~\zeta$,
and in particular it is an $\LL$-factor.
This implies that the $(k+1)$-th edge of $C'$ is of the form $(y,y+1)$,
while the $(k+1)$-th edge of $C$ is of the form $(y,y-2k)$. 
This contradiction comes from having assumed the existence of 
the $\RL$-factor $\zeta$, and is illustrated in Fig.~\ref{fig:notidem}.
\end{proof}

The following lemma will be used to prove Theorem~\ref{thm:simon2}.

\begin{lemma}\label{lem:adjacent-factors}
If $L_1=[x_1,x_2]$ and $L_2=[x_2,x_3]$ are consecutive idempotent loops 
with the same effect and $\alpha,\beta$ are two factors intercepted by 
$L_1,L_2$ that are adjacent in the run (namely, they share the endpoint at position $x_2$),
then $\alpha$ and $\beta$ correspond to edges of the same component of $L_1$
(or, equally, $L_2$).
\end{lemma}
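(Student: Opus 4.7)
The plan is to exploit the hypothesis $E_{L_1}=E_{L_2}$ to make the flows of $L_1$ and $L_2$ literally coincide, and then reduce the claim to an elementary fact about directed graphs in which every vertex has in-degree and out-degree one: any edge entering a vertex and any edge leaving that same vertex lie on a common cycle.

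First, I would unpack the effect equality. From $E_{L_1}=E_{L_2}$ we get $F_{L_1}=F_{L_2}$; call this common flow $F$. The equality of effects also forces $\rho|x_1=\rho|x_2=\rho|x_3$, so the three boundary crossing sequences share a common length $h$. Since $L_1$ and $L_2$ are loops, the in-/out-degree analysis right after Definition~\ref{def:flow} applies \emph{at both endpoints simultaneously} (with $h_1=h_2=h$), which gives that every $y\in\{0,\dots,h-1\}$ has exactly one incoming and one outgoing edge in $F$. Consequently the components of $F$---equivalently, the components of $L_1$ and of $L_2$---are vertex-disjoint cycles that partition the edges of $F$, and in particular each edge of $F$ is uniquely determined by its source and uniquely determined by its target.

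Next, I would pin down the shared endpoint. Because $\alpha$ is intercepted by $L_1=[x_1,x_2]$ and $\beta$ by $L_2=[x_2,x_3]$, the only position at which their endpoint-locations can coincide is $x_2$; write this shared location as $(x_2,y)$. Two symmetric cases arise depending on run order: either $\alpha$ ends at $(x_2,y)$ and $\beta$ starts at $(x_2,y)$, or the reverse. In the first case, the edge $e_\alpha\in F_{L_1}$ corresponding to $\alpha$ has $y$ as its target (at the right boundary of $L_1$), while the edge $e_\beta\in F_{L_2}$ corresponding to $\beta$ has $y$ as its source (at the left boundary of $L_2$); viewed inside the common flow $F$, $e_\alpha$ is therefore the unique incoming edge at $y$ and $e_\beta$ the unique outgoing edge at $y$, so both are consecutive edges of the single cycle of $F$ through $y$. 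That cycle is precisely one component of $F$, so $e_\alpha$ and $e_\beta$ lie in the same component. The opposite run order is handled identically.

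I do not anticipate a real obstacle: the structural content of the lemma is the graph-theoretic triviality that in a directed graph with all in-degrees and out-degrees equal to one, an incoming and an outgoing edge at a common vertex belong to the same cycle. The only bookkeeping needed is deciding which of the two run orders applies to $\alpha,\beta$ and noting that the shared position must be $x_2$; both are symmetric and do not affect the conclusion.
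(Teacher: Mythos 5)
Your proof is correct and follows essentially the same route as the paper's: identify the common flow via $E_{L_1}=E_{L_2}$, note that the shared endpoint at $x_2$ forces the corresponding edges of $F_{L_1}$ and $F_{L_2}$ to meet at a common vertex, and conclude from the fact that components partition the vertices into cycles. The only difference is presentational—you spell out the in-/out-degree-one structure and handle both run orders explicitly, whereas the paper states the shared-vertex observation and concludes directly.
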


\begin{proof}
Let $C$ be the component of $L_1$ and $(y,y')$ the edge of $C$ that corresponds to the factor $\alpha$
intercepted by $L_1$. Similarly, let $C'$ be the component of $L_2$ and $(y'',y''')$ the edge of $C'$
that corresponds to the factor $\beta$ intercepted by $L_2$. Since $\alpha$ and $\beta$ share the
endpoint at position $x_2$, we know that $y'=y''$. This shows that $C \cap C' \neq \emptyset$, and 
hence $C=C'$.
\end{proof}

\begin{repproposition}{prop:pumping}
Let $L$ be an idempotent loop of $\rho$ with components $C_1,\dots,C_k$, 
listed according to the order of their anchors:
$\an{C_1}\lesstime\cdots\lesstime\an{C_k}$. 
For all $m\in\bbN$, we have 
\[
  \pump_L^{m+1}(\rho) ~=~ 
  \rho_0 ~ \tr{C_1}^m ~ \rho_1 ~ \cdots ~ \rho_{k-1} ~ \tr{C_k}^m ~ \rho_k
\]
where 
\begin{itemize}
  \item $\rho_0$ is the prefix of $\rho$ that ends at  $\an{C_1}$, 
  \item $\rho_i$ is the factor of $\rho$ between  $\an{C_i}$ and $\an{C_{i+1}}$, 
        for all $i=1,\dots,k-1$,
  \item $\rho_k$ is the suffix of $\rho$ that starts at  $\an{C_k}$.
\end{itemize}
\end{repproposition}

\begin{proof}
Along the proof we sometimes refer to Fig.~\ref{fig:loop-trace} to ease
the intuition of some definitions and arguments.
Let $L=[x_1,x_2]$ be an idempotent loop and, for all $i=0,\dots,m$, let 
$L'_i=[x'_i,x'_{i+1}]$ be the $i$-th copy of the loop $L$ in the pumped run
$\rho'=\pump_L^{m+1}(\rho)$, where $x'_i = x_1 + i\cdot (x_2-x_1)$
(the ``$0$-th copy of $L$'' is the loop $L$ itself). 
Further let $L'=L'_0\cup\dots\cup L'_m = [x'_0,x'_{m+1}]$, that is, 
$L'$ is the loop of $\rho'$ that spans across the $m+1$ occurrences of $L$.
As $L$ is idempotent, the loops $L'_0,\dots,L'_m$ and $L'$ have all the 
same effect as $L$.
In particular, the components of $L'_0,\dots,L'_m$, and $L'$ are isomorphic 
to and in same order as those of $L$.
We denote these components by $C_1,\dots,C_k$.

We let $\ell_j=\an{C_j}$ be the anchor of each component $C_j$ inside 
the loop $L$ of $\rho$
(these locations are marked by black dots in the left hand-side 
of Fig.~\ref{fig:loop-trace}).
Similarly, we let $\ell'_{i,j}$ (resp.~$\ell'_j$) be the anchor of 
$C_j$ inside the loop $L'_i$ (resp.~$L'$).
From Definition \ref{def:anchor}, we have that either $\ell'_j=\ell'_{1,j}$ or $\ell'_j=\ell'_{m,j}$, 
depending on whether $C_j$ is left-to-right or right-to-left (or, equally, on whether $j$ is odd or even).

Now, let us consider the factorization of the pumped run $\rho'$ 
induced by the locations $\ell'_{i,j}$, for all $i=0,\dots,m$ and for $j=1,\dots,k$ 
(these locations are marked by black dots in the right hand-side of the figure).
By construction, the prefix of $\rho'$ that ends at location $\ell'_{0,1}$ 
coincides with the prefix of $\rho$ that ends at $\ell_1$, 
i.e.~$\rho_0$ in the statement of the proposition.
Similarly, the suffix of $\rho'$ that starts at location $\ell'_{m,k}$ is isomorphic to the
suffix of $\rho$ that starts at $\ell_k$, i.e. $\rho_k$ in the statement.
By construction, we also know that, for all odd (resp.~even) indices $j$, the factor
$\rho'[\ell'_{m,j},\ell'_{m,j+1}]$ (resp.~$\rho'[\ell_{0,j},\ell_{0,j+1}]$) is isomorphic 
to $\rho[\ell_j,\ell_{j+1}]$, i.e.~the $\rho_j$ of the statement. 

The remaining factors of $\rho'$ are those delimited by the pairs of locations 
$\ell'_{i,j}$ and $\ell'_{i+1,j}$, for all $i=0,\dots,m-1$ and all $j=1,\dots,k$. 
Consider one such factor $\rho'[\ell'_{i,j},\ell'_{i+1,j}]$, 
and assume that the index $j$ is odd (the case of an even $j$ is similar). 
This factor can be seen as a concatenation of factors intercepted by $L$ 
that correspond to edges of $C_j$ inside $L'_i$. 
More precisely, $\rho'[\ell'_{i,j},\ell'_{i+1,j}]$ is obtained by concatenating 
the unique $\LR$-factor of $C_j$ --- recall that by Lemma \ref{lem:component2} 
there is exactly one such factor --- with an interleaving of the $\LL$-factors 
and the $\RR$-factors of $C_j$. 
As the components are the same for all $L'_i$'s, this corresponds precisely to 
the trace $\tr{C_j}$ (cf.~Definition \ref{def:trace}).
Now that we know that $\rho'[\ell'_{i,j},\ell'_{i+1,j}]$ is isomorphic to $\tr{C_j}$, 
we can conclude that
$\rho'[\ell'_{0,j},\ell'_{m,j}] \:=\: 
 \rho'[\ell'_{0,j},\ell'_{1,j}] ~ \dots ~ \rho'[\ell'_{m-1,j},\ell'_{m,j}]$
is isomorphic to $\tr{C_j}^m$. 
\end{proof}

\begin{reptheorem}{thm:simon2}
Let $I=[x_1,x_2]$ be an interval of positions, $K=[\ell_1,\ell_2]$ 
an interval of locations, and $Z = K \:\cap\: (I\times\bbN)$.
If $\big|\out{\rho\mid Z}\big| > \bound$, 
then there exist an idempotent loop $L$ and a component $C$ of $L$ such that
\begin{itemize}
  \item $x_1 < \min(L) < \max(L) < x_2$ (in particular, $L\subsetneq I$),
  \item $\ell_1\lesstime\an{C}\lesstime\ell_2$ (in particular, $\an{C} \in K$),
  \item $\out{\tr{C}} \neq \emptystr$.
\end{itemize}
\end{reptheorem}

\begin{proof}
Let $I$, $K$, $Z$ be as in the statement, 
and suppose that $\big|\out{\rho\mid Z}\big| > \bound$.
We define 
$Z' = Z ~\setminus~ \{\ell_1,\ell_2\} ~\setminus~ \big(\{x_1,x_2\})\times\bbN\big)$
and we observe that there are at most $2\hmax$ locations in $Z$ 
that are missing from $Z'$. This means that $\rho\mid Z'$ contains 
all but $4\hmax$ transitions of $\rho\mid Z$, and because each 
transition outputs at most $\cmax$ letters, we have
$\big|\out{\rho\mid Z'}\big| > \bound - 4\cmax\cdot\hmax = \cmax\cdot\hmax\cdot 2^{3\emax}$.

For every level $y$, let $X_y$ be the set of positions $x$ such that
$(x,y)$ is the source location of a transition of $\rho\mid Z'$ 
that produces non-empty output.
For example, if we refer to Fig.~\ref{fig:ramsey-proof},
the vertical dashed lines represent the positions 
of $X_y$ for a particular level $y$; accordingly, the circles 
in the figure represent the locations of the form $(x,y)$, for 
$x\in X_y$.
Since each transition outputs at most $\cmax$ letters, 
we have $\sum_y |X_y| > \hmax\cdot 2^{3\emax}$.
Moreover, since there are at most $\hmax$ levels, 
there is a level $y$ (which we fix hereafter) such that $|X_y| > 2^{3\emax}$.

\begin{claim}
There are two consecutive loops $L_1=[x,x']$ and $L_2=[x',x'']$ 
such that $E_{L_1}=E_{L_2}=E_{L_1\cup L_2}$
and with endpoints $x,x',x''\in X_y$.
\end{claim}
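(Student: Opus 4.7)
The plan is to invoke Simon's factorization forest theorem (Theorem~\ref{th:simon}) on the set $X_y$. Since the effect map $I \mapsto E_I$ is a semigroup homomorphism into $(\cE,\odot)$, I would apply the theorem to obtain a factorization forest for $X_y$ of height at most $3\emax$, whose leaves are the minimal intervals with consecutive endpoints in $X_y$. This gives $|X_y|-1$ leaves, and the hypothesis $|X_y| > 2^{3\emax}$ (together with the slack built into the definition of $\bound$) makes this strictly larger than $2^{3\emax}$.

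Next, I would apply a pigeonhole argument to find a non-binary node in the forest: if every internal node had exactly two children, then a forest of height at most $3\emax$ would admit at most $2^{3\emax}$ leaves, contradicting the bound above. Hence at least one internal node $I$ has $k>2$ children $J_1,\dots,J_k$, and by the third condition in the definition of factorization forest, $E_I = E_{J_1} = \dots = E_{J_k}$ is an \emph{idempotent} of $(\cE,\odot)$. I would then pick the first two consecutive children $L_1 = J_1 = [x,x']$ and $L_2 = J_2 = [x',x'']$; by construction, their endpoints $x,x',x''$ all belong to $X_y$.

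It remains to verify that $L_1$ and $L_2$ are actually loops, and that $E_{L_1\cup L_2} = E_{L_1}$. Writing $E_{L_1} = E_{L_2} = E = (F,c_-,c_+)$, the idempotence condition $E\odot E = E$ requires in particular that the product $\odot$ be defined on $(E,E)$, which by Definition~\ref{def:product} forces $c_+ = c_-$. Reading off the crossing sequences at the three positions, this gives $\rho|x = c_- = c_+ = \rho|x'$ and $\rho|x' = c_- = c_+ = \rho|x''$, so both $L_1$ and $L_2$ are loops. Finally, $E_{L_1\cup L_2} = E_{L_1}\odot E_{L_2} = E\odot E = E = E_{L_1} = E_{L_2}$ by idempotence, which concludes the proof of the claim.

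The main obstacle will be getting the counting right in the pigeonhole step: the bound $|X_y| > 2^{3\emax}$ is tight against the maximum number of leaves of a binary forest of height $3\emax$, so I expect to rely on the additive slack ($+4$) hidden in the definition of $\bound$ to guarantee strict inequality, or else to sharpen the earlier inequality on $|X_y|$ by one to close the off-by-one. Everything else is routine unwinding of the definitions of effect, loop and idempotent.
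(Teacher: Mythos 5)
Your proposal is correct and follows essentially the same route as the paper: apply Theorem~\ref{th:simon} to $X_y$, locate by pigeonhole an internal node with more than two children, and use idempotence of the common effect (which cannot be $\bot$, being the effect of an actual interval of $\rho$) to read off equal crossing sequences at the endpoints, hence that $L_1$, $L_2$ are loops and $E_{L_1\cup L_2}=E_{L_1}=E_{L_2}$. Your off-by-one worry is moot here because the paper itself invokes the same pigeonhole step directly from $|X_y|>2^{3\emax}$; under the standard factorization-forest convention where height counts levels (a leaf has height $1$), a binary tree of height $3\emax$ has at most $2^{3\emax-1}$ leaves, so the margin is comfortable.
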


\begin{claimproof}
By Theorem \ref{th:simon}, 
there is a factorization forest for $X$ of height at most $3\emax$.
Since $|X_y| > 2^{3\emax}$, we know that this factorization forest contains 
an internal node $L'=[x'_1,x'_{k+1}]$ with $k > 2$ children, say 
$L_1=[x'_1,x'_2]$, \dots $L_k=[x'_k,x'_{k+1}]$. 
By definition of factorization forest, the effects 
$E_{L'}$, $E_{L_1}$, \dots, $E_{L_k}$ are all equal and idempotent. 
Moreover, as $\rho$ is a valid run, the dummy element $\bot$ of the 
effect semigroup does not appear in the factorization forest. 
In particular, the effect $E_{L'}=E_{L_1}=\dots=E_{L_k}$ 
is a triple of the form $(F_{L'},c_1,c_2)$,
where $c_i=\rho|x_i$ is the crossing sequence at $x'_i$. 
Finally, since $E_{L'}$ is idempotent, we have that $c_1 = c_2$ and 
this is equal to the crossing sequences of $\rho$ at the positions 
$x'_1,\dots,x'_{k+1}$. This shows that $L_1,L_2$ are idempotent loops.
\end{claimproof}

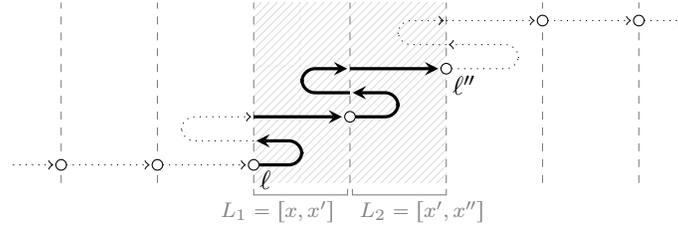
\begin{figure}[!t]
\centering
\begin{tikzpicture}[baseline=0, inner sep=0, outer sep=0, minimum size=0pt, scale=0.32]
  \tikzstyle{dot} = [draw, circle, fill=white, minimum size=4pt]
  \tikzstyle{fulldot} = [draw, circle, fill=black, minimum size=4pt]
  \tikzstyle{factor} = [->, shorten >=1pt, dotted, rounded corners=5]
  \tikzstyle{fullfactor} = [->, >=stealth, shorten >=1pt, very thick, rounded corners=5]

  \fill [pattern=north east lines, pattern color=gray!25]
        (10,-0.75) rectangle (18,6.75);
  \draw [dashed, thin, gray] (2,-0.75) -- (2,6.75);
  \draw [dashed, thin, gray] (6,-0.75) -- (6,6.75);
  \draw [dashed, thin, gray] (10,-0.75) -- (10,6.75);
  \draw [dashed, thin, gray] (14,-0.75) -- (14,6.75);
  \draw [dashed, thin, gray] (18,-0.75) -- (18,6.75);
  \draw [dashed, thin, gray] (22,-0.75) -- (22,6.75);
  \draw [dashed, thin, gray] (26,-0.75) -- (26,6.75);
  \draw [gray] (10,-1) -- (10,-1.25) -- (13.9,-1.25) -- (13.9,-1);
  \draw [gray] (14.1,-1) -- (14.1,-1.25) -- (18,-1.25) -- (18,-1);
  \draw [gray] (11,-1.5) node [below] {\footnotesize $L_1=[x,x']$};
  \draw [gray] (17,-1.5) node [below] {\footnotesize $L_2=[x',x'']$};

  \draw (0,0) node (node0) {};
  \draw (2,0) node [dot] (node1) {};
  \draw (6,0) node [dot] (node2) {};
  \draw (10,0) node [dot] (node3) {};
  \draw (12,0) node (node4) {};
  \draw (12,1) node (node5) {};
  \draw (10,1) node (node6) {};
  \draw (7,1) node (node7) {};
  \draw (7,2) node (node8) {};
  \draw (10,2) node (node9) {};
  \draw (14,2) node [dot] (node10) {};
  \draw (16,2) node (node11) {};
  \draw (16,3) node (node12) {};
  \draw (14,3) node (node13) {};
  \draw (12,3) node (node14) {};
  \draw (12,4) node (node15) {};
  \draw (14,4) node (node16) {};
  \draw (18,4) node [dot] (node17) {};
  \draw (21,4) node (node18) {};
  \draw (21,5) node (node19) {};
  \draw (18,5) node (node20) {};
  \draw (16,5) node (node21) {};
  \draw (16,6) node (node22) {};
  \draw (18,6) node (node23) {};
  \draw (22,6) node [dot] (node24) {};
  \draw (26,6) node [dot] (node25) {};
  \draw (28,6) node (node26) {};

  \draw [factor] (node0) -- (node1);
  \draw [factor] (node1) -- (node2);
  \draw [factor] (node2) -- (node3);
  \draw [fullfactor] (node3) -- (node4.center) -- (node5.center) -- (node6);
  \draw [factor] (node6) -- (node7.center) -- (node8.center) -- (node9);
  \draw [fullfactor] (node9) -- (node10);
  \draw [fullfactor] (node10) -- (node11.center) -- (node12.center) -- (node13);
  \draw [fullfactor] (node13) -- (node14.center) -- (node15.center) -- (node16);
  \draw [fullfactor] (node16) -- (node17);
  \draw [factor] (node17) -- (node18.center) -- (node19.center) -- (node20);
  \draw [factor] (node20) -- (node21.center) -- (node22.center) -- (node23);
  \draw [factor] (node23) -- (node24);
  \draw [factor] (node24) -- (node25);
  \draw [factor] (node25) -- (node26);
  
  \draw (node3) node [below right = 1.2mm] {$\ell$};
  \draw (node17) node [below right = 1.2mm] {$\ell''$};
\end{tikzpicture}
\caption{Two consecutive idempotent loops with the same effect.}%
\label{fig:ramsey-proof}
\end{figure}

Turning back to the proof of the theorem, we know that there are two 
consecutive idempotent loops $L_1=[x,x']$ and $L_2=[x',x'']$ with the 
same effect and with endpoints $x,x',x''\in X_y \subseteq I \:\setminus\: \{x_1,x_2\}$
(see again Fig.~\ref{fig:ramsey-proof}).

Let $\ell=(x,y)$ and $\ell''=(x'',y)$, and observe that both locations
belong to $Z'$. In particular, $\ell$ and $\ell''$ are strictly between $\ell_1$ and $\ell_2$.
Suppose by symmetry that $\ell\leqtime\ell''$.
Further let $C$ be the component of $L_1\cup L_2$ (or, equally, 
of $L_1$ or $L_2$) that contains the node $y$.
Below, we focus on the factors of $\rho[\ell,\ell'']$ that are intercepted by 
$L_1\cup L_2$: these are represented in Fig.~\ref{fig:ramsey-proof} by 
the thick arrows. 
By Lemma~\ref{lem:component2} all these factors correspond to edges
of the same component $C$, namely, they are $(L_1 \cup L_2,C)$-factors.

Consider any factor $\alpha$ of $\rho[\ell,\ell'']$ intercepted by
$L_1\cup L_2$, and assume that $\a=\b_1 \cdots \b_k$, where
$\beta_1,\dots,\beta_k$ are the factors intercepted by $L_1$
or $L_2$. By Lemma \ref{lem:adjacent-factors}, any two adjacent
factors $\beta_i,\beta_{i+1}$ correspond to edges in the same
component of $L_1$ and $L_2$, respectively. Thus, by transitivity, 
all factors $\beta_1,\dots,\beta_k$ correspond to edges in the same
component, say $C'$. 
We claim that $C'=C$. Indeed, if $\b_1$ is intercepted by $L_1$, 
then $C'=C$ because $\alpha$ and $\b_1$ start from the same location
and hence they correspond to edges of the flow that depart from the 
same node. The other case is where $\b_1$ is intercepted by $L_2$,
for which a symmetric argument can be applied.

So far we have shown that every factor of $\rho[\ell,\ell']$ intercepted by $L_1\cup L_2$
can be factorized into some $(L_1,C)$-factors and some $(L_2,C)$-factors.
We conclude the proof with the following observations:
\begin{itemize}
  \item By construction, both loops $L_1,L_2$ are contained in the interval of positions $I=[x_1,x_2]$,
        and have endpoints different from $x_1,x_2$.
  \item Both anchors of $C$ inside $L_1$ and $L_2$ belong to the interval of locations 
        $K\:\setminus\:\{\ell_1,\ell_2\}$. 
        This holds because
        $\rho[\ell,\ell']$ contains a factor $\alpha$ that is intercepted by $L_1\cup L_2$
        and spans across all the positions from $x$ to $x''$, namely,
        an $\LR$-factor. 
        This factor starts at the anchor of $C$ inside $L_1$ 
        and visits the anchor of $C$ inside $L_2$. 
        Moreover, by construction, $\alpha$ is also a factor of the subsequence $\rho\mid Z'$.
        This shows that the anchors of $C$ inside $L_1$ and $L_2$ belong to $Z'$, 
        and in particular to $K\:\setminus\:\{\ell_1,\ell_2\}$.
  \item The first factor of $\rho[\ell,\ell']$ that is intercepted by $L_1\cup L_2$
        starts at $\ell=(x,y)$, which by construction is the source location of some
        transition producing non-empty output.
        By the previous arguments, this factor is a concatenation of $(L_1,C)$-factors 
        and $(L_2,C)$-factors. This implies that the trace of $C$ inside $L_1$ or the 
        trace of $C$ inside $L_2$ produces non-empty output.
\qedhere
\end{itemize}        
\end{proof}

\begin{repproposition}{prop:inversion}
If $\cT$ is one-way definable, then for every inversion $(L_1,C_1,L_2,C_2)$ 
of a successful run $\rho$ of $\cT$, the word
\[
  \outb{\tr{C_1}} ~ \outb{\rho[\an{C_1},\an{C_2}]} ~ \outb{\tr{C_2}}
\]
has period $p$ that divides both $|\out{\tr{C_1}}|$ and $|\out{\tr{C_2}}|$.
Moreover, $p \le \bound$.
\end{repproposition}

\begin{proof}[Proof of Proposition \ref{prop:inversion}]
The proof of the first claim of the proposition 
is similar to the proof of Proposition 7 in~\cite{bgmp15} for
sweeping transducers. The main difficulty in the present proof is to
get a bound on the period  of the output of the
inversion. 

Let $(L_1,C_1,L_2,C_2)$ be an inversion of a successful run $\rho$
on input $u$.
Note that the two loops $L_1$ and $L_2$ might not be disjoint. 
In fact, two cases arise: 
either $\max(L_2) < \min(L_1)$ (that is, $L_1$ and $L_2$ are 
disjoint and $L_2$ is strictly to the left of $L_1$), or 
$\min(L_1) \le \min(L_2) \le \max(L_1) \le \max(L_2)$ 
(the fact that $\min(L_2) \le \max(L_1)$ follows from the fact that
the anchor $\an{C_2}$ is to the left of the anchor $\an{C_1}$).
For the sake of simplicity, we only deal with the case where 
$L_1$ and $L_2$ are disjoint, as shown in Fig.~\ref{fig:inversion}
--- the other case can be treated in a similar way by considering 
the rightmost copy of $L_1$ in the pumped run $\pump_{L_1}^3(\rho)$, 
which is clearly disjoint from the leftmost copy of $L_2$.

We begin by pumping the run $\rho$, together with the underlying input $u$, 
on the loops $L_1$ and $L_2$. Formally, for all numbers $m_1,m_2\in\bbN$, 
we define 
\[
\begin{array}{rcl}
  u^{(m_1,m_2)}    &=& \pump_{L_1}^{m_1+1}(\pump_{L_2}^{m_2}(u))  \\[1ex]
  \rho^{(m_1,m_2)} &=& \pump_{L_1}^{m_1+1}(\pump_{L_2}^{m_2}(\rho)).
\end{array}
\]
We identify the positions that mark the endpoints of the occurrences of 
$L_1$ and $L_2$ in the pumped run $\rho^{(m_1,m_2)}$. 
Formally, if $L_1=[x_1,x_2]$ and $L_2=[x_3,x_4]$, then the sets of positions 
are defined as follows:
\[
\begin{array}{rcl}
   X_2^{(m_1,m_2)} &=& \big\{ x_3 + i\cdot (x_4-x_3) ~:~ i=0,\dots,m_2+1 \big\} \\[1ex]
   X_1^{(m_1,m_2)} &=& \big\{ x_1 + i\cdot (x_2-x_1) + m_2\cdot(x_4-x_3) ~:~ i=0,\dots,m_1+1 \big\}. 
\end{array}
\]
Let $\Tt'$ be a  one-way  transducer equivalent to $\Tt$, and
consider a successful run $\lambda^{(m_1,m_2)}$ of $\cT'$ on the input $u^{(m_1,m_2)}$.
Since $\cT'$ has finitely many states, we can find a large enough number 
$m$ and two positions $x'_1<x'_2$ both in $X_1^{(m,m)}$, such that $L'_1=[x'_1,x'_2]$
is a loop of $\lambda^{(m,m)}$. Similarly, we can find two positions 
$x'_3<x'_4$ both in $X_2^{(m,m)}$, such that $L'_2=[x'_3,x'_4]$ is a loop of 
$\lambda^{(m,m)}$.
Clearly, $L'_1$ and $L'_2$ are also loops of $\rho^{(m,m)}$: 
indeed, $L'_1$ (resp.~$L'_2$) consists of
$k_1\le m$ (resp.~$k_2\le m$) copies of $L_1$ (resp.~$L_2$) in $\rho^{(m,m)}$.
In particular, for all $m_1,m_2\in\bbN$ we have:
\[
\begin{array}{rcl}
  \pump_{L'_1}^{m_1+1}(\pump_{L'_2}^{m_2+1}(u^{(m,m)})) 
  &=& u^{(f(m_1),g(m_2))} \\[1ex]
  \pump_{L'_1}^{m_1+1}(\pump_{L'_2}^{m_2+1}(\rho^{(m,m)})) 
  &=& \rho^{(f(m_1),g(m_2))} \\[1ex]
  \pump_{L'_1}^{m_1+1}(\pump_{L'_2}^{m_2+1}(\lambda^{(m,m)})) 
  &=& \lambda^{(f(m_1),g(m_2))}.
\end{array}
\]
where $f(m_1)=k_1\cdot m_1+m$, $g(m_2)=k_2\cdot m_2+m$.

Now we observe that the run $\lambda^{(f(m_1),g(m_2))}$ 
of $\cT'$ produces the same output as the run 
$\rho^{(f(m_1),g(m_2))}$ of $\cT$
--- this holds thanks to the fact that the transducers are functional, 
otherwise it may happen that the pumped runs $\lambda^{(f(m_1),g(m_2))}$ 
and $\rho^{(f(m_1),g(m_2))}$ produce different outputs.
Let us denote this output by $w^{(f(m_1),g(m_2))}$. 
Below, we show two possible factorizations of $w^{(f(m_1),g(m_2))}$
based on the shapes of the pumped runs $\lambda^{(f(m_1),g(m_2))}$ and
$\rho^{(f(m_1),g(m_2))}$.
For the first factorization, we recall that $L'_2$ precedes $L'_1$, 
according to the ordering of positions, 
and that the run $\lambda^{(f(m_1),g(m_2))}$ 
is left-to-right. We thus obtain
\begin{equation}\label{eq:one-way-eq}
  w^{(f(m_1),g(m_2))} ~=~ w_0 ~ \pmb{w_1^{m_2}} ~ w_2 ~ \pmb{w_3^{m_1}} ~ w_4
\end{equation}
where
\begin{itemize}
  \item $w_0$ is the output produced by the prefix of $\lambda^{(m,m)}$ 
        up to the left border of $L'_2$,
  \item $w_1$ is the output produced by the (unique) factor of $\lambda^{(m,m)}$ 
        intercepted by $L'_2$,
  \item $w_2$ is the output produced by the factor of $\lambda^{(m,m)}$ 
        between the right border of $L'_2$ and the left border of $L'_1$,
  \item $w_3$ is the output produced by the (unique) factor of $\lambda^{(m,m)}$ intercepted by $L'_1$,
  \item $w_4$ is the output produced by the suffix of
    $\lambda^{(m,m)}$ after the right border of $L'_1$.
\end{itemize}

For the second factorization, we consider $L'_1$ and $L'_2$ as loops of $\rho^{(m,m)}$.
We denote by $\ell'_1$ (resp.~$\ell'_2$) the anchor of the component $C_1$ (resp.~$C_2$) 
of $L'_1$ (resp.~$L'_2$). By assumption we have
$\ell'_1\leqtime\ell'_2$. Applying
Proposition \ref{prop:pumping} we get:
\begin{equation}\label{eq:two-way-eq}
  w^{(f(m_1),g(m_2))} ~=~ 
  v_0^{(m_1,m_2)} ~ \pmb{v_1^{m_1}} ~ v_2^{(m_1,m_2)} ~ \pmb{v_3^{m_2}} ~ v_4^{(m_1,m_2)}
\end{equation}
where 
\begin{itemize}
  \item $v_0^{(m_1,m_2)}$ is the output produced by the prefix of $\rho^{(m,m)}$ 
        that ends at $\ell'_1$
        (note that this word may depend on the parameters $m_1,m_2$, since the loops
        $L'_1$ and $L'_2$ may be traversed several times before reaching the location $\ell'_1$),
  \item $v_1=\out{\tr{C_1}}$
        (this word does not depend on $m_1,m_2$), 
  \item $v_2^{(m_1,m_2)}$ is the output produced by the factor of $\rho^{(m,m)}$ 
        between $\ell'_1$ and $\ell'_2$,
  \item $v_3=\out{\tr{C_2}}$,
  \item $v_4^{(m_1,m_2)}$ is the output produced by the suffix of $\rho^{(m,m)}$ 
        that starts at $\ell'_2$.
\end{itemize}

Putting together Eqs.~(\ref{eq:one-way-eq}) and (\ref{eq:two-way-eq}),
we get
\begin{equation}\label{eq:one-way-vs-two-way}
  v_0^{(m_1,m_2)} ~ \pmb{v_1^{m_1}} ~ v_2^{(m_1,m_2)} ~ \pmb{v_3^{m_2}} ~ v_4^{(m_1,m_2)}
  ~~=~~ 
  w_0 ~ \pmb{w_1^{m_2}} ~ w_2 ~ \pmb{w_3^{m_1}} ~ w_4 .
\end{equation}
We recall that the words $v_1,v_3$ are non-empty, since they are outputs
of traces of components that form an inversion.
This allows us to apply Lemma \ref{lemma:one-way-vs-two-way}, 
which shows that the word 
$\pmb{v_1} ~ \pmb{v_1^{m_1}} ~ v_2^{(m_1,m_2)} ~ \pmb{v_3^{m_2}} ~ \pmb{v_3}$ 
has period $\gcd(|v_1|,|v_3|)$, for all $m_1,m_2\in\bbN$.
Note that the latter period still depends on $\cT'$, since 
the words $v_1$ and $v_3$ were constructed from the loops $L'_1$ and
$L'_2$, that are both loops of 
the run $\lambda^{(m,m)}$ of $\cT'$. However,  
Proposition~\ref{prop:pumping} tells us that
the word $v_1$ (resp.~$v_3$) is an iteration of the output
$\out{\tr{C_1}}$ of the component $C_1$ of $L_1$
(resp.~the output $\out{\tr{C_2}}$ of the component $C_2$ of $L_2$). 
By Lemma~\ref{lemma:fine-wilf}, this implies that the period of 
$\pmb{v_1} ~ \pmb{v_1^{m_1}} ~ v_2^{(m_1,m_2)} ~ \pmb{v_3^{m_2}} ~ \pmb{v_3}$ 
divides both $|\out{\tr{C_1}}|$ and $|\out{\tr{C_2}}|$.

In a similar way, we recall from Proposition \ref{prop:pumping}
that all the words
$\pmb{v_1} ~ \pmb{v_1^{m_1}} ~ v_2^{(m_1,m_2)} ~ \pmb{v_3^{m_2}} ~ \pmb{v_3}$ 
are obtained by iterating suitable factors inside
$\out{\tr{C_1}} ~ \out{\rho[\an{C_1},\an{C_2}]} ~ \out{\tr{C_2}}$:
more precisely, by iterating $n_1$ (resp.~$n_2$) times the output 
traces of the components of $L_1$  (resp.~of $L_2$), where 
$n_1 = f(m_1)$ (resp.~$n_2 = g(m_2)$).
Since the periodicity property holds for infinitely many $n_1$
and, independently, for infinitely many $n_2$,
we know from Theorem \ref{thm:saarela} that 
it also holds for all $n_1,n_2\in\bbN$, and in particular, for $n_1=n_2=0$.
This allows us to conclude that the word
\[
  \outb{\tr{C_1}} ~ \outb{\rho[\an{C_1},\an{C_2}]} ~ \outb{\tr{C_2}}
\]
is periodic with period $p$ that divides both $|\out{\tr{C_1}}|$ and $|\out{\tr{C_2}}|$.

\medskip
It remains to prove the second claim of the proposition, which bounds 
the period by the constant $\bound$. 
This requires a refinement of the previous arguments that involves
pumping the run $\rho$ simultaneously on three different loops.

\begin{figure}[!t]

\centering
\begin{tikzpicture}[baseline=0, inner sep=0, outer sep=0, minimum size=0pt, scale=0.3, xscale=0.7]
  \tikzstyle{dot} = [draw, circle, fill=white, minimum size=4pt]
  \tikzstyle{fulldot} = [draw, circle, fill=black, minimum size=4pt]
  \tikzstyle{factor} = [->, shorten >=1pt, dotted, rounded corners=5]
  \tikzstyle{fullfactor} = [->, >=stealth, shorten >=1pt, very thick, rounded corners=5]

\begin{scope}
  \fill [pattern=north east lines, pattern color=gray!25]
        (4,-0.75) rectangle (10,9);
  \draw [dashed, thin, gray] (4,-0.75) -- (4,9);
  \draw [dashed, thin, gray] (10,-0.75) -- (10,9);
  \draw [gray] (4,-1) -- (4,-1.25) -- (10,-1.25) -- (10,-1);
  \draw [gray] (7,-1.5) node [below] {\footnotesize $L_2$};

  \fill [pattern=north east lines, pattern color=gray!25]
        (16,-0.75) rectangle (22,9);
  \draw [dashed, thin, gray] (16,-0.75) -- (16,9);
  \draw [dashed, thin, gray] (22,-0.75) -- (22,9);
  \draw [gray] (16,-1) -- (16,-1.25) -- (22,-1.25) -- (22,-1);
  \draw [gray] (19,-1.5) node [below] {\footnotesize $L_1$};

  \draw (0,0) node (node0) {};
  \draw (4,0) node [dot] (node1) {};
  \draw (10,0) node [dot] (node2) {};
  \draw (16,0) node [dot] (node3) {};
  \draw (22,0) node [dot] (node4) {};
  \draw (24,0) node (node5) {};
  \draw (24,1) node (node6) {};

  \draw (22,1) node [dot] (node7) {};
  \draw (19,1) node (node8) {};
  \draw (19,2) node (node9) {};
  \draw (22,2) node (node10) {};
  \draw (25,2) node (node11) {};
  \draw (25,3) node (node12) {};
  \draw (22,3) node [fulldot] (node13) {};
  \draw (16,3) node [dot] (node14) {};

  \draw (10,3) node [dot] (node15) {};
  \draw (4,3) node [dot] (node16) {};
  \draw (2,3) node (node17) {};
  \draw (2,4) node (node18) {};

  \draw (4,4) node [dot] (node19) {};
  \draw (7,4) node (node20) {};
  \draw (7,5) node (node21) {};
  \draw (4,5) node [dot] (node22) {};
  \draw (1,5) node (node23) {};
  \draw (1,6) node (node24) {};
  \draw (4,6) node [fulldot] (node25) {};
  \draw (10,6) node [dot] (node26) {};

  \draw (16,6) node [dot] (node27) {};
  \draw (20,6) node (node28) {};
  \draw (20,7) node (node29) {};
  \draw (16,7) node [dot] (node30) {};

  \draw (10,7) node [dot] (node31) {};
  \draw (6,7) node (node32) {};
  \draw (6,8) node (node33) {};
  \draw (10,8) node [dot] (node34) {};

  \draw (16,8) node [dot] (node35) {};
  \draw (22,8) node [dot] (node36) {};
  \draw (26,8) node (node37) {};

  \draw [factor] (node0) -- (node1);
  \draw [factor] (node1) -- (node2);
  \draw [factor] (node2) -- (node3);
  \draw [factor] (node3) -- (node4);
  \draw [factor] (node4) -- (node5.center) -- (node6.center) -- (node7);
  
  \draw [fullfactor] (node7) -- (node8.center) -- (node9.center) -- (node10);
  \draw [factor] (node10) -- (node11.center) -- (node12.center) -- (node13);
  \draw [fullfactor] (node13) -- (node14);

  \draw [factor] (node14) -- (node15);
  \draw [factor] (node15) -- (node16);
  \draw [factor] (node16) -- (node17.center) -- (node18.center) -- (node19);

  \draw [fullfactor] (node19) -- (node20.center) -- (node21.center) -- (node22);
  \draw [factor] (node22) -- (node23.center) -- (node24.center) -- (node25);
  \draw [fullfactor] (node25) -- (node26);
  
  \draw [factor] (node26) -- (node27);
  \draw [fullfactor, red] (node27) -- (node28.center) -- (node29.center) -- (node30);
  \draw [factor] (node30) -- (node31);
  \draw [fullfactor, red] (node31) -- (node32.center) -- (node33.center) -- (node34);

  \draw [factor] (node34) -- (node35);
  \draw [factor] (node35) -- (node36);
  \draw [factor] (node36) -- (node37);

  \draw (node13) node [above right=1.5mm] {\footnotesize $\an{C_1}$};
  \draw (node25) node [above left=1.5mm] {\footnotesize $\an{C_2}$};
\end{scope}

\begin{scope}[xshift=32cm]
  \fill [pattern=north east lines, pattern color=gray!25]
        (4,-0.75) rectangle (10,11);
  \draw [dashed, thin, gray] (4,-0.75) -- (4,11);
  \draw [dashed, thin, gray] (10,-0.75) -- (10,11);
  \draw [gray] (4,-1) -- (4,-1.25) -- (10,-1.25) -- (10,-1);
  \draw [gray] (7,-1.5) node [below] {\footnotesize $L_2$};

  \fill [pattern=north east lines, pattern color=gray!25]
        (16,-0.75) rectangle (22,11);
  \fill [pattern=north east lines, pattern color=gray!25]
        (22,-0.75) rectangle (28,11);
  \draw [dashed, thin, gray] (16,-0.75) -- (16,11);
  \draw [dashed, thin, gray] (22,-0.75) -- (22,11);
  \draw [dashed, thin, gray] (28,-0.75) -- (28,11);
  \draw [gray] (16,-1) -- (16,-1.25) -- (22,-1.25) -- (22,-1);
  \draw [gray] (22,-1) -- (22,-1.25) -- (28,-1.25) -- (28,-1);
  \draw [gray] (22,-1.5) node [below] {\footnotesize two copies of $L_1$};

  \fill [pattern=north east lines, pattern color=red!25]
        (17.5,-0.75) rectangle (19.5,11);
  \draw [dashed, thin, red!25] (17.5,-0.75) -- (17.5,11);
  \draw [dashed, thin, red!25] (19.5,-0.75) -- (19.5,11);
  \draw [red!25] (17.5,11.25) -- (17.5,11.5) -- (19.5,11.5) -- (19.5,11.25);
  \draw [red!50] (18.5,11.75) node [above] {\footnotesize $L_0$};
 
  \draw (0,0) node (node0) {};
  \draw (4,0) node [dot] (node1) {};
  \draw (10,0) node [dot] (node2) {};
  \draw (16,0) node [dot] (node3) {};
  \draw (22,0) node [dot] (node4) {};
  \draw (28,0) node [dot] (node5) {};
  \draw (30,0) node (node6) {};
  \draw (30,1) node (node7) {};

  \draw (28,1) node [dot] (node8) {};
  \draw (25,1) node (node9) {};
  \draw (25,2) node (node10) {};
  \draw (28,2) node [dot] (node11) {};

  \draw (31,2) node (node12) {};
  \draw (31,3) node (node13) {};
  \draw (28,3) node [fulldot] (node14) {};
  \draw (22,3) node [dot] (node15) {};
  \draw (19,3) node (node16) {};
  \draw (19,4) node (node17) {};
  \draw (22,4) node [dot] (node18) {};
  \draw (26,4) node (node19) {};
  \draw (26,5) node (node20) {};
  \draw (22,5) node [fulldot] (node21) {};
  \draw (16,5) node [dot] (node22) {};

  \draw (10,5) node [dot] (node23) {};
  \draw (4,5) node [dot] (node24) {};
  \draw (2,5) node (node25) {};
  \draw (2,6) node (node26) {};
  \draw (4,6) node [dot] (node27) {};

  \draw (7,6) node (node28) {};
  \draw (7,7) node (node29) {};
  \draw (4,7) node [dot] (node30) {};
  \draw (1,7) node (node31) {};
  \draw (1,8) node (node32) {};
  \draw (4,8) node [fulldot] (node33) {};
  \draw (10,8) node [dot] (node34) {};

  \draw (16,8) node [dot] (node35) {};
  \draw (20,8) node (node36) {};
  \draw (20,9) node (node37) {};
  \draw (16,9) node [dot] (node38) {};

  \draw (10,9) node [dot] (node39) {};
  \draw (6,9) node (node40) {};
  \draw (6,10) node (node41) {};
  \draw (10,10) node [dot] (node42) {};

  \draw (16,10) node [dot] (node43) {};
  \draw (22,10) node [dot] (node44) {};
  \draw (28,10) node [dot] (node45) {};
  \draw (32,10) node (node46) {};

  \draw [factor] (node0) -- (node1);
  \draw [factor] (node1) -- (node2);
  \draw [factor] (node2) -- (node3);
  \draw [factor] (node3) -- (node4);
  \draw [factor] (node4) -- (node5);
  \draw [factor] (node5) -- (node6.center) -- (node7.center) -- (node8);
  
  \draw [fullfactor] (node8) -- (node9.center) -- (node10.center) -- (node11);
  \draw [factor] (node11) -- (node12.center) -- (node13.center) -- (node14);
  \draw [fullfactor] (node14) -- (node15);
  \draw [fullfactor] (node15) -- (node16.center) -- (node17.center) -- (node18);
  \draw [fullfactor, red] (node18) -- (node19.center) -- (node20.center) -- (node21);
  \draw [fullfactor] (node21) -- (node22);

  \draw [factor] (node22) -- (node23);
  \draw [factor] (node23) -- (node24);
  \draw [factor] (node24) -- (node25.center) -- (node26.center) -- (node27);

  \draw [fullfactor] (node27) -- (node28.center) -- (node29.center) -- (node30);
  \draw [factor] (node30) -- (node31.center) -- (node32.center) -- (node33);
  \draw [fullfactor] (node33) -- (node34);

  \draw [factor] (node34) -- (node35);
  \draw [fullfactor, red] (node35) -- (node36.center) -- (node37.center) -- (node38);

  \draw [factor] (node38) -- (node39);
  \draw [fullfactor, red] (node39) -- (node40.center) -- (node41.center) -- (node42);

  \draw [factor] (node42) -- (node43);
  \draw [factor] (node43) -- (node44);
  \draw [factor] (node44) -- (node45);
  \draw [factor] (node45) -- (node46);
\end{scope}
\end{tikzpicture}
\caption{An inversion $(L_1,C_1,L_2,C_2)$ whose pairs $(L_i,C_i)$ are
  not output-minimal. The red parts produce long outputs, and lie
  outside $\rho[\an{C_1},\an{C_2}]$.}
\label{fig:non-output-minimal-inversion-full}
\end{figure}
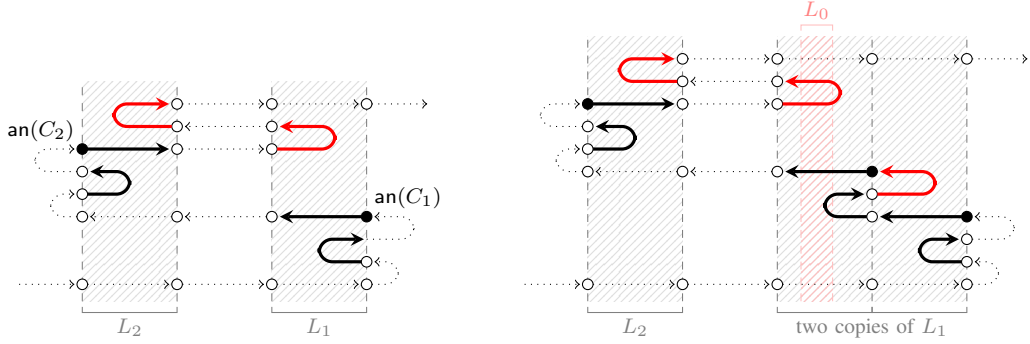

Recall that the period $p$ for
the word $\outb{\tr{C_1}} ~ \outb{\rho[\an{C_1},\an{C_2}]} ~ \outb{\tr{C_2}}$ was obtained
by considering a run $\rho^{(m_1,m_2)}$ where the loops $L_1$ and $L_2$ have been pumped
$m_1$ and $m_2$ times, respectively. To bound the period, we need to consider 
inversions that are formed by output-minimal pairs. 
As already explained, we cannot assume that the inversion $(L_1,C_1,L_2,C_2)$ 
contains an output-minimal pair.
For example, the left part of Fig.~\ref{fig:non-output-minimal-inversion-full}
represents a situation where both pairs $(L_1,C_1)$ and $(L_2,C_2)$ of the inversion 
are not output-minimal. 
Nonetheless, in the pumped run $\rho^{(2,1)}$ we do find inversions with output-minimal pairs.
For example, as suggested by the right part of Fig.~\ref{fig:non-output-minimal-inversion-full},
we can consider the leftmost and rightmost occurrences of $L_1$ 
in $\rho^{(2,1)}$, denoted as $\olft L_1$ and $\ort L_1$,  respectively.
Let $(L_0,C_0)$ be any {\sl output-minimal} pair such that $L_0$ is an 
idempotent loop, $\out{\tr{C_0}}\neq\emptystr$, and either $(L_0,C_0)=(\olft L_1,C_1)$ 
or $(L_0,C_0) \lesspair (\olft L_1,C_1)$
--- such a loop $L_0$ is suggestively represented in the figure 
by the red vertical stripe.

We claim that either $(L_0,C_0,L_2,C_2)$ or $(\ort L_1,C_1,L_0,C_0)$
is an inversion of the run $\rho^{(2,1)}$, depending on whether the 
anchor of $C_0$ inside $L_0$ occurs before or after the anchor of $C_2$ inside $L_2$.
First, note that all the loops $L_0$, $L_2$, $\ort L_1$ are idempotent
and non-overlapping; more precisely, we have 
$\max(L_2) \le \min(L_0)$ and $\max(L_0) \le \min(\ort L_1)$. 
Moreover, the trace outputs for the pairs $(L_0,C_0)$, $(L_2,C_2)$, 
$(\ort L_1,C_1)$ are non-empty.
So it remains to distinguish the two cases based on the ordering of the
anchors of $C_0$, $C_1$, $C_2$ inside the loops $L_0$, $\ort L_1$, $L_2$, respectively.
We denote those anchors by $\ell_0$, $\ell_1$, $\ell_2$.
If $\ell_0 \leqtime \ell_2$, then $(L_0,C_0,L_2,C_2)$ is clearly an inversion.
Otherwise, because $(\ort L_1,C_1,L_2,C_2)$ is an inversion, we know that 
$\ell_1 \leqtime \ell_2 \leqtime \ell_0$, and hence $(\ort L_1,C_1,L_0,C_0)$
is an inversion.

Now, we know that
$\rho^{(2,1)}$ contains the inversion $(\ort L_1,C_1,L_2,C_2)$, but also an
inversion with an output-minimal pair $(L_0,C_0)$, where $L_0$ is strictly 
between $\ort L_1$ and $L_2$.
For all $m_0,m_1,m_2$, we define $\rho^{(m_0,m_1,m_2)}$ as the run obtained from 
$\rho^{(2,1)}$ by pumping $m_0,m_1,m_2$ times the loops $L_0,\ort L_1,L_2$, respectively.
Since the output of the run $\rho^{(m_0,m_1,m_2)}$ contains many repetitions of 
the trace output $\out{\tr{C_0}}$ of $C_0$ inside $L_0$, and 
since these repetitions occur as factors of the output produced 
inside the inversion $(\ort L_1,C_1,L_2,C_2)$, their period $p'$ divides 
$|\out{\tr{C_0}}|$, $|\out{\tr{C_1}}|$, and $|\out{\tr{C_2}}|$ (due to Lemma~\ref{lemma:fine-wilf}). 
By Theorem~\ref{thm:saarela},  
we deduce that the word
$\outb{\tr{C_1} ~ \rho[\an{C_1},\an{C_2}] ~ \tr{C_2}}$
has period $p'$ as well.
To conclude the proof, it suffices to recall Lemma~\ref{lem:output-minimal},
saying that the length of $\out{\tr{C_0}}$, and hence the period $p'$, 
is bounded by $\bound$.
\end{proof}

\begin{replemma}{lemma:one-way-vs-two-way}
Consider a word equation of the form
\[
  v_0^{(m_1,m_2)} \: \pmb{v_1^{m_1}} \: v_2^{(m_1,m_2)} \: \pmb{v_3^{m_2}} \: v_4^{(m_1,m_2)}
  ~=~
  w_0 \: \pmb{w_1^{m_2}} \: w_2 \: \pmb{w_3^{m_1}} \: w_4
\]
where $m_1,m_2$ are the unknowns, $v_1,v_3$ are non-empty words,
and $v_0^{(m_1,m_2)},v_2^{(m_1,m_2)},v_4^{(m_1,m_2)}$ are words
that may contain some factors of the form $v^{m_1}$ or $v^{m_2}$,
for some $v$.
If the above equation holds for all $m_1,m_2\in\bbN$, 
then the words
$\pmb{v_1} ~ \pmb{v_1^{m_1}} ~ v_2^{(m_1,m_2)} ~ \pmb{v_3^{m_2}} ~ \pmb{v_3}$
are periodic with period $\gcd(|v_1|,|v_3|)$, for all $m_1,m_2\in\bbN$.
\end{replemma}

\begin{proof}
The idea of the proof is to let the parameters $m_1,m_2$ of the equation
grow independently, and exploit Fine and Wilf's theorem (Lemma \ref{lemma:fine-wilf}) 
a certain number of times to establish periodicities in overlapping factors of the 
considered words.

We begin by fixing $m_1$ large enough so that the factor
$\pmb{v_1^{m_1}}$ of the left hand-side of the equation is 
longer than $|w_0|+|w_1|$ (this is possible because $v_1$ is non-empty).
Now, if we let $m_2$ grow arbitrarily large, we see that the length of 
the periodic word $\pmb{w_1^{m_2}}$ is almost equal to the length of 
the left hand-side term
$v_0^{(m_1,m_2)} ~ \pmb{v_1^{m_1}} ~ v_2^{(m_1,m_2)} ~ \pmb{v_3^{m_2}} ~ v_4^{(m_1,m_2)}$:
indeed, the difference in length is given by the constant 
$|w_0| + |w_2| + m_1\cdot |w_3| + |w_4|$.
In particular, this implies that $\pmb{w_1^{m_2}}$ 
covers arbitrarily long prefixes of 
$\pmb{v_1} ~ v_2^{(m_1,m_2)} ~ \pmb{v_3^{m_2+1}}$,
which in its turn contains long repetitions of the word $v_3$.
Hence, by Lemma \ref{lemma:fine-wilf},
the word 
$\pmb{v_1} ~ v_2^{(m_1,m_2)} ~ \pmb{v_3^{m_2+1}}$ 
has period $|v_3|$.

We remark that the periodicity shown so far holds for infinitely many
$m_1$ and for all but finitely many $m_2$, where the threshold for
$m_2$ depends on $m_1$: once $m_1$ is fixed, $m_2$ needs to be larger
than $f(m_1)$, for a suitable function $f$.  
In fact, using Theorem~\ref{thm:saarela}, we can show that the periodicity 
holds even when $m_2$ ranges over all natural numbers. 
To see this, we introduce the following shorthand: 
given a word $w$ and a rational number $r=\frac{n}{|w|}$, with $n\in\bbN$, 
we denote by $w^r$ the word $w^{\lfloor r \rfloor}\:w'$, where
$w'$ is the prefix of $w$ of length $|w|\cdot (r - \lfloor r \rfloor)$.
We then state the periodicity property for 
$\pmb{v_1} ~ v_2^{(m_1,m_2)} ~ \pmb{v_3^{m_2+1}}$ 
as an equation of the form
\[
  \pmb{v_1} ~ v_2^{(m_1,m_2)} ~ \pmb{v_3^{m_2+1}} ~=~ \pmb{v^{\frac{g(m_2)}{|v|}}}
\] 
which, once $m_1$ is fixed, must hold for all but finitely many $m_2$, 
for a suitable word $v$ of the same length as $v_3$, and for a 
suitable linear function $g:\bbN\rightarrow\bbN$. 
More precisely, $g(m_2)$ gives the length of the 
left hand-side of the equation.
The above equation can be easily rewritten so as to highlight 
all the repetitions that depend on $m_2$, including those that
are hidden inside the term $v_2^{(m_1,m_2)}$. 
Note that we cannot apply Theorem~\ref{thm:saarela} yet, since
the repetitions in the right hand-side of the equation may be fractional.
If this is the case, however, it means that the left hand-side of the 
equation contains a repetition of the form $w^{m_2}$, for some word
$w$ whose length is not multiple of $|v|$. By Fine and Wilf's 
theorem (Lemma \ref{lemma:fine-wilf}), we know that the period 
of the left hand-side is in fact smaller, i.e.~$\gcd(|v_3|,|w|)$.
We can then replace the right hand-side of the equation 
with an exact repetition of a word $v'$ shorter than $v$.
This enables the application of Theorem~\ref{thm:saarela},
which implies that the equation holds for all $m_2\in\bbN$.
In this way we have shown that the word 
$\pmb{v_1} ~ v_2^{(m_1,m_2)} ~ \pmb{v_3^{m_2+1}}$ 
has period $|v_3|$ for all $m_2\in\bbN$.

We could also apply a symmetric reasoning, by fixing $m_2$ and 
by letting $m_1$ grow arbitrarily large. Doing so, we prove 
that for a large enough $m_2$ and for all but finitely many $m_1$, 
the word $\pmb{v_1^{m_1+1}} ~ v_2^{(m_1,m_2)} ~ \pmb{v_3}$ is periodic
with period $|v_1|$. As before, this can be strengthened to hold for all 
$m_1\in\bbN$, independently of the choice of $m_2$.

Putting together the results proven so far, we get that for all but finitely many $m_1,m_2$,
\[
  \rightward{ \underbracket[0.5pt]{ \phantom{ \pmb{v_1^{m_1}} \cdot \pmb{v_1} ~\cdot~
                                                   v_2^{(m_1,m_2)} ~\cdot~ \pmb{v_3} } }%
                                 _{\text{period } |v_1|} }
  \pmb{v_1^{m_1}} \cdot
  \overbracket[0.5pt]{ \pmb{v_1} ~\cdot~ v_2^{(m_1,m_2)} ~\cdot~ 
                       \pmb{v_3} \cdot \pmb{v_3^{m_2}} }%
                    ^{\text{period } |v_3|}
  .
\]
Finally, we observe that the prefix 
$\pmb{v_1^{m_1+1}}\cdot v_2^{(m_1,m_2)}\cdot \pmb{v_3}$ 
and the suffix
$\pmb{v_1}\cdot v_2^{(m_1,m_2)}\cdot\pmb{v_3^{m_2+1}}$ 
share a common factor of length at least $|v_1|+|v_3|$. 
By Lemma~\ref{lemma:fine-wilf}, 
we derive that $\pmb{v_1^{m_1+1}}\cdot v_2^{(m_1,m_2)}\cdot
\pmb{v_3^{m_2+1}}$ has period $\gcd(|v_1|,|v_3|)$.
Finally, by exploiting again Theorem \ref{thm:saarela}, 
we generalize this periodicity property to all $m_1,m_2\in\bbN$.
\end{proof}

\begin{replemma}{lem:output-minimal}
For every output-minimal pair $(L,C)$, $|\out{\tr{C}}| \le \bound$.
\end{replemma}

\begin{proof}
Consider a pair $(L,C)$ consisting of an idempotent loop $L=[x_1,x_2]$ and a component $C$ of $L$. 
We suppose that the length of $\out{\tr{C}}$ exceeds $\bound$ and we claim that $(L,C)$ is not output-minimal.

Recall that $\tr{C}$ is a concatenation of $(L,C)$-factors, say,
$\tr{C}=\beta_1\cdots\beta_k$. Let $\ell_1$ (resp.~$\ell_2$) be the first 
(resp.~last) location that is visited by these factors. Further let
$K = [\ell_1,\ell_2]$ and $Z = K \:\cap\: (L\times\bbN)$. 
By construction, the subrun $\rho\mid Z$ can be seen as a concatenation
of the factors $\beta_1,\dots,\beta_k$, possibly in a different order than
that of $\tr{C}$. This implies that $|\out{\rho\mid Z}| > \bound$.

By Theorem \ref{thm:simon2}, we know that there exist
an idempotent loop $L'\subsetneq L$ and a component $C'$ of $L'$
such that $\an{C'} \in K$ and $\out{\tr{C'}}\neq\emptystr$.
In particular, the $(L',C')$-factor that starts at the location $\an{C'}$ 
is entirely contained in some $(L,C)$-factor. This implies that 
$(L',C') \lesspair (L,C)$, and thus $(L,C)$ is not output-minimal.
\end{proof}

\begin{repproposition}{prop:block-periodicity}
If $\rho$ satisfies the periodicity property stated in \PR2
and $\ell \leqtime \ell'$ are two locations in the 
same $\simeq$-class, then $\outb{\rho[\ell,\ell']}$ 
has period at most $\bound$.
\end{repproposition}

\begin{proof}
The claim for $\ell=\ell'$ holds trivially, so we assume that $\ell\lesstime\ell'$. 
We know that $\ell,\ell'$ belong to the same non-singleton $\simeq$-class.
By definition of $\crossrel$, the run $\rho$ contains some inversions
$(L_0,C_0,L_1,C_1)$, $(L_2,C_2,L_3,C_3)$, \dots, $(L_{2k},C_{2k},L_{2k+1},C_{2k+1})$ 
such that $\an{C_0} \leqtime \ell \lesstime \ell' \leqtime \an{C_{2k+1}}$ and
$\an{C_{2i}} \leqtime \an{C_{2i+2}} \leqtime \an{C_{2i+1}} \leqtime \an{C_{2i+3}}$
for all $i=0,\dots,k-1$. Without loss of generality we can assume that every 
inversion $(L_{2i},C_{2i},L_{2i+1},C_{2i+1})$ is \emph{maximal} in the following sense:
there is no other inversion $(L,C,L',C') \neq (L_{2i},C_{2i},L_{2i+1},C_{2i+1})$ such 
that $\an{C}\leqtime\an{C_{2i}}\leqtime\an{C_{2i+1}}\leqtime\an{C'}$.

We introduce the following shorthands for all $i=0,\dots,2k+1$:
$\ell_i=\an{C_i}$, $v_i=\out{\tr{C_i}}$, and $p_i=|v_i|$. By Property P2, we know that $v_{2i} ~ \out{\rho[\ell_{2i},\ell_{2i+1}]} ~ v_{2i+1}$ 
has period at most $\bound$ that divides both $p_{2i}$ and $p_{2i+1}$.

In order to show that $\outb{\rho[\ell,\ell']}$ has period at most $B$, 
it suffices to prove the following claim by induction on $i$:

\begin{claim}
For all $i=0,\dots,k$, the period of
$\outb{\rho[\ell_0,\ell_{2i+1}]} \: v_{2i+1}$ divides $p_{2i+1}$
and is bounded by $\bound$.
\end{claim}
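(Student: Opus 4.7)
The plan is to prove the claim by induction on $i$. For the base case $i=0$, I would invoke \PR2 on the inversion $(L_0,C_0,L_1,C_1)$: this yields a period $p\le\bound$ of $v_0 \cdot \outb{\rho[\ell_0,\ell_1]} \cdot v_1$ dividing $p_0$ and $p_1$, and the desired word $\outb{\rho[\ell_0,\ell_1]} \cdot v_1$ inherits this period as a suffix.

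For the inductive step, I would apply \PR2 to the next inversion $(L_{2i+2},C_{2i+2},L_{2i+3},C_{2i+3})$, obtaining a period $p_{\star\star}\le\bound$ of $W_2 = v_{2i+2} \cdot \outb{\rho[\ell_{2i+2},\ell_{2i+3}]} \cdot v_{2i+3}$ dividing $p_{2i+2}$ and $p_{2i+3}$, and combine it with the induction hypothesis, which provides a period $p_\star\le\bound$ of $W_1 = \outb{\rho[\ell_0,\ell_{2i+1}]} \cdot v_{2i+1}$ dividing $p_{2i+1}$. The chain condition $\ell_{2i+2}\leqtime\ell_{2i+1}$ makes $w = \outb{\rho[\ell_{2i+2},\ell_{2i+1}]}$ a common factor of $W_1$ (sitting just before its tail $v_{2i+1}$) and of $W_2$ (sitting just after its head $v_{2i+2}$). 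Applying Lemma~\ref{lemma:fine-wilf} to $w_1 = w \cdot v_{2i+1}$ (suffix of $W_1$, period $p_\star$) and $w_2 = v_{2i+2} \cdot w$ (prefix of $W_2$, period $p_{\star\star}$) with common factor $w$, one derives that $w_1$ and $w_2$ both have period $p' = \gcd(p_\star,p_{\star\star})$.

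I would then propagate this smaller period back to all of $W_1$ and $W_2$: since $|v_{2i+1}|=p_{2i+1}$ is a multiple of $p_\star$, the period block of length $p_\star$ of $W_1$ appears (up to cyclic rotation) as a window inside $v_{2i+1}$, so it too has period $p'$, forcing $W_1$ itself to have period $p'$; the symmetric argument via $v_{2i+2}$ handles $W_2$. Finally, the target word $W_3 = \outb{\rho[\ell_0,\ell_{2i+3}]} \cdot v_{2i+3}$ is obtained by gluing the prefix of $W_1$ up to location $\ell_{2i+2}$ with the suffix of $W_2$ from $\ell_{2i+2}$ onwards; both pieces agree on the overlap $w$ and have period $p'$, so $W_3$ has period $p'\le\bound$, which divides $p_{\star\star}$ and hence $p_{2i+3}$.

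The main obstacle is the hypothesis of Lemma~\ref{lemma:fine-wilf}, which requires $|w|\ge p_\star + p_{\star\star} - \gcd(p_\star,p_{\star\star})$, a quantity bounded by $2\bound$. When the real overlap $\outb{\rho[\ell_{2i+2},\ell_{2i+1}]}$ is shorter than this bound, the effective common factor must be enlarged by treating $v_{2i+1}$ and $v_{2i+2}$ as periodic continuations of $w$ on either side; the maximality assumption on each inversion of the chain is precisely what makes these extensions consistent, allowing a refined application of the stronger form of Fine-Wilf to still conclude periodicity of $W_3$ with bound $\bound$.
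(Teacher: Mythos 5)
Your plan identifies the right shape of the inductive step (overlap around $\ell_{2i+2},\ell_{2i+1}$, Fine--Wilf to merge), but the central obstacle you flag at the end is in fact a genuine gap, and the fix you sketch does not supply the missing idea. The common factor you choose, $w = \outb{\rho[\ell_{2i+2},\ell_{2i+1}]}$, can be arbitrarily short or even empty, so Lemma~\ref{lemma:fine-wilf} simply does not apply to the pair $w\cdot v_{2i+1}$, $v_{2i+2}\cdot w$. Your proposed repair — ``treating $v_{2i+1}$ and $v_{2i+2}$ as periodic continuations of $w$'' — presupposes that the concatenation $v_{2i+2}\, w\, v_{2i+1}$ is periodic, but at that point in your argument you have no reason to believe this: $v_{2i+1}$ and $v_{2i+2}$ carry a priori unrelated periods, and the maximality assumption on the chain of inversions is not, by itself, a periodicity statement.

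What is missing is that the maximality assumption is used to prove a structural fact first: the \emph{hybrid} tuple $(L_{2i+2},C_{2i+2},L_{2i+1},C_{2i+1})$ is itself an inversion (maximality rules out $\an{C_{2i+1}}$ being weakly to the left of $\an{C_{2i+2}}$, since otherwise $(L_{2i},C_{2i},L_{2i+3},C_{2i+3})$ would extend $(L_{2i},C_{2i},L_{2i+1},C_{2i+1})$). Once this is known, Property~\PR2 — not Fine--Wilf — is applied directly to that hybrid inversion to obtain that $v_{2i+2}\,\outb{\rho[\ell_{2i+2},\ell_{2i+1}]}\, v_{2i+1}$ is periodic with period dividing $\gcd(p_{2i+2},p_{2i+1})$. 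This is what gives you a long enough common factor: $\outb{\rho[\ell_{2i+2},\ell_{2i+1}]}\, v_{2i+1}$ has length at least $p_{2i+1}$, which, since the new period divides $p_{2i+1}$, is exactly the Fine--Wilf threshold. The rest of your plan (merging with $W_1$ from the induction hypothesis, then with $W_2$ using $v_{2i+2}$ as the overlap, and finally removing a factor whose length is a multiple of the period) is in line with the paper's proof once this bridging inversion is in place. Without it, the induction step does not go through.
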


\begin{proof}[Proof of claim]
The base case $i=0$ follows from Property P2, 
since $(L_0,C_0,L_1,C_1)$ is an inversion.
For the inductive step, we assume that the claim holds for $i<k$ 
and we prove it for $i+1$. We factorize our word as follows:
\[
  \outb{\rho[\ell_0,\ell_{2i+3}]} ~ v_{2i+3} 
  ~=~
  \rightward{ \underbracket[0.5pt]{ \phantom{ 
                                              \outb{\rho[\ell_0,\ell_{2i+2}]} ~ 
                                              \outb{\rho[\ell_{2i+2},\ell_{2i+1}]} ~ } }%
                                 _{\text{period } p_{2i+1}} }
  \outb{\rho[\ell_0,\ell_{2i+2}]}
  \overbracket[0.5pt]{ ~ \outb{\rho[\ell_{2i+2},\ell_{2i+1}]} ~
                       \outb{\rho[\ell_{2i+1},\ell_{2i+3}]} ~
                       v_{2i+3} }%
                    ^{\text{periods } p_{2i+2} \text{ and } p_{2i+3}}
  .
\]
By the inductive hypothesis, the output produced between $\ell_0$ 
and $\ell_{2i+1}$, even extended to the right with the trace output $v_{2i+1}$, 
has period that divides $p_{2i+1}$. 
Moreover, because $(L_{2i+2},C_{2i+2},L_{2i+3},C_{2i+3})$ is an inversion, 
the output produced between the locations 
$\ell_{2i+2}=\an{C_{2i+2}}$ and $\ell_{2i+3}=\an{C_{2i+3}}$, 
extended to the left with $v_{2i+2}$ and to the 
right with $v_{2i+3}$, has period that divides both $p_{2i+2}$ and $p_{2i+3}$.
This does not suffice yet to apply Fine-Wilf's theorem 
so as to derive a suitable period of
$\outb{\rho[\ell_0,\ell_{2i+3}]} ~ v_{2i+3}$, 
since the common factor $\outb{\rho[\ell_{2i+2},\ell_{2i+1}]}$ 
might be too short.
The key argument here is that the interval $[\ell_{2i+2},\ell_{2i+1}]$ 
is covered by the inversion $(L_{2i+2},C_{2i+2},L_{2i+1},C_{2i+1})$,
which is different from the previous ones.

For this, we have to prove that the anchors $\an{C_{2i+2}}$ and $\an{C_{2i+1}}$ 
are correctly ordered w.r.t.~$\leqtime$ and the ordering of positions
(recall Definition~\ref{def:inversion}).
First, we have
$\an{C_{2i+2}} \leqtime \an{C_{2i+1}}$ by assumption.
Now we prove that $\an{C_{2i+1}}$ is strictly to the left of $\an{C_{2i+2}}$,
according to the ordering of positions. By way of contradiction,
suppose that this is not the case, namely,
$\an{C_{2i+1}}=(x_{2i+1},y_{2i+1})$, $\an{C_{2i+2}}=(x_{2i+2},y_{2i+2})$, 
and $x_{2i+1} > x_{2i+2}$.
Because $(L_{2i},C_{2i},L_{2i+1},C_{2i+1})$ and
$(L_{2i+2},C_{2i+2},L_{2i+3},C_{2i+3})$ are inversions, 
we know that $\an{C_{2i+3}}$ is to the left of $\an{C_{2i+2}}$ 
and $\an{C_{2i+1}}$ is to the left of $\an{C_{2i}}$. 
This implies that $\an{C_{2i+3}}$ is to the left of $\an{C_{2i}}$,
and hence $(L_{2i},C_{2i},L_{2i+3},C_{2i+3})$ is also an inversion.
But this would contradict the maximality of $(L_{2i},C_{2i},L_{2i+1},C_{2i+1})$,
which was assumed at the beginning of the proof.
\end{proof}

Now that we know that $\an{C_{2i+2}}$ and $\an{C_{2i+1}}$ 
are correctly ordered w.r.t.~$\leqtime$ and the ordering of positions,
we recall that the trace outputs $v_{2i+1}$ and $v_{2i+2}$ are non-empty.
This implies that $(L_{2i+2},C_{2i+2},L_{2i+1},C_{2i+1})$ is an inversion.
Moreover, the latter inversion covers the interval of locations 
$[\ell_{2i+2},\ell_{2i+1}]$. 
By Property P2, the word
$v_{2i+2} ~ \out{\rho[\ell_{2i+2},\ell_{2i+1}]} ~ v_{2i+1}$
has period that divides both $p_{2i+2}$ and $p_{2i+1}$.

Summing up, we have:
\begin{enumerate}
  \item $w_1 ~=~ \outb{\rho[\ell_0,\ell_{2i+1}]} ~ v_{2i+1}$ has period $p_{2i+1}$,
  \item $w_2 ~=~ v_{2i+2} ~ \outb{\rho[\ell_{2i+2},\ell_{2i+1}]} ~ v_{2i+1}$ 
        has period $p = \gcd(p_{2i+2},p_{2i+1})$, 
  \item $w_3 ~=~ v_{2i+2} ~ \outb{\rho[\ell_{2i+2},\ell_{2i+3}]} ~ v_{2i+3}$
        has period $p' = \gcd(p_{2i+2},p_{2i+3})$.
\end{enumerate}
We are now ready to exploit our slightly stronger variant of Fine-Wilf's theorem, 
that is, Lemma~\ref{lemma:fine-wilf}. 

Let $w = \outb{\rho[\ell_{2i+2},\ell_{2i+1}]}~ v_{2i+1}$ be the common suffix of
$w_1$ and $w_2$. From 1.~and 2., we know that the latter words have period 
$p_{2i+1}$ and $p=\gcd(p_{2i+2},p_{2i+1})$, respectively.
Moreover, since $p$ divides $|w_2|-|w|$ ($=|v_{2i+2}|$), 
$w$ is also a prefix of $w_2$. 
For the same reason, we also know that 
$|w| \ge |v_{2i+1}| = p_{2i+1} = p_{2i+1} + p - \gcd(p_{2i+1},p)$ (the latter equality
follows from the fact that $p$ divides $p_{2i+1}$).
Thus, by applying Lemma~\ref{lemma:fine-wilf} to $w_1=w'_1\,w$ and $w_2=w\,w''_2$, 
using $w$ as common factor, we obtain that
\begin{enumerate}
  \setcounter{enumi}{3}
  \item $w_4 ~=~ w'_1 \: w \: w''_2 
             ~=~ \outb{\rho[\ell_0,\ell_{2i+2}]} ~ v_{2i+2} ~ \outb{\rho[\ell_{2i+2},\ell_{2i+1}]} ~ v_{2i+1}$ 
        has period $p$.
\end{enumerate}
Now, from 2.~and 3., we know that the words $w_2$ and $w_3$
have periods $p$ and $p'$, respectively, and contain $v_{2i+2}$ as factor.
Moreover, the length of the factor $v_{2i+2}$ is a multiple of both periods $p$ and $p'$,
and hence $|v_{2i+2}| \ge p+p'-\gcd(p,p')$ (this is folklore, and follows from basic facts 
in number theory, such as $q\cdot q' \ge q + q' - 1$ for all $q,q'\in\bbN$). 
From Lemma~\ref{lemma:fine-wilf} we derive that 
\begin{enumerate}
  \setcounter{enumi}{4}
  \item $w_5 ~=~ v_{2i+2} ~ \outb{\rho[\ell_{2i+2},\ell_{2i+3}]} ~ v_{2i+3}$
        has period $p'' = \gcd(p_{2i+1},p_{2i+2},p_{2i+3})$.
\end{enumerate}
In a similar way, from 4.~and 5., using again $v_{2i+2}$ as common factor of $w_4$ and $w_5$, 
we derive
\begin{enumerate}
  \setcounter{enumi}{5}
  \item $w_6 ~=~ \outb{\rho[\ell_0,\ell_{2i+2}]} ~ v_{2i+2} ~ 
                 \outb{\rho[\ell_{2i+2},\ell_{2i+3}]} ~ v_{2i+3}$ 
        has period $p''$.
\end{enumerate}
Finally, the periodicity is not affected when we remove
factors of length multiple than the period. In particular, 
by removing the factor $v_{2i+2}$ from $w_6$, 
we obtain the word
$\outb{\rho[\ell_0,\ell_{2i+3}]} ~ v_{2i+3}$, whose period
still divides $p_{2i+3}$. This proves the claim for the inductive step,
and completes the proof of the proposition.
\end{proof}

\begin{replemma}{lemma:bounding-box}
If $K=[\ell,\ell']$ is a non-singleton $\simeq$-class, then $\rho[\ell_1,\ell_2]$ is a block,
where $[\ell_1,\ell_2]=\block{K}$.
\end{replemma}

\begin{proof}
Let $K=[\ell,\ell']$ and $\block{K}=[\ell_1,\ell_2]$, 
with $\ell_i=(x_i,y_i)$ for both $i=1,2$,
and let $\an{K}$ and $X_{\an{K}}$ be the sets given in Definition \ref{def:bounding-box}. 

We begin by observing that the factor $\rho[\ell_1,\ell]$ 
between the first location of the block
and the first location of the equivalence class lies entirely 
to the right of position $x_1$.
Indeed, if this were not the case, there would exist another location 
$\ell'_1=(x_1,y_1+1)$, on the same position $x_1$ as $\ell_1$ but at a higher level,
such that $\ell_1 \lesstime \ell'_1 \leqtime \ell$. But this would contradict 
Definition \ref{def:bounding-box}.
In a similar way one verifies that the factor $\rho[\ell',\ell_2]$ 
lies to the left of $x_2$.

Next, we prove that the output produced by the factor 
$\rho[\ell_1,\ell_2]$ is quasi-periodic. 
By Definition \ref{def:bounding-box}, we have 
$\ell_1\leqtime\ell\lesstime\ell'\leqtime\ell_2$,
and by Proposition \ref{prop:block-periodicity} 
we know that $\out{\rho[\ell,\ell']}$ is periodic with 
period at most $\bound$. So it suffices to bound the length 
of the words $\out{\rho[\ell_1,\ell]}$ and $\out{\rho[\ell',\ell_2]}$. 
We shall focus on the former word, as the arguments for the latter 
are similar. 
As usual, the idea is to apply a Ramsey-type argument.

Suppose, by way of contradiction, 
that the length of $|\out{\rho[\ell_1,\ell]}| > \bound$.
We head towards finding a location $\ell'' \lesstime \ell$ 
that is $\simeq$-equivalent to $\ell$,
thus contradicting the fact that $\ell$ is the first location of the equivalence class $K$.
Recall that the factor $\rho[\ell_1,\ell]$ lies entirely to the right of the
position $x_1$ of $\ell_1$, so $|\out{\rho[\ell_1,\ell]}| > \bound$ is 
equivalent to saying 
$|\out{\rho\mid Z}| > \bound$, where $Z = [\ell_1,\ell] \:\cap\: \big([x_1,\omega]\times\bbN\big)$.
Theorem \ref{thm:simon2} implies the existence of an idempotent loop $L$ and a
component $C$ such that
\begin{itemize}
  \item $\min(L) > x_1$,
  \item $\ell_1\lesstime \an{C} \lesstime \ell$,
  \item $\out{\tr{C}}\neq\emptystr$.
\end{itemize}
Let $\ell''=\an{C}$. By construction, $x_1$ is the leftmost position of all the 
locations of the class $K=[\ell,\ell']$ that are also anchors of components of inversions.
Thus there exist an inversion $(L_1,C_1,L_2,C_2)$ 
and a location $\ell'''=(x_1,y''')\in K$ such that 
$\ell'''=\an{C_i}$ for some $i\in\{1,2\}$. 
Since $\ell'' \lesstime \ell \leqtime \ell'''$
and the position of $\ell''$ is to the right of $x_1$, 
we know that $(L,C,L_i,C_i)$ is also an inversion, 
and hence $\ell'' \simeq \ell''' \simeq \ell$.
But since $\ell''\neq \ell$, we get a contradiction with the 
assumption that $\ell$ is the first location of a $\simeq$-class. 
In this way we have shown that $|\out{\rho[\ell_1,\ell]}| \le \bound$.

It remains to bound the lengths of the outputs produced 
by the subruns $\rho\mid Z^\leftshort$ and $\rho\mid Z^\rightshort$, 
where $Z^\leftshort=[\ell_1,\ell_2] \:\cap\: \big([0,x_1]\times\bbN\big)$ 
and $Z^\rightshort=[\ell_1,\ell_2] \:\cap\: \big([x_2,\omega]\times\bbN\big)$.
As usual, we consider only one of the two symmetric cases.
Suppose, by way of contradiction, that $|\out{\rho\mid Z^\leftshort}| > \bound$.
By Theorem \ref{thm:simon2}, there exist an idempotent loop $L$ 
and a component $C$ of $L$ such that
\begin{itemize}
  \item $\max(L) < x_1$,
  \item $\ell_1 \lesstime \an{C} \lesstime \ell_2$,
  \item $\out{\tr{C}}\neq\emptystr$.
\end{itemize}
Let $\ell''=\an{C}$.
By following the same line of reasoning as before, we recall that
$\ell$ is the first location of the non-singleton class $K$.
From this we derive the existence an inversion $(L_1,C_1,L_2,C_2)$ 
such that $\ell=\an{C_1}$.
We claim that $\ell \leqtime \ell''$.
Indeed, if this were not the case, then, because $\ell''$ is strictly to the 
left of $x_1$ and $\ell$ is to the right of $x_1$, there would exist a location 
$\ell'_1$ between $\ell''$ and $\ell$ that lies at position $x_1$. 
But $\ell_1 \lesstime \ell'' \leqtime \ell'_1 \leqtime \ell$ would 
contradict the fact that $\ell_1$ is the {\sl latest} location before $\ell$ 
that lies at the position $x_1 = \min(X_{\an{K}})$.
Now that we know that $\ell\leqtime\ell''$ and that $\ell''$ is to the left of $x_1$, 
we observe that $(L_1,C_1,L,C)$ is also an inversion, and hence $\ell''\in \an{K}$. 
Since $\ell''$ is strictly to the left of $x_1$,
we get a contradiction with the definition of $x_1$ as leftmost 
position of the locations of $K$ that are anchors of components of inversions.
We must conclude that $|\out{\rho\mid Z^\leftshort}| \le \bound$.

This completes the proof that $\rho\mid\block{K}$ is a block.
\end{proof}

\begin{replemma}{lem:consecutive-blocks}
Suppose that $K_1$ and $K_2$ are two different non-singleton $\simeq$-classes
such that $\ell \lesstime \ell'$ for all $\ell \in K_1$ and $\ell' \in
K_2$.
Let $\block{K_1}=[\ell_1,\ell_2]$ and $\block{K_2}=[\ell_3,\ell_4]$,
with $\ell_2=(x_2,y_2)$ and $\ell_3=(x_3,y_3)$. 
Then $x_2 < x_3$.
\end{replemma}

\begin{proof}
Suppose by contradiction  that  $K_1$ and $K_2$  are as in the
statement, but $x_2 \ge x_3$. 
By Definition \ref{def:bounding-box}, 
$x_2=\max(X_{\an{K_1}})$ and $x_3=\min(X_{\an{K_2}})$.
This implies the existence of some inversions
$(L_1,C_1,L_2,C_2)$ and $(L_3,C_3,L_4,C_4)$ such that
$\an{C_i}=(x_2,y)$ for some $i\in\{1,2\}$
and $\an{C_j}=(x_3,y')$ for some $j\in\{3,4\}$.
Moreover, since $\an{C_i} \leqtime \an{C_j}$ and $x_2 \ge x_3$, 
we know that $(L_i,C_i,L_j,C_j)$ is also an inversion.
But this means that $K_1=K_2$.
\end{proof}

\begin{replemma}{lem:diagonal}
Let $\rho[\ell_1,\ell_2]$ be a
factor of $\rho$ that does not overlap any $\simeq$-block,
with $\ell_1=(x_1,y_1)$, $\ell_2=(x_2,y_2)$, and $x_1<x_2$.
Then $\rho[\ell_1,\ell_2]$ is a diagonal.
\end{replemma}

\begin{proof}
Suppose by contradiction that there is some $x\in[x_1,x_2]$
such that, for all locations $\ell=(x,y)$ between $\ell_1$ and $\ell_2$, 
one of the following conditions holds:
\begin{enumerate}
  \item $|\out{\rho\mid Z_\ell^\upperleft}| > \bound$, 
        where $Z_\ell^\upperleft = [\ell,\ell_2] \:\cap\: \big([0,x]\times\bbN\big)$,
  \item $|\out{\rho\mid Z_\ell^\lowerright}| > \bound$, 
        where $Z_\ell^\lowerright = [\ell_1,\ell] \:\cap\: \big([x,\omega]\times\bbN\big)$.
\end{enumerate}
We claim first that for each condition above
there is some level $y$ at which it holds. Observe that for the
highest location $\ell$ of the run at position $x$, the set
$Z_\ell^\upperleft$ is empty, since the outgoing transition at $\ell$
is rightward. So condition 1 is trivially violated at $\ell$ as above, hence
condition 2 holds by the initial assumption. Symmetrically, condition
1 holds at the lowest location  of the run at position $x$.
Let us now compare, for each condition, the levels where it holds.

Clearly, the lower the level of the location $\ell$, 
the easier it is to satisfy condition 1, and symmetrically for condition 2.
So, let $\ell=(x,y)$ (resp.~$\ell'=(x,y')$) be the highest (resp.~lowest) 
location at position $x$ that satisfies condition 1 (resp.~condition 2). 

We claim that $y \ge y'$.
For this, we first observe that $y \ge y'-1$, since otherwise there would 
exist a location $\ell=(x,y'')$, with $y < y'' < y'$, violating both conditions 1 and 2.
Moreover, $y$ must be odd, otherwise the transition departing from $\ell=(x,y)$ 
would be rightward oriented and the location $\ell''=(x,y+1)$ would still 
satisfy condition 1, contradicting the fact that $\ell=(x,y)$ was chosen to 
be the highest location. 
For similar reasons, $y'$ must also be odd, otherwise there would be 
a location $\ell''=(x,y'-1)$ that precedes $\ell'$ and satisfies condition 2.
But since $y \ge y'-1$ and both $y$ and $y'$ are odd, we need to have $y\ge y'$.

From the previous arguments we know that in fact $\ell=(x,y)$ satisfies both conditions
1 and 2. We can thus apply Theorem \ref{thm:simon2} to the sets 
$Z_\ell^\lowerright$ and $Z_\ell^\upperleft$, deriving the existence of
two idempotent loops $L_1,L_2$ and two components $C_1,C_2$ of $L_1,L_2$, 
respectively, such that
\begin{itemize}
  \item $\max(L_2) < x < \min(L_1)$,
  \item $\ell_1 \lesstime \an{C_1} \lesstime \ell \lesstime \an{C_2} \lesstime \ell_2$,
  \item $\out{\tr{C_1}},\out{\tr{C_2}}\neq\emptystr$.
\end{itemize}
In particular, since $\an{C_1}$ is to the right of $\an{C_2}$ w.r.t.~the order
of positions, we know that $(L_1,C_1,L_2,C_2)$ is an inversion, and hence
$\an{C_1} \simeq \an{C_2}$. But this contradicts the assumption that 
$\rho[\ell_1,\ell_2]$ does not overlap with any $\simeq$-block.
\end{proof}

\begin{repproposition}{prop:sufficiency}
Given a functional two-way transducer $\cT$, 
one can construct in $\threeexptime$ 
a one-way transducer $\cT'$ such that 
$\cT' \subseteq \cT$ and $\dom(\cT') \supseteq U$.
\end{repproposition}

\begin{proof}
Given an input $u$, the transducer $\cT'$ will guess (and check) 
a successful run $\rho$ of $\cT$ on $u$, together with a decomposition 
$(\rho[\ell_i,\ell_{i+1}])_i$ of $\rho$ into blocks and diagonals.
The decomposition will be used by $\cT'$ to simulate the output of 
$\rho$ left-to-right, thus proving that $\cT' \subseteq \cT$.
Moreover, $u\in U$ implies the existence of a successful run 
that can be decomposed, thus proving that $\dom(\cT') \supseteq U$.
We now provide some details of the construction of $\cT'$.

Guessing the run $\rho$ is standard (see, for instance, \cite{she59,HU79}): 
it amounts to guess the crossing sequences $\rho|x$ for 
each position $x$ of the input. Recall that this is a bounded
amount of information for each $x$, since the run is normalized.
As concerns the decomposition of $\rho$, it can be encoded
by the endpoints $\ell_i$ of its factors, that is, by annotating 
the position of each $\ell_i$ as the level of $\ell_i$.
In a similar way $\cT'$ guesses the information of whether
each factor $\rho[\ell_i,\ell_{i+1}]$ is a diagonal or a block.

Thanks to the definition of decomposition (Def.~\ref{def:decomposition}), 
every two distinct factors span across non-overlapping intervals of positions. 
This means that each position $x$ is covered by exactly one factor of 
the decomposition. We call this factor the \emph{active factor at position $x$}.
The mode of computation of the transducer will depend on 
the type of active factor: if the active factor is a diagonal
(resp.~a block), then we say that $\cT'$ is in \emph{diagonal mode} 
(resp.~\emph{block mode}). 
Below we describe the behaviour for these two modes of computation.

\smallskip
\par\noindent\emph{Diagonal mode.}~
We recall the key condition satisfied by the diagonal 
$\rho[\ell_i,\ell_{i+1}]$ that is active at position $x$ 
(cf.~Def.~\ref{def:factors} and Figure~\ref{fig:factors}):
there exists a location $\ell_x=(x,y_x)$ 
between $\ell_i$ and $\ell_{i+1}$ such that the words
$\out{\rho\mid Z_{\ell_x}^\upperleft}$ and $\out{\rho\mid Z_{\ell_x}^\lowerright}$ 
have length at most $\bound$, where 
$Z_{\ell_x}^\upperleft = [{\ell_x},\ell_2] \:\cap\: \big([0,x]\times\bbN\big)$
and $Z_{\ell_x}^\lowerright = [\ell_1,{\ell_x}] \:\cap\: \big([x,\omega]\times\bbN\big)$.

Besides the run $\rho$ and the decomposition, the transducer $\cT'$ will
also guess the locations $\ell_x=(x,y_x)$, that is, will annotate each $x$
with the corresponding $y_x$.
Without loss of generality, we can assume that the function that 
associates each position $x$ with the guessed location $\ell_x=(x,y_x)$ 
is monotone, namely, $x\le x'$ implies $\ell_x\leqtime\ell_{x'}$.
While the transducer $\cT'$ is in diagonal mode, the goal is to preserve 
the following invariant: 

\begin{quote}
\em
After reaching a position $x$ covered by the active diagonal, 
$\cT'$ must have produced the output of $\rho$ up to location $\ell_x$. 
\end{quote}

\noindent
To preserve the above invariant when moving from $x$ to the next 
position $x+1$, the transducer should output the word 
$\out{\rho[\ell_x,\ell_{x+1}]}$. This word consists of
the following parts:
\begin{enumerate}
  \item The words produced by the single transitions of $\rho[\ell_x,\ell_{x+1}]$
        with endpoints in $\{x,x+1\}\times\bbN$. 
        Note that there are at most $\hmax$ such words, 
        each of them has length at most $\cmax$, and they can all be determined 
        using the crossing sequences at $x$ and $x+1$ and the information
        about the levels of $\ell_x$ and $\ell_{x+1}$.
        We can thus assume that this information is readily available
        to the transducer.
  \item The words produced by the factors of $\rho[\ell_x,\ell_{x+1}]$ 
        that are intercepted by the interval $[0,x]$. 
        Thanks to the definition of diagonal, we know that
        the total length of these words is at most $\bound$.
        These words cannot be determined from the information
        on $\rho|x$, $\rho|x+1$, $\ell_x$, and $\ell_{x+1}$
        alone, so they need to be constructed while scanning the input.
        For this, it is important to store additional information. 
        
        More precisely, at each position $x$ of the input, 
        the transducer stores all the outputs produced by the factors of 
        $\rho$ that are intercepted by $[0,x]$ and that occur {\sl after} 
        a location of the form $\ell_{x'}$, for any $x'\ge x$ that is 
        covered by a diagonal.
        This clearly includes the previous words when $x'=x$, but also 
        other words that might be used later for processing other diagonals.
        Moreover, by exploiting the properties of diagonals,
        one can prove that those words have length at most $\bound$, 
        so they can be stored with triply exponentially many states.
        Using classical techniques, the stored information
        can be maintained while scanning the input $u$ using the
        guessed crossing sequences of $\rho$.
  \item The words produced by the factors of $\rho[\ell_x,\ell_{x+1}]$ 
        that are intercepted by the interval $[x+1,\omega]$. 
        These words must be guessed, since they depend on a portion
        of the input that has not been processed yet. 
        Accordingly, the guesses need to be stored into memory,
        so that they can be checked later. Formally, the transducer 
        stores, for each position $x$, the guessed words that correspond 
        to the outputs produced by the factors of $\rho$ intercepted by 
        $[x,\omega]$ and occurring {\sl before} a location of the form 
        $\ell_{x'}$, for any $x'\le x$ that is covered by a diagonal.
\end{enumerate}

\smallskip
\par\noindent\emph{Block mode.}~
Suppose that the active factor $\rho[\ell_i,\ell_{i+1}]$ is a block.
Let $I=[x_i,x_{i+1}]$ be the set of positions covered by this factor.
Moreover, for each position $x\in I$, let 
$Z^\leftshort_x = [\ell_i,\ell_{i+1}] \:\cap\: \big([0,x]\times \bbN\big)$
and $Z^\rightshort_x = [\ell_i,\ell_{i+1}] \:\cap\: \big([x,\omega]\times \bbN\big)$.
We recall the key property of a block
(cf.~Definition~\ref{def:factors} and Figure~\ref{fig:factors}): 
the word $\out{\rho[\ell_{i_x},\ell_{i_x+1}]}$ is almost periodic with bound $\bound$, 
and the words $\out{\rho\mid Z^\leftshort_{x_i}}$ and $\out{\rho\mid Z^\rightshort_{x_{i+1}}}$ 
have length at most $\bound$.

For the sake of simplicity, suppose that $\out{\rho[\ell_i,\ell_{i+1}]} = w_1\,w_2,w_3$,
where $w_2$ is periodic with period $\bound$ and $w_1,w_2$ have length at most $\bound$.
Similarly, let $w_0 = \out{\rho\mid Z^\leftshort_{x_i}}$ and $w_4 = \out{\rho\mid Z^\rightshort_{x_{i+1}}}$.
The invariant preserved by $\cT'$ in block mode is the following: 

\begin{quote}
\em
After reaching a position $x$ covered by the active block $\rho[\ell_i,\ell_{i+1}]$, 
$\cT'$ must have produced the output of the prefix of $\rho$
up to location $\ell_i$, followed by a prefix of $\out{\rho[\ell_i,\ell_{i+1}]} = w_1\,w_2\,w_3$
of the same length as $\out{\rho\mid Z^\leftshort_x}$.
\end{quote}

\noindent
The initialization of the invariant is done when reaching the left 
endpoint $x_i$ of the interval $I$. At this moment, it suffices that $\cT'$ outputs 
a prefix of $w_1\,w_2\,w_3$ of the same length as $w_0 = \out{\rho\mid Z^\leftshort_{x_i}}$,
thus bounded by $\bound$.
Symmetrically, when reaching the right endpoint $x_{i+1}$ of $I$,
$\cT'$ will have produced almost the entire word 
$\out{\rho[\ell_1,\ell_i]} \, w_1 \, w_2 \, w_3$,
but without the suffix of length $|w_4| \le \bound$. 
Thus, before moving to the next factor of the decomposition, the transducer will 
have to produce the remaining suffix, so as to complete the output 
of $\rho$ up to location $\ell_{i_x+1}$.

It remains to describe how the above invariant can be maintained
when moving from a position $x$ to the next position $x+1$ inside $I$.
For this, it is convenient to succinctly represent the word $w_2$ 
by its repeating pattern, say $v$, of length at most $\bound$. 
To determine the symbols that have to be output at each step,
the transducer will maintain a pointer on either $w_1\,v$ or $w_3$.
The pointer is increased in a deterministic way, and precisely
by the amount $\big|\out{\rho\mid Z^\leftshort_{x+1}}\big| - \big|\out{\rho\mid Z^\leftshort_x}\big|$.
The only exception is when the pointer lies in $w_1\,v$, but its 
increase would go over $w_1\,v$: in this case the transducer has 
the choice to either bring the pointer back to the beginning of $v$ 
(representing a periodic output inside $w_2$), or move it to $w_3$. 
Of course, this is a non-deterministic choice, but it can be 
validated when reaching the right endpoint of $I$.
Concerning the number of symbols that need to be emitted at each
step, this can be determined from the crossing sequences at
$x$ and $x+1$, and from the knowledge of the lowest and highest 
levels of locations that are at position $x$ and between 
$\ell_i$ and $\ell_{i+1}$. We denote the latter levels by
$y^-_x$ and $y^+_x$, respectively.

Overall, this shows how to maintain the invariant of the block mode,
assuming that the levels $y^-_x,y^+_x$ are known, as well as
the words $w_0,w_1,v,w_3,w_4$ of bounded length.
Like the mapping $x\mapsto\ell_x=(x,y_x)$ used in diagonal mode, 
the mapping $x\mapsto (y^-_x,y^+_x)$ can be guessed and checked 
using the crossing sequences.
Similarly, the words $w_1,v,w_3$ can be guessed just before
entering the active block, and can be checked along the process.
As concerns the words $w_0,w_4$, these can be guessed and checked 
in a way similar to the words that we used in diagonal mode.
More precisely, for each position $x$ of the input, the
transducer stores the following additional information:
\begin{enumerate}
  \item the outputs produced by the factors of $\rho$ that are 
        intercepted by $[0,x]$ and that occur after the beginning
        $\ell_j$ of a block, where $\ell_j=(x_j,y_j)$ and $x_j\ge x$;
  \item the outputs produced by the factors of $\rho$ that are 
        intercepted by $[x,\omega]$ and that occur before the ending
        $\ell_{j+1}$ of a block, where $\ell_{j+1}=(x_{j+1},y_{j+1})$ 
        and $x_{j+1}\le x$.
\end{enumerate}
Thanks to the properties of blocks, the above words have length 
at most $\bound$ and can be maintained while processing the input
and the crossing sequences.
Finally, we observe that the words, together with the information 
given by the lowest and highest levels $y^-_x,y^+_x$, for both $x=x_i$ and
$x=x_{i+1}$, are sufficient for determining the content of $w_0$ and $w_4$.

\smallskip
The above constructions give a one-way transducer $\cT'$ of size 
triple exponential in $\cT$. 
\end{proof}

\begin{reptheorem}{thm:sweeping}
A functional two-way transducer $\cT$ is sweeping definable iff
it is $k$-pass sweeping definable, for $k=2\hmax\cdot (2^{3\emax}+1)$.
\end{reptheorem}

\begin{proof}
Suppose that $\cT$ is not $k$-pass sweeping definable for $k=2\hmax\cdot (2^{3\emax}+1)$.
We aim at proving that $\cT$ is not $m$-pass sweeping definable for all $m>0$.
By Theorem \ref{thm:k-pass-sweeping}, we know that there exist
a successful run $\rho$ and a $k$-inversion $\bar\cI = (\cI_0,\dots,\cI_{k-1})$ 
of it, with $\cI_i=(L_i,C_i,L'_i,C'_i)$, that is not safe. 
We consider the locations of $\rho$ that are visited between the beginning 
of an inversion $\cI_i$ and the ending of the next co-inversion $\cI_{i+1}$.
Formally, for all even indices $i=0,2,\dots,k-1$, we let
\[
  K_i ~=~ \big[\an{C_i},\an{C'_{i+1}}\big].
\]
We then project each $K_i$ on the $x$-coordinates:
\[
  X_i ~=~ \big\{ x ~:~ \exists \: \ell=(x,y)\in K_i \big\}.
\]
Since $K_i$ is an interval of locations and the transducer $\cT$ can only move
its head between consecutive positions, we know that each $X_i$ is an interval 
of positions. Hereafter, we often use the term ``interval'' to denote a set
of the form $X_i$, for some even index $i\in\{0,2,\dots,\k-1\}$.

Below we prove that there is a large enough set of pairwise non-overlapping intervals:

\begin{claim}
There is a set $\cX=\{X_i\}_{i\in I}$ of cardinality $n=2^{3\emax}+1$ 
such that $X \cap X' = \emptyset$ for all $X\neq X'\in \cX$.
\end{claim}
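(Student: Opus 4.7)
The plan is to combine an elementary disjointness property of the location intervals $K_i$ with a classical coloring argument for interval graphs. Since $k = 2\hmax \cdot (2^{3\emax}+1) = 2\hmax \cdot n$, the even indices $i\in\{0,2,\ldots,k-2\}$ yield exactly $k/2 = \hmax\cdot n$ intervals $X_i$, from which I need to extract $n$ pairwise disjoint ones.

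First, I would verify that the intervals $K_i$ for even $i$ are pairwise disjoint along the run ordering $\leqtime$. By Definition~\ref{def:k-inversion} we have $\an{C'_j}\leqtime\an{C_{j+1}}$ for every $j<k-1$; applying this with $j=i+1$ gives $\an{C'_{i+1}}\leqtime\an{C_{i+2}}$, so the interval $K_i = [\an{C_i},\an{C'_{i+1}}]$ ends in the run no later than $K_{i+2}$ begins.

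Next, I would bound the \emph{depth} of the family $\{X_i\}_{i\text{ even}}$, namely the maximum number of these intervals that can share a common input position. Fix a position $x$. If $x\in X_i$, then some location of the form $(x,y)$ lies in $K_i$. Since the $K_i$'s are pairwise disjoint subsets of locations along $\rho$, two distinct indices $i\neq j$ with $x\in X_i\cap X_j$ force two distinct locations at position $x$ in $\rho$. As $\rho$ is normalized, the crossing sequence at $x$ has length at most $\hmax$, so at most $\hmax$ intervals $X_i$ can contain $x$.

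Finally, I would invoke the classical fact that the chromatic number of an interval graph equals its clique number, which for intervals on the line coincides with the maximum depth. Hence the $\hmax\cdot n$ intervals $\{X_i\}_{i\text{ even}}$ admit a partition into at most $\hmax$ color classes, each consisting of pairwise disjoint intervals; by pigeonhole, at least one class contains $n = 2^{3\emax}+1$ intervals, and these form the desired family $\cX$. The argument is essentially bookkeeping, and the only delicate point is to see that the somewhat mysterious constant $k = 2\hmax\cdot(2^{3\emax}+1)$ in Theorem~\ref{thm:sweeping} is tuned precisely to make this pigeonhole step yield enough disjoint intervals for the Ramsey-type argument that the authors will apply afterwards.
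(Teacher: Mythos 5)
Your proof is correct in outline and reaches the same bound, but via a genuinely different route. The paper extracts the disjoint family greedily: it orders the intervals lexicographically by $(\max(X),\min(X))$, repeatedly picks the lexicographic minimum $X$, discards the (at most $\hmax$) intervals covering $\max(X)$, and observes that everything else is strictly to the right of $X$ --- an explicit sweep that manufactures both the disjoint set $\cX_j$ and the "still available" set $\cX'_j$. You instead bound the \emph{depth} of the whole family by $\hmax$ and then appeal to the interval-graph fact that chromatic number equals clique number equals depth, followed by pigeonhole over the $\hmax$ color classes. Both arguments hinge on exactly the same combinatorial bound (at most $\hmax$ intervals cover a given input position, via the disjointness of the $K_i$ along $\leqtime$ and the length bound on normalized crossing sequences); the paper's version is self-contained and constructive, while yours is shorter and more conceptual at the cost of invoking the perfection of interval graphs as a black box. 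One small imprecision, which you share with the paper's own sketch: Definition~\ref{def:k-inversion} only gives $\an{C'_{i+1}}\leqtime\an{C_{i+2}}$, so the $K_i$ may share a single endpoint location rather than being literally disjoint, and you flatly assert "pairwise disjoint" immediately after deriving only "ends no later than begins." This shifts the depth bound from $\hmax$ to $\hmax+1$ in the degenerate case; it would be cleaner either to argue the anchors are distinct, or to account for the extra $+1$ in the constant $k$. Since the paper has the same slack, this does not distinguish your approach from theirs.
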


\begin{claimproof}
In this proof, we consider an ordering on the intervals $X_i$ 
different from the one induced by the indices $i$. This is given
by the lexicographic order on the endpoints, where the dominant 
element is the rightmost endpoint, namely,
we let $X_i < X_j$ if either $\max(X_i) < \max(X_j)$, or 
$\max(X_i)=\max(X_j)$ and $\min(X_i)<\min(X_j)$.

We construct the set $\cX$ inductively, by following the lexicographic ordering. 
Formally, for all $j=0,\dots,n$, we construct:
\begin{itemize}
  \item a set $\cX_j$ of size $j$ such that 
        $X \cap X' = \emptyset$ for all $X\neq X' \in \cX_j$
  \item a set $\cX'_j$ of size at least $\hmax\cdot(2^{3\emax}+1 - j)$
        such that, for all $X\in\cX_j$ and all $X'\in\cX'_j$,
        $\max(X) < \min(X')$ 
        (namely, all intervals of $\cX'_j$ are strictly to the right of 
         the intervals of $\cX_j$).
\end{itemize}
The base case $j=0$ of the induction is easy: we let $\cX_0=\emptyset$
and $\cX'_0$ be the set of all intervals. It only suffices to observe 
that $\cX'_0$ has cardinality $\frac{k}{2}=\hmax\cdot(2^{3\emax}+1)$.

For the inductive step, suppose that $j<n=2^{3\emax}+1$ and that we 
constructed $\cX_j$ and $\cX'_j$ satisfying the inductive hypothesis. 
We let $X$ be the least element in $\cX'_j$
according to the lexicographic order 
(note that $\cX'_j\neq\emptyset$ since $j<n$). 
Accordingly, we define $\cX_{j+1} = \cX_j \cup \{X\}$ and
$\cX'_{j+1}$ as the subset of $\cX'_j$ that contains the intervals 
strictly to the right of $X$.
It remains to verify that $\cX'_{j+1}$ has cardinality at least 
$\hmax\cdot\big(2^{3\emax}+1 - (j+1)\big)$. 
For this we recall that the run $\rho$ is normalized. This implies that 
there are at most $\hmax$ intervals in $\cX'_j$ that cover the position 
$x=\max(X)$.
All other intervals of $\cX'_j$ are necessarily to the right of $X$:
indeed, because $X$ is minimal in the lexicographic ordering,
we know that every interval of $\cX'_j$ 
has the right endpoint 
to the right of $x$, and as they do not cover the position $x$, their left endpoint too.
This shows that there are at most $\hmax$ intervals in
$\cX'_j \:\setminus\: \cX'_{j+1}$, so 
$|\cX'_{j+1}| \ge \hmax\cdot\big(2^{3\emax}+1 - (j+1)\big)$. 
\end{claimproof}

Turning back to the proof of the theorem, we consider the left endpoints
of the intervals in $\cX$, say
\[
  \olft X ~=~ \{ \min(X) ~:~ X\in\cX \}.
\]
Since $|\olft X| > 2^{3\emax}$, we can use Theorem \ref{th:simon} 
to derive the existence of three distinct positions $x < x' <x'' \in \olft X$ 
such that $[x,x']$ and $[x',x'']$ are consecutive idempotent loops of $\rho$ with 
the same effect (see also the proof of Theorem~\ref{thm:simon2} for a similar claim).
We let $L=[x,x'']$ be the union of those two loops, and we consider the intermediate
position $x'$. We recall that $x'$ is the left endpoint of an interval of $\cX$,
which we denote by $X_i$ for simplicity. We also recall that $X_i$ is the set of
positions visited by a factor of the run $\rho$ that goes from the first anchor
$\an{C_i}$ of the inversion $\cI_i=(L_i,C_i,L'_i,C'_i)$ to the second anchor
$\an{C'_{i+1}}$ of the co-inversion $\cI_{i+1}=(L_{i+1},C_{i+1},L'_{i+1},C'_{i+1})$.

We claim that the inversion $\cI_i$ and the co-inversion $\cI_{i+1}$ 
occur in the same factor intercepted by $L$.
Indeed, the factor $\rho[\an{C_i},\an{C'_{i+1}}]$ visits only positions 
inside the interval $X_i$. Moreover, the endpoints of $X_i$ are strictly 
between the endpoints of $L$, namely,
\[
  \min(L) ~=~ x ~<~ x' ~=~ \min(X_i) ~\le~ \max(X_i) ~<~ x'' ~=~ \max(L).
\]
This shows that the inversion $\cI_i=(L_i,C_i,L'_i,C'_i)$
and the co-inversion $\cI_{i+1}=(L_{i+1},C_{i+1},L'_{i+1},C'_{i+1})$ 
occur in the same factor intercepted by $L$, which we denote by $\alpha$.

Now, we can easily introduce new copies of the factor $\alpha$, and hence new
copies of the (co)-invesions $\cI_i$ and $\cI_{i+1}$, by pumping the idempotent 
loop $L$. Formally, for all $m>0$, we denote by 
$\cI_i^{(1)},\dots,\cI_i^{(m)}$ (resp.~$\cI_{i+1}^{(1)},\dots,\cI_{i+1}^{(m)}$)
the $m$ copies of the inversion $\cI_i$ (resp.~the $m$ copies of the co-inversion 
$\cI_{i+1}$) that appear in the pumped run $\pump_L^m(\rho)$.
For the sake of simplicity, we assume that those copies are listed
according to their order of occurrence in the pumped run, namely,
\[ 
  \cI_i^{(1)} ~\lesstime~ \cI_{i+1}^{(1)} ~\lesstime~
  \cI_i^{(2)} ~\lesstime~ \cI_{i+1}^{(2)} ~\lesstime~
  \dots ~\lesstime~
  \cI_i^{(m)} ~\lesstime~ \cI_{i+1}^{(m)}
\]
(the order $\lesstime$ is extended from locations to (co-)inversions in the natural way).

Towards a conclusion, we observe that 
$\big(\cI_i^{(1)},\cI_{i+1}^{(1)},\dots,\cI_i^{(m)},\cI_{i+1}^{(m)}\big)$ 
is a $2m$-inversion of the successful run $\pump_L^m(\rho)$ of $\cT$. 
Moreover, this $2m$-inversion is not safe, since it consists of (co-)inversions
that do not generate periodic outputs --- more formally, the period of the 
word $\out{\tr{C_i}} \: \out{\rho[\an{C_i},\an{C'_i}]} \: \out{\tr{C'_i}}$
(resp.~$\out{\tr{C_{i+1}}} \: \out{\rho[\an{C_{i+1}},\an{C'_{i+1}}]} \: \out{\tr{C'_{i+1}}}$)
is larger than $\bound$ or does not divide $|\out{\tr{C_i}}|$ and $|\out{\tr{C'_i}}|$
(resp.~$|\out{\tr{C_{i+1}}}|$ and $|\out{\tr{C'_{i+1}}}|$). 
By Theorem \ref{thm:k-pass-sweeping}, this proves that $\cT$ is not
$m$-pass sweeping definable.
Finally, since the above holds for all $m>0$, we conclude that $\cT$ 
is not sweeping definable.
\end{proof}

\end{document}